\newcommand{\eps}{\epsilon}
\newcommand{\ind}[1]{\mathds{1}\{#1\}} %
\newcommand{\rIdx}[2]{#1^{#2}} %
\newcommand{\tIdx}[2]{#1_{#2}} %
\newcommand{\loo}[2]{#1(-#2)} %
\newcommand{\singleT}{T}
\newcommand{\coupT}{\widehat{T}}
\newcommand{\targetE}{H^*}
\newcommand{\targetP}{p_{\Pi}}
\newcommand{\lotargetP}[1]{p_{\Pi_{-#1}}}
\newcommand{\burnin}{\ell}
\newcommand{\minIter}{m}
\newcommand{\nProc}{J}
\newcommand{\nEst}{V}
\newcommand{\maxK}{\widetilde{K}}
\newcommand{\GibbsCond}[1]{p_{\Pi | \loo{\Pi}{#1}}}
\newcommand{\SpMe}{\text{SplitMerge}}
\newcommand{\coupfn}[6]{\psi^{#1}_{#2}(#3,#4,#5,#6)}
\newcommand{\coupfnNoArg}{\psi}
\newcommand{\coupfnOT}{\coupfnNoArg^{\text{OT}}}
\newcommand{\coupfnIndep}{\coupfnNoArg^{\text{ind}}}
\newcommand{\nugget}{\eta}
\newcommand{\coupfnOTwN}{\coupfnNoArg^{\text{OT}}_\nugget}
\newcommand{\coup}{\gamma}
\newcommand{\onestep}[1]{\widetilde{#1}}
\newcommand{\cEst}[1]{H_{\burnin:\minIter}(\rIdx{X}{#1},\rIdx{Y}{#1})}
\newcommand{\uEst}[1]{U^{#1}}
\newcommand{\defaultNugget}{10^{-5}}
\newcommand{\timeTaken}{\xi}
\newcommand{\allSame}{\bm{1}}
\newcommand{\calP}[1]{\mathcal{P}_{#1}} %
\newcommand{\LCP}{\textrm{LCP}}
\newcommand{\CC}[2]{\textrm{CC}(#1,#2)}
\newcommand{\defined}{:=}
\newcommand{\Gaussian}{\mathcal{N}}
\newcommand{\given}{\,|\,}
\newcommand{\pr}{\mathbb{P}}
\newcommand{\E}{\mathbb{E}}
\newcommand{\distiid}{\overset{\textrm{\tiny\textrm{i.i.d.}}}{\sim}}
\newcommand{\distind}{\overset{\textrm{\tiny\textrm{indep}}}{\sim}}
\newcommand{\distNorm}[2]{\mathcal{N}(#1,#2)}
\newcommand{\distDP}[2]{\mathrm{DP}(#1,#2)}
\newcommand{\TV}[2]{\|#1-#2\|_{\mathrm{TV}}}
\newcommand{\HamDist}{\text{d}}
\newcommand{\data}{W}
\newcommand{\GibbsTime}{\beta}
\DeclareMathOperator*{\argmin}{arg\,min}
\newcommand{\BlackBox}{\rule{1.5ex}{1.5ex}}  %
\newenvironment{proof}{\par\noindent{\bf Proof\ }}{\hfill\BlackBox\\[2mm]}
\newtheorem{example}{Example} 
\newtheorem{theorem}{Theorem}
\newtheorem{lemma}{Lemma} 
\newtheorem{proposition}{Proposition} 
\newtheorem{corollary}{Corollary}
\newtheorem{definition}{Definition}
\def\@opargbegintheorem#1#2#3{\trivlist
	\item[]{\bfseries #1\ #2\ (#3)} \itshape}
\begin{document}
	
\twocolumn[

\aistatstitle{Many Processors, Little Time: MCMC for Partitions via Optimal Transport Couplings}

\aistatsauthor{ {Tin D.} Nguyen \And {Brian L.} Trippe \And Tamara Broderick}

\aistatsaddress{MIT LIDS} ]

\begin{abstract}
	Markov chain Monte Carlo (MCMC) methods are often used in clustering since they guarantee asymptotically exact expectations in the infinite-time limit. In finite time, though, slow mixing often leads to poor performance. Modern computing environments offer massive parallelism, but naive implementations of parallel MCMC can exhibit substantial bias. In MCMC samplers of continuous random variables,  Markov chain couplings can overcome bias. But these approaches depend crucially on paired chains meetings after a small number of transitions. We show that straightforward applications of existing coupling ideas to discrete clustering variables fail to meet quickly. This failure arises from the ``label-switching problem'': semantically equivalent cluster relabelings impede fast meeting of coupled chains. We instead consider chains as exploring the space of partitions rather than partitions' (arbitrary) labelings. Using a metric on the partition space, we formulate a practical algorithm using optimal transport couplings. Our theory confirms our method is accurate and efficient. In experiments ranging from clustering of genes or seeds to graph colorings, we show the benefits of our coupling in the highly parallel, time-limited regime.

\end{abstract}

\section{INTRODUCTION} \label{sec:intro}
Markov chain Monte Carlo (MCMC) is widely used in applications for exploring distributions over clusterings, or partitions, of data. For instance, \citet{prabhakaran2016dirichlet} use MCMC to approximate a Bayesian posterior over clusters of gene expression data for ``discovery and characterization of cell types'';
\citet{chen2019improved} use MCMC to approximate the number of $k$-colorings of a graph; and \citet{DeFord2021Recombination} use MCMC to identify partisan gerrymandering via partitioning of geographical units into districts.
An appealing feature of MCMC for many applications is that it yields asymptotically exact expectations in the infinite-time limit. 
However, real-life samplers must always be run in finite time, and MCMC mixing is often prohibitively slow in practice. While this slow mixing has led some practitioners to turn to other approximations such as variational Bayes \citep{blei2006variational}, these alternative methods can yield arbitrarily poor approximations of the expectation of interest \citep{huggins2020validated}.

A different approach is to speed up MCMC, e.g.\ by taking advantage of recent computational advantages.
While wall-clock time is often at a premium, modern computing environments increasingly offer massive parallel processing. 
For example, institute-level compute clusters commonly make hundreds of processors available to their users simultaneously \citep{reuther2018interactive}.
Recent efforts to enable parallel MCMC on graphics processing units \citep{lao2020tfp} offer to expand parallelism further, with modern commodity GPUs providing over ten thousand cores.
A naive approach to exploiting parallelism is to run MCMC separately on each processor;
we illustrate this approach on a genetics dataset (\textsc{gene}) in \Cref{fig:key} with full experimental details in \Cref{sec:applications}.
One might either directly average the resulting estimates across processors (red solid line in \Cref{fig:key}) or use a robust averaging procedure (red dashed line in \Cref{fig:key}).
Massive parallelism can be used to reduce variance of the final estimate but does not mitigate the problem of bias, so the final estimate does not improve substantially as the number of processes increases.

\begin{figure}[t]
	\centering
	\includegraphics[width=0.8\linewidth]{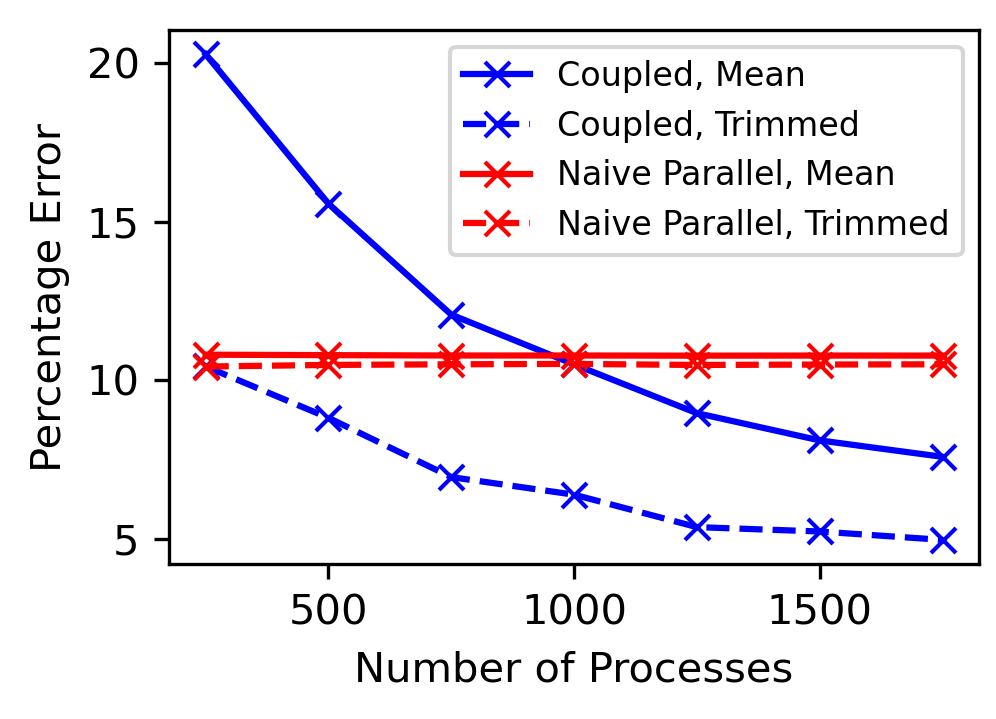}
	\caption{Lower error at high process count using our estimator (blue) versus using naive parallelism (red). %
	For details, see \Cref{s-sec:accurate-estimation}.}
	\label{fig:key}
\end{figure}

Recently, \citet{jacob2020unbiased} built on the work of \citet{glynn2014exact} to eliminate bias in MCMC with a \emph{coupling}. The basic idea is to cleverly set up dependence between two MCMC chains so that they are still practical to run and also meet exactly at a random but finite time. After meeting, these coupled chains can be used to compute an unbiased estimate of the expectation of interest. So arbitrarily large reductions in the estimate's variance due to massive parallelism translate directly into arbitrarily large reductions in total error. Since a processor's computation concludes after the chains meet, a useful coupling relies heavily on setting up coupled chains that meet quickly.

\citet{jacob2020unbiased} did not consider MCMC over partitions in particular and \citet{glynn2014exact} did not work on MCMC.
But there is existing work on couplings applied to partitions in other contexts that can be adapted into the \citet{jacob2020unbiased} framework. 
For instance, \cite{jerrum1998mathematical} uses maximal couplings on partition labelings to prove convergence rates for graph coloring, and \citet{gibbs2004convergence} uses a common random number coupling for two-state Ising models.
Though \cite{jerrum1998mathematical} was theoretical rather than practical and \citet{gibbs2004convergence} did not apply to general partition models, we can adapt the \citet{jacob2020unbiased} setup in a straightforward manner to use either coupling scheme. While this adaptation ensures asymptotically-unbiased MCMC samples, we will see (\Cref{s-sec:meeting-times}) that both schemes exhibit slow meeting times in practice. We attribute this issue to the \emph{label-switching problem}, which is well-known for plaguing MCMC over partitions \citep{jasra2005markov}. In particular, many different labelings correspond to the same partition. In the case of couplings, two chains may nearly agree on the partition but require many iterations to change label assignments, so the coupling is unnecessarily slow to meet.

Our main contribution, then, is to propose and analyze a practical coupling that uses the unbiasedness of the \citep{jacob2020unbiased} framework but operates directly in the true space of interest -- i.e., the space of partitions -- to thereby exhibit fast meeting times. In particular, we define an optimal transport (OT) coupling in the partition space (\Cref{sec:method}). For clustering models, we prove that our coupling produces unbiased estimates (\Cref{s-sec:unbiased}). We provide a big-O analysis to support the fast meeting times of our coupling (\Cref{s-sec:runtime}).We empirically demonstrate the benefits of our
coupling on a simulated analysis; on Dirichlet process mixture models applied to real genetic, agricultural, and marine life data; and on a graph coloring problem. We show that, for a fixed wall time, our coupling provides much more accurate estimates and confidence intervals than naive parallelism (\Cref{s-sec:accurate-estimation}). And we show that our coupling meets much more quickly than standard label-based couplings for partitions (\Cref{s-sec:meeting-times}). Our code is available at \url{https://github.com/tinnguyen96/partition-coupling}.

\textbf{Related work.}
Couplings of Markov chains have a long history in MCMC. 
But they either have primarily been a theoretical tool,
do not provide guarantees of consistency in the limit of many processes, or are not generally applicable to Markov chains over partitions (\Cref{apd:related_work}).
Likewise, much previous work has sought to utilize parallelism in MCMC. But this work has focused on splitting large datasets into small subsets and running MCMC separately on each subset. But here our distribution of interest is over partitions of the data; combining partitions learned separately on multiple processors seems to face much the same difficulties as the original problem (\Cref{apd:related_work}).
\citet{xu2021couplings} have also used OT techniques within the \citet{jacob2020unbiased} framework, but their focus was continuous-valued random variables. For partitions, OT techniques might most straightforwardly be applied to the label space -- and we expect would fare poorly, like the other label-space couplings in \Cref{s-sec:meeting-times}. Our key insight is to work directly in the space of partitions.

\section{SETUP} \label{sec:bck}
Before describing our method, we first review random partitions, set up Markov chain Monte Carlo for partitions -- with an emphasis on Gibbs sampling, and review the 
\citet{jacob2020unbiased} coupling framework.

\subsection{Random Partitions} \label{s-sec:random-cluster}
For a natural number $N$, a \emph{partition} of $[N]  \defined \{1,2,\ldots,N\}$ is a collection of $K \le N$ non-empty disjoint sets $\{\rIdx{A}{1},\rIdx{A}{2},\ldots,\rIdx{A}{K}\}$, whose union is $[N].$ In a clustering problem, we can think of $\rIdx{A}{k}$ as containing the data indices in a particular cluster. Let $\calP{N}$ denote the set of all partitions of $[N]$. Let $\pi$ denote an element of $\calP{N}$, and let $\Pi$ be a random partition (i.e.\ a $\calP{N}$-valued random variable) with probability mass function (p.m.f.)\ $\targetP$.
We report a summary that takes the form of an expectation: $\targetE \defined \int h(\Pi) \targetP(\Pi) d\Pi$.

As an example, consider a Bayesian cluster analysis for $N$ data points $\{\data_{n}\}_{n=1}^{N}$, with $\data_{n} \in \mathbb{R}^{D}$. A common generative procedure uses a Dirichlet process mixture model (DPMM) and conjugate Gaussian cluster likelihoods -- with hyperparameters $\alpha > 0$, $\mu_0 \in \mathbb{R}^{D}$, and $\Sigma_0,\Sigma_1$ positive definite $D\times D$ matrices.
First draw $\Pi = \pi$ with probability $\alpha^{|\pi |} \prod_{A \in \pi} (|A| -1)!  / \left[ \alpha (\alpha + 1) \cdots (\alpha + N-1) \right]$.
Then draw cluster centers $\mu_A \distiid \distNorm{\mu_0}{\Sigma_0}$ for $A \in \Pi$
and observed data $\data_{j} \given  \mu_A \distiid \distNorm{\mu_A}{\Sigma_1}$ for $j \in A$.
The distribution of interest is the Bayesian posterior over $\Pi$:
$p_\Pi(\pi) := \Pr(\Pi=\pi\given \data)$.
A summary $\targetE$ of interest might be the posterior mean of the number of clusters for $N$ data points or
of the proportion of data in the largest cluster; see \Cref{apd:functions} for more discussion.

An assignment of data points to partitions is often encoded in a vector of labels.
E.g., one might represent $\pi = \{\{1,2\}, \{3\}\}$ with the vector $z = [1, 1, 2]$; $z$ indicates that data points $1$ and $2$ are in the same cluster (arbitrarily labeled $1$ here) while point $3$ 
is in a different cluster (arbitrarily labeled $2$).
The partition can be recovered from the labeling, but the labels themselves are ancillary to the partition and, as we will see, can introduce unnecessary hurdles for fast MCMC mixing.

\subsection{Markov Chain Monte Carlo}
In the DPMM example and many others, the exact computation of the summary $\targetE$ is intractable,
so Markov chain Monte Carlo provides an approximation.
In particular, let $X_t$ (for any $t$) denote a random partition; suppose we have access to a Markov chain $\{\tIdx{X}{t}\}_{t=0}^\infty$ with starting value $X_0$ drawn according to some initial distribution
and evolving according to a transition kernel $\tIdx{X}{t} \sim T(\tIdx{X}{t-1}, \cdot)$ stationary with respect to $\targetP$.
Then we approximate $\targetE$ with the empirical average of samples: $T^{-1} \sum_{t=1}^{T} h(X_t)$.

We focus on Gibbs samplers in what follows -- since they are a convenient and popular choice for partitions \citep{maceachern1994estimating,neal2000markov,valpine2017programming}.
We also extend our methods to more sophisticated samplers, such as split-merge samplers \citep{jain2004split}, that use Gibbs samplers as a sub-routine; see \Cref{s-sec:split-merge}.
To form a Gibbs sampler on the partition itself rather than the labeling, we first introduce some notation.
Namely, let $\loo{\pi}{n}$ and $\loo{\Pi}{n}$ denote $\pi$ and $\Pi$, respectively, with data point $n$ removed.
For example, if $\pi = \left\{\{1, 3\}, \{ 2\}\right\}$, then $\loo{\pi}{1} = \left\{ \{3 \},\{ 2\} \right\}$.

With this notation, we can write the \emph{leave-out} conditional distributions of the Gibbs sampler
as $\GibbsCond{n}$. In particular, take a random partition $X$. Suppose $\loo{X}{n}$ has $K -1$ elements. Then 
the $n$th data point can either be added to an existing element or form a new element in the partition. Each of these $K$ options forms a new partition; call the new partitions $\{\rIdx{\pi}{k}\}_{k=1}^{K}$. It follows that there exist $a_k \ge 0$ such that
\begin{equation} \label{eq:GibbsCond-as-Dirac-sum}
	\GibbsCond{n}(\cdot \given \loo{X}{n}) = \sum_{k=1}^{K}  \tIdx{a}{k} \delta_{\rIdx{\pi}{k}}(\cdot),
\end{equation}
where $\delta_{\rIdx{\pi}{k}}$ denotes a Dirac measure on $\rIdx{\pi}{k}.$
When $\targetP$ is available up to a proportionality constant, it is tractable to compute or sample from $\GibbsCond{n}$.

\Cref{alg:single-sweep} shows one sweep of the resulting Gibbs sampler. 
For any $X$, the transition kernel $T(X, \cdot)$ for this sampler's Markov chain is the distribution of the output, $\onestep{X}$, of \Cref{alg:single-sweep}.
\begin{algorithm}[h!]
	\caption{Single Gibbs Sweep}
	\label{alg:single-sweep}
	\LinesNumbered
	\KwIn{Target $\targetP$. Current partition $X.$}
	$\onestep{X} \leftarrow X$  \\
	\For{$n\leftarrow 1$ \KwTo $N$}{
		$\onestep{X} \sim \GibbsCond{n}( \cdot \given \loo{\onestep{X}}{n} )$ \\
	}
	\text{Return} $\onestep{X}$
\end{algorithm}

\subsection{An Unbiased Estimator} \label{s-sec:unbiased-review}
\citep{jacob2020unbiased} show how to construct an unbiased estimator of $\targetE$ for some Markov chain $\{X_t\}$ when an additional Markov chain $\{Y_t\}$ with two properties is available. First, $Y_t \given Y_{t-1}$ must also evolve using the same transition $T(\cdot, \cdot)$ as $\{X_t\}$, so that $\{Y_t\}$ is equal in distribution to $\{\tIdx{X}{t}\}$. Second, there must exist a random \emph{meeting time} $\tau < \infty$ with sub-geometric tails such that the two chains meet exactly at time $\tau$ ($\tIdx{X}{\tau} = \tIdx{Y}{\tau-1}$) and remain faithful afterwards (for all $t\ge\tau$, $\tIdx{X}{t} = \tIdx{Y}{t-1}$). When these properties hold, the following provides an unbiased estimate of $\targetE$:
\begin{equation} \label{eqn:unbiased_estimate}
	\begin{aligned}
	& H_{\burnin:\minIter}(X,Y) \coloneqq
	\underbrace{\frac{1}{\minIter-\burnin+1} \sum_{t=\burnin}^\minIter h(\tIdx{X}{t})}_{\text{Usual MCMC average}} + {} \\
	& \quad \underbrace{\sum_{t=\burnin+1}^{\tau - 1} \text{min}\left(1, \frac{t-\burnin}{\minIter-\burnin+1}\right)\left\{h(\tIdx{X}{t}) - h(\tIdx{Y}{t-1}) \right\},}_{\text{Bias correction}} 
	\end{aligned}
\end{equation}
where $\burnin$ is the burn-in length and $\minIter$ sets a minimum number of iterations \citep[Equation 2]{jacob2020unbiased}.
$\burnin$ and $\minIter$ are hyperparameters that impact the runtime and variance of $H_{\burnin:\minIter}$;
for instance, smaller $\minIter$ is typically associated with smaller runtimes but larger variance.
\citet[Section 6]{jacob2020unbiased} recommend setting $\burnin$ to be a large quantile of the meeting time and $\minIter$ as a multiple of $\burnin$.
We follow these recommendations in our work.

One interpretation of \Cref{eqn:unbiased_estimate} is as the usual MCMC estimate plus a bias correction.
Since $H_{\burnin:\minIter}$ is unbiased, a direct average of many copies of $H_{\burnin:m}$ computed in parallel can be made to have arbitrarily small error (for estimating $\targetE$).
It remains to apply the idea from \Cref{eqn:unbiased_estimate} to partition-valued chains.

\subsection{Couplings}
\label{sec:couplings}
To create two chains of partitions that evolve together, we will need a joint distribution over partitions from both chains that respects the marginals of each chain. To that end, we define a coupling.
\begin{definition} \label{def:coupling}
	A \emph{coupling} $\coup$ of two discrete distributions, $\sum_{k=1}^{K}  \tIdx{a}{k} \delta_{\rIdx{\pi}{k}}(\cdot)$ and $\sum_{k^{\prime}=1}^{K^{\prime}}  \tIdx{b}{k^{\prime}} \delta_{\rIdx{\nu}{k^{\prime}}}(\cdot)$, is a distribution on the product space,
	\begin{equation} \label{eq:coupfn}
		\coup(\cdot) = \sum_{k} \sum_{k^{\prime}} \rIdx{u}{k,k^{\prime}} \delta_{(\rIdx{\pi}{k}, \rIdx{\nu}{k^{\prime}})}(\cdot),
	\end{equation}
	that satisfies the marginal constraints \\
	$
		\sum_{k} \rIdx{u}{k,k^{\prime}} = b_{k^{\prime}}, \hspace{10pt} \sum_{k^{\prime}} \rIdx{u}{k,k^{\prime}}  = a_k, \hspace{10pt} 0 \leq \rIdx{u}{k,k^{\prime}} \leq 1.
	$
\end{definition}

\section{OUR METHOD}\label{sec:method}
We have just described how to achieve unbiased estimates when two chains with a particular relationship are available. It remains to show that we can construct these chains so that they meet quickly in practice. First, we describe a general setup for a coupling of two Gibbs samplers over partitions in \Cref{s-sec:gibbs_partition_coupling}. Our method is a special case where we choose a coupling function that encourages the two chains to meet quickly (\Cref{s-sec:algo}). We extend our coupling to split-merge samplers in \Cref{s-sec:split-merge}. We employ a variance reduction procedure to further improve our estimates (\Cref{s-sec:var-redux}).

\subsection{Coupling For Gibbs On Partitions}\label{s-sec:gibbs_partition_coupling}

Let $X, Y$ be two partitions of $[N].$ By \Cref{eq:GibbsCond-as-Dirac-sum}, we can write $\GibbsCond{n}(\cdot \given \loo{X}{n}) = \sum_{k=1}^{K}  \tIdx{a}{k} \delta_{\rIdx{\pi}{k}}(\cdot)$ for some $K$ and tuples $(\tIdx{a}{k}, \rIdx{\pi}{k})$. And we can write $\GibbsCond{n}(\cdot \given \loo{Y}{n}) = \sum_{k^{\prime}=1}^{K^{\prime}}  \tIdx{b}{k^{\prime}} \delta_{\rIdx{\nu}{k^{\prime}}}(\cdot)$ for some $K^{\prime}$ and tuples $(\tIdx{b}{k^{\prime}}, \rIdx{\nu}{k^{\prime}})$. We say that a coupling function is any function that returns a coupling for these distributions.
\begin{definition} \label{def:coupfn}
	A \emph{coupling function} $\coupfnNoArg$ takes as input a target $\targetP$, a leave-out index $n$, and partitions $X,Y.$ It returns a coupling $\coup=\coupfn{}{}{\targetP}{n}{X}{Y}$ of $\GibbsCond{n}(\cdot \given \loo{X}{n})$ and $\GibbsCond{n}(\cdot \given \loo{Y}{n})$.
\end{definition}

Given a coupling function $\coupfnNoArg$, \Cref{alg:coupled-sweep} gives the coupled transition from the current pair of partitions $(X,Y)$ to another pair $(\onestep{X}, \onestep{Y}).$ Repeating this algorithm guarantees the first required property from the \citet{jacob2020unbiased} construction in \Cref{s-sec:unbiased-review}: co-evolution of the two chains with correct marginal distributions. It remains to show that we can construct an appropriate coupling function and that the chains meet (quickly).
\begin{algorithm}[h]
	\caption{Coupled Gibbs Sweep}
	\label{alg:coupled-sweep}
	\LinesNumbered
	\KwIn{Target $\targetP$. Coupling function $\coupfnNoArg.$ Current partitions $X$ and $Y$.}
	$\onestep{X} \leftarrow X, \onestep{Y} \leftarrow Y$  \\
	\For{$n\leftarrow 1$ \KwTo $N$}{
		 $\coup \leftarrow \coupfn{}{}{\targetP}{n}{\onestep{X}}{\onestep{Y}}$ \\
		$(\onestep{X}, \onestep{Y})  \sim \coup$\\
	}
	\text{Return} $\onestep{X}, \onestep{Y}$
\end{algorithm}

\subsection{An Optimal Transport Coupling} \label{s-sec:algo}

We next detail our choice of coupling function; namely, we start from an optimal transport (OT) coupling and add a nugget term for regularity. For a distance $\HamDist$ between partitions, the OT coupling function $\coupfnNoArg^{\text{OT}} = \coupfn{\text{OT}}{}{\targetP}{n}{X}{Y}$ minimizes the expected distance between partitions 
after one coupled Gibbs step given partitions $X, Y$ and leave-out index $n$.
Using the notation of \Cref{sec:couplings,s-sec:gibbs_partition_coupling}, 
we define
\begin{equation} \label{eq:OT-coupling}
	\coupfnNoArg^{\text{OT}} \coloneqq \argmin_{\text{couplings } \coup} \sum_{k=1}^{K} \sum_{k^\prime=1}^{K^\prime} \rIdx{u}{k,k^\prime} \HamDist(\rIdx{\pi}{k}, \rIdx{\nu}{k^\prime}).
\end{equation}
To complete the specification of $\coupfnNoArg^{\text{OT}}$, we choose a metric $\HamDist$ on partitions that was introduced by \citet{mirkin1970measurement} and \citet{rand1971objective}:
\begin{equation} \label{eqn:distance_between_partitions}
\HamDist(\pi, \nu) 
= \sum_{A\in \pi}|A|^2 
+  \sum_{B \in \nu} |B|^2
-2\sum_{A\in \pi, B\in\nu} |A\cap B|^2.
\end{equation}
Observe that $\HamDist(\pi, \nu)$ is zero when $\pi = \nu$. More generally, we can construct a graph from a partition by treating the indices in $[N]$ as vertex labels and assigning any two indices in the same partition element to share an edge; then $\HamDist/2$ is equal to the Hamming distance between the adjacency matrices implied by $\pi$ and $\nu$ \citep[Theorems 2--3]{mirkin1970measurement}. The principal trait of $\HamDist$ for our purposes is that $\HamDist$ steadily increases as $\pi$ and $\nu$ become more dissimilar. In \Cref{apd:metric}, we discuss other potential metrics and show that an alternative with similar qualitative behavior yields essentially equivalent empirical results.

In practice, any standard optimal transport\footnote{We note that the optimization problem defining \Cref{eq:OT-coupling} is an \emph{exact} transport problem, not an entropically-regularized transport problem \citep{cuturi2013sinkhorn}. Hence the marginal distributions defined by $\coupfnOT$ automatically match the inputs $\GibbsCond{n}(\cdot \given \loo{X}{n})$ and $\GibbsCond{n}(\cdot \given \loo{Y}{n})$, without need of post-processing.} solver can be used in $\coupfnNoArg^{\text{OT}}$, and we discuss our particular choice in more detail in \Cref{s-sec:runtime}.
To prove unbiasedness of a coupling (\Cref{thm:verification}), it is convenient to ensure that every joint setting of $(X,Y)$ is reachable from every other joint setting in the sampler. As we discuss after \Cref{thm:verification} and in \Cref{apd:verify-proof}, adding a small nugget term to the coupling function accomplishes this goal.
To that end, define the independent coupling $\coupfnIndep$ to have atom size $\rIdx{u}{k,k^\prime} = a_k b_{k^\prime}$ at $(\pi^k, \nu^{k^{\prime}})$. Let $\nugget \in (0,1)$.
Then our final coupling function $\coupfnOTwN = \coupfnOTwN(\targetP,n,X,Y)$ equals
\begin{equation}\label{eq:OT-w-nugget}
	\begin{cases} \coupfnOT(X,X) &\text{ if } X = Y \\
		(1-\nugget)\coupfnOT(X,Y) + \nugget \coupfnIndep(X,Y) &\text{ else, }
	\end{cases} 
\end{equation}
where we elide the dependence on $\targetP,n$ for readability.
In practice, we set $\nugget$ to $\defaultNugget$, so the behavior of $\coupfnOTwN$ is dominated by $\coupfnOT$.

As a check, notice that when two chains first meet, the behavior of $\coupfnOTwN$
reverts to that of $\coupfnOT$. Since there is a coupling with 
expected distance zero, that coupling is chosen as the minimizer in $\coupfnOT$. Therefore,
the two chains remain faithful going forward.

\subsection{Extension To Other Samplers} \label{s-sec:split-merge}

With $\coupfnOTwN$, we can also couple samplers that use Gibbs sampling as a sub-routine; to illustrate, we next describe a coupling for a split-merge sampler \citep{jain2004split}. 
Split-merge samplers pair a basic Gibbs sweep with a Metropolis-Hastings (MH) move designed to facilitate larger-scale changes across the clustering. In particular, the MH move starts from partition $X$ by selecting a pair of distinct data indices $(i,j)$ uniformly at random. If $i$ and $j$ belong to the same cluster, the sampler proposes to split this cluster. Otherwise, the sampler proposes to merge together the two clusters containing $i$ and $j$. The proposal is accepted or rejected in the MH move.
For our purposes, we summarize the full move, including proposal and acceptance but conditional on the choice of $i$ and $j$, as $\onestep{X} \sim \SpMe(i,j,X)$.
One iteration of the split-merge sampler is identical to \Cref{alg:single-sweep}, except that between lines 1 and 2 of \Cref{alg:single-sweep}, we sample $(i,j)$ and perform $\SpMe(i,j,\onestep{X})$. 

\Cref{alg:coupled-splitMerge_sweep} shows our coupling of a split-merge sampler.
We use the same pair of indices $(i,j)$ in the split-merge moves across both the $X$ and $Y$ chains.
We use $\coupfnOTwN$ to couple at the level of the Gibbs sweeps.
\begin{algorithm}[h]
	\caption{Coupled Gibbs Sweep with Split--Merge Move}
	\label{alg:coupled-splitMerge_sweep}
	\LinesNumbered
	\KwIn{Target probability mass function (p.m.f.) $\targetP$. Current partitions $X$ and $Y$.}
	\KwOut{$\onestep{X}, \onestep{Y}$}
	$\onestep{X} \leftarrow X, \onestep{Y} \leftarrow Y$  \\
	$(i,j) \leftarrow \text{Uniformly random pair of data indices}$ \\
	$\onestep{X} \sim \SpMe(i,j,\onestep{X})$ \\
	$\onestep{Y} \sim \SpMe(i,j,\onestep{Y})$ \\
	\For{$n\leftarrow 1$ \KwTo $N$}{
		$\coup \leftarrow \coupfn{\text{OT}}{\nugget}{\targetP}{n}{\onestep{X}}{\onestep{Y}}$ \\
		$(\onestep{X}, \onestep{Y})  \sim \coup$\\
	}
	\text{Return} $\onestep{X}, \onestep{Y}$
\end{algorithm}

Gibbs samplers and split-merge samplers offer differing strengths and weaknesses.
For instance, the MH move may take long to finish; \Cref{alg:single-sweep} might run for more iterations in the same time, potentially producing better estimates sooner.
The MH move is also more complex and thus potentially more prone to errors in implementation.
In what follows, we consider both samplers; we compare our coupling to naive parallelism for Gibbs sampling in \Cref{sec:applications}, and we make the analogous comparison for split-merge samplers in \Cref{apd:extensions}.

\subsection{Variance Reduction Via Trimming} \label{s-sec:var-redux}

We have described how to generate a single estimate of $\targetE$
from \Cref{eqn:unbiased_estimate}; in practice, on the $j$th processor, we run chains $X^j$ and $Y^j$
to compute $\cEst{j}$. It remains to decide how to aggregate the observations $\{\cEst{j}\}_{j=1}^{\nProc}$ across
$\nProc$ processors.

A natural option is to report the sample mean, $\frac{1}{\nProc} \sum_{j=1}^{\nProc} \cEst{j}$.
If each individual estimate is unbiased, the squared error of the sample mean decreases to zero
at rate $1/\nProc$. And standard confidence intervals have asymptotically
correct coverage.

For finite $J$, though, there may be outliers that drive
the sample mean far from $\targetE$. To counteract the effect of outliers and achieve a lower squared error,
we also report a classical robust estimator: the trimmed mean \citep{tukey1963less}.
Recall that for $\alpha \in (0,0.5)$, the \emph{$\alpha$-trimmed mean} is the average of the observations between (inclusive) the $100\alpha$ quantile and the $100(1-\alpha)$ quantile of the observed data.
The trimmed mean is asymptotically normally distributed \citep{bickel1965robust,stigler1973asymptotic}
and provides sub-Gaussian confidence intervals \citep{lugosi2019mean}.
See \Cref{apd:trimming} for more discussion on the trimmed mean.

\section{THEORETICAL RESULTS}\label{sec:theory}
To verify that our coupling is useful, we need to check that it efficiently returns accurate estimates.
We first check that the coupled estimate $H_{\burnin:\minIter}(X,Y)$ at a single processor is unbiased -- so that 
aggregated estimates across processors can exhibit arbitrarily small squared loss.
Second, we check that there is no undue computational cost of coupling relative to a single
chain.

\subsection{Unbiasedness} \label{s-sec:unbiased}
\citet[Assumptions 1--3]{jacob2020unbiased} give sufficient conditions for unbiasedness of \Cref{eqn:unbiased_estimate}. We next use these to establish sufficient conditions that $H_{\burnin:\minIter}(X,Y)$ is unbiased when targeting a DPMM posterior.

\begin{theorem}[Sufficient Conditions for Unbiased Estimation] \label{thm:verification}
	Let $\targetP$ be the DPMM posterior in \Cref{s-sec:random-cluster}.  
	Assume the following two conditions on $\coupfnNoArg$. \\
	${} \quad$ (1) There exists $\eps > 0$ such that for all $n \in [N]$ and for all $X, Y \in  \calP{N}$ such that $X \neq Y$, the output $\coup$ of the coupling function $\coupfnNoArg$ satisfies
		\begin{equation}
			\label{eq:assume-coup-fn}
			\forall k \in [K] \textrm{ and } k^{\prime} \in [K^{\prime}], \quad \rIdx{u}{k,k^\prime} \geq \eps. 
		\end{equation}
	${} \quad$ (2) If $X = Y$, then the output coupling $\coup$ of $\coupfnNoArg$ satisfies $\coup(\onestep{X} = \onestep{Y}) = 1;$ i.e.\ the coupling is faithful. \\
	Then, the estimator in \Cref{eqn:unbiased_estimate} constructed from \Cref{alg:coupled-sweep} is an unbiased estimator for $\targetE$. Furthermore, \Cref{eqn:unbiased_estimate} has a finite variance and a finite expected computing time.
\end{theorem}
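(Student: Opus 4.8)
The plan is to verify the three sufficient conditions of \citet{jacob2020unbiased} (their Assumptions 1--3) for the coupled pair $\{(X_t,Y_t)\}$ produced by iterating \Cref{alg:coupled-sweep}; the unbiasedness result of \citet{jacob2020unbiased} then yields unbiasedness, finite variance, and finite expected computing time at once. Informally, those conditions are: (I) the marginal chain satisfies $\E[h(X_t)] \to \targetE$, with $h(X_t)$ having a moment of order slightly above $2$ bounded uniformly in $t$; (II) the meeting time $\tau$ has geometric tails, $\Pr(\tau > t) \le C\rho^t$ for some $\rho \in (0,1)$; and (III) the chains remain faithful after meeting.

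Conditions (I) and (III) I expect to be routine. For (I): $\calP{N}$ is finite, and the DPMM posterior $\targetP$ assigns strictly positive mass to every partition, since each conjugate-Gaussian single-cluster marginal likelihood is strictly positive and $\alpha>0$. Hence the marginal Gibbs kernel of \Cref{alg:single-sweep} is irreducible -- any partition can be carried to any other by finitely many single-index reassignments, each given positive probability by some $\GibbsCond{n}$ -- and aperiodic, because a sweep returns to its starting partition with positive probability (at each index, ``put it back where it came from'' is always one of the available atoms). So the marginal chain is geometrically ergodic, giving $\E[h(X_t)] \to \targetE$, and $h$ is bounded on the finite set $\calP{N}$, so all of its moments along the chain are finite. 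For (III): condition~(2) of the theorem states the coupling is faithful whenever its two inputs agree, so within a coupled sweep, equality of $\onestep{X}$ and $\onestep{Y}$ before processing index $n$ is preserved after it; induction over $n=1,\dots,N$ shows equality at the start of a sweep persists to its end, hence for all later times.

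The crux is the geometric meeting-time bound (II), for which I would combine assumption~(1) of the theorem with one structural fact: for every index $n$ and every leave-out configuration, ``assign index $n$ to a fresh singleton cluster'' is always one of the atoms of $\GibbsCond{n}(\cdot \given \loo{X}{n})$. I would lower bound, uniformly over the current pair of partitions, the probability that the two chains coincide at the end of a single coupled sweep. Consider the event $F$ that, at every step $n \in [N]$ of the sweep at which the two partitions still disagree just before index $n$ is processed, the coupled draw sends the $X$-chain to its fresh-singleton atom for $n$ and the $Y$-chain to its fresh-singleton atom for $n$. This pair of atoms indexes a single entry $\rIdx{u}{k,k'}$ of the coupling, which is at least $\eps$ by assumption~(1); conditioning step by step (a step at which the partitions already agree contributes the trivial factor~$1$) gives $\Pr(F) \ge \eps^N$ from every starting pair. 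On $F$ the chains coincide by the end of the sweep: if they did not, then -- since faithfulness forbids agreeing and then disagreeing -- they must have disagreed just before every step, so $F$ would have forced every index into a fresh singleton in both chains, making both equal to the all-singletons partition, a contradiction. Thus each coupled sweep achieves a meeting with probability at least $\eps^N$ regardless of the current state, which yields $\Pr(\tau > t) \le (1-\eps^N)^{t-1}$.

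With (I)--(III) verified, \citet{jacob2020unbiased} gives that $H_{\burnin:\minIter}(X,Y)$ is unbiased for $\targetE$ and has finite variance. Finally, the geometric tail implies $\E[\tau]<\infty$, so the number of sweeps, $\max(\minIter,\tau)$, has finite expectation; since a coupled sweep -- including the optimal transport computation inside $\coupfnOTwN$, a finite linear program -- is a finite computation on the finite space $\calP{N}$, the expected computing time is finite. I expect the only genuinely delicate point to be the meeting-time argument, and within it the bookkeeping for a sweep in which the chains may coincide partway through; the event $F$ above is designed precisely so that the post-coincidence dynamics never need to be tracked. (Assumption~(1) itself is where the nugget in $\coupfnOTwN$ enters, as discussed after \Cref{thm:verification} and in \Cref{apd:verify-proof}.)
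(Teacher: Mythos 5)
Your proposal is correct and follows essentially the same route as the paper: both verify \citet[Assumptions 1--3]{jacob2020unbiased}, establish Assumption 1 via aperiodicity, irreducibility, and invariance of the Gibbs kernel on the finite space $\calP{N}$ (so $h$ is bounded and geometric ergodicity gives $\E[h(X_t)]\to\targetE$), and establish the geometric tail of $\tau$ by showing each coupled sweep produces a meeting with probability at least $\eps^N$ using condition (1). The only substantive difference is the rendezvous state: the paper drives both chains to the single-cluster partition $\allSame$ along the explicit path of \Cref{lem:Tkernel}, whereas you drive them to the all-singletons partition via the always-available fresh-singleton atoms, and your event $F$ handles the steps at which the chains have already coincided mid-sweep (where condition (1) no longer applies and only the faithfulness condition is available) somewhat more explicitly than the paper's path-following argument does.
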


We prove \Cref{thm:verification} in \Cref{proof:verification}. Our proof exploits the discreteness of the sample space to ensure chains meet. Condition (1) roughly ensures that any joint state in the product space is reachable from any other joint state under the Gibbs sweep; we use it to establish that the meeting time $\tau$ has sub-geometric tails. Condition (2) implies that the Markov chains are faithful once they meet.

\begin{corollary}\label{corollary:opt_coupling_is_unbiased}
Let $\targetP$ be the DPMM posterior.
The \Cref{eqn:unbiased_estimate} estimator using \Cref{alg:coupled-sweep} with coupling function $\coupfnOTwN(\targetP,n,X,Y)$ is unbiased for $\targetE$.
\end{corollary}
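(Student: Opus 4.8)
The plan is to verify that the coupling function $\coupfnOTwN$ satisfies the two hypotheses of \Cref{thm:verification}; since that theorem already concerns the DPMM posterior $\targetP$, the corollary then follows at once.

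\textbf{Condition (2): faithfulness on the diagonal.} When $X = Y$, \Cref{eq:OT-w-nugget} gives $\coupfnOTwN = \coupfnOT(X,X)$, the OT coupling of $\GibbsCond{n}(\cdot \given \loo{X}{n})$ with itself. I would argue that because the two inputs coincide, the diagonal coupling placing mass $a_k$ at $(\pi^k,\pi^k)$ is feasible and attains objective value $\sum_k a_k \HamDist(\pi^k,\pi^k) = 0$; since $\HamDist \ge 0$, this value is optimal, so any minimizer --- in particular $\coupfnOT(X,X)$ --- puts positive mass only on pairs $(\pi^k,\nu^{k'})$ with $\HamDist(\pi^k,\nu^{k'}) = 0$, hence with $\pi^k = \nu^{k'}$ because $\HamDist$ is a metric. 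Thus the output $\coup$ satisfies $\coup(\onestep{X} = \onestep{Y}) = 1$, which is Condition (2).

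\textbf{Condition (1): a uniform lower bound off the diagonal.} When $X \ne Y$, \Cref{eq:OT-w-nugget} gives $\coup = (1-\nugget)\coupfnOT(X,Y) + \nugget\,\coupfnIndep(X,Y)$, so the atom at $(\pi^k,\nu^{k'})$ is at least $\nugget$ times the independent-coupling atom $a_k b_{k'}$. It therefore suffices to bound $a_k$ and $b_{k'}$ below by a constant that does not depend on $n$, $X$, or $Y$. Here I would use the structure of the DPMM: in \Cref{eq:GibbsCond-as-Dirac-sum} each leave-out Gibbs weight is proportional to a strictly positive, finite CRP reassignment weight times a strictly positive, finite conjugate-Gaussian predictive density, so every $a_k$ (and likewise every $b_{k'}$) is strictly positive. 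Since $\loo{X}{n}$ ranges over the finite set of partitions of $[N]\setminus\{n\}$ and $n$ over $[N]$, the minimum $a_{\min}$ of all these conditional probabilities is a strictly positive constant depending only on $N$ and the fixed hyperparameters. Taking $\eps := \nugget\, a_{\min}^2 > 0$ then yields $\rIdx{u}{k,k'} \ge \nugget a_k b_{k'} \ge \eps$ for every pair $(k,k')$, which is Condition (1).

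With both conditions in hand, I would invoke \Cref{thm:verification} to conclude that the \Cref{eqn:unbiased_estimate} estimator built from \Cref{alg:coupled-sweep} with coupling function $\coupfnOTwN$ is unbiased for $\targetE$ (with finite variance and finite expected runtime as well). The one point I would state explicitly is that the mixture in \Cref{eq:OT-w-nugget} is genuinely a coupling of the two leave-out conditionals --- but this is immediate, since $\coupfnOT(X,Y)$ and $\coupfnIndep(X,Y)$ are both couplings of the same pair of distributions (the transport problem in \Cref{eq:OT-coupling} is the exact one, not its entropic regularization) and the marginal constraints in \Cref{def:coupling} are linear, so any convex combination of couplings is again a coupling. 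I do not anticipate a genuine obstacle; the only mildly delicate piece is pinning down the strict positivity and finiteness of the conjugate-Gaussian predictive densities and CRP weights underlying $a_{\min}$.
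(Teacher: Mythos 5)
Your proposal is correct and follows essentially the same route as the paper's proof: verify Condition (2) via the zero-cost diagonal coupling being the unique-value optimum of the OT problem (as the paper argues at the end of Section 3.2), and verify Condition (1) by lower-bounding the marginal Gibbs probabilities by a positive constant (your $a_{\min}$ is the paper's $\omega$) and using the nugget to get $\rIdx{u}{k,k'} \ge \nugget\, a_{\min}^2 > 0$. Your additional remarks --- that a convex combination of couplings is a coupling, and that the positivity of $a_{\min}$ rests on the finiteness of the partition space together with strictly positive CRP weights and Gaussian predictive densities --- merely make explicit what the paper leaves implicit.
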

\begin{proof}
It suffices to check \Cref{thm:verification}'s conditions.
We show $\coupfnOTwN$ is faithful at the end of \Cref{s-sec:algo}.
For a partition, the associated leave-out distributions place positive mass on all $K$ accessible atoms, so
marginal transition probabilities are lower bounded by some $\omega>0.$
The nugget guarantees each $\rIdx{u}{k,k^\prime} \geq \eta \omega^2>0.$
\end{proof}
Note that the introduction of the nugget allows us to verify the first condition of \Cref{thm:verification} is met without relying on properties specific to the optimal transport coupling.
We conjecture that one could analogously show unbiased estimates may be obtained using couplings of Markov chains defined in the label space by introducing a similar nugget to transitions on this alternative state space.
Crucially, though, we will see in \Cref{s-sec:meeting-times} that our coupling in the partition space exhibits much faster meeting times in practice than these couplings in the label space.

\subsection{Time Complexity} \label{s-sec:runtime}
The accuracy improvements of our method can be achieved only if the compute expense of coupling is not too high relative to single-chain Gibbs. In \Cref{s-sec:accurate-estimation}, we show empirically that our method outperforms naive parallel samplers run for the same wall time. Here we use theory to describe why we expect this behavior.

There are two key computations that must happen in any coupling Gibbs step within a sweep: \\
(1) computing the atom sizes $\tIdx{a}{k},\tIdx{b}{k'}$ and atom locations $\rIdx{\pi}{k}, \rIdx{\nu}{k'}$ in the sense of \Cref{def:coupling} and \Cref{def:coupfn}; \\
(2) computing the pairwise distances $\HamDist(\rIdx{\pi}{k}, \rIdx{\nu}{k^\prime})$; and solving the optimal transport problem (\Cref{eq:OT-coupling}).

Let $\GibbsTime(N,K)$ represent the time it takes to compute the Gibbs conditional $\GibbsCond{n}$ for a partition of size $K$, and let $\maxK$ represent the size of the largest partition visited in any chain, across all processors, while the algorithm runs. Then part (1) takes $O(\beta(N,\maxK))$ time to run.
For single chains, computing atom sizes and locations dominates the compute time; the computation required is of the same order, but is done for \emph{one} chain, rather than two, on each processor.
We show in \Cref{prop:OT-solution} in \Cref{apd:gibbs_sweep_proof} that part (2) can be computed in $O(\maxK^3 \log \maxK)$ time.
\Cref{prop:OT-solution} follows from efficient use of data structures; naive implementations are more computationally costly.
Note that the total running time for a full Gibbs sweep (\Cref{alg:single-sweep} or \Cref{alg:coupled-sweep}) will be $N$ times the single-step cost.

The extra cost of a coupling Gibbs step will be small relative to the cost of a single-chain Gibbs step, then, if $O(\maxK^3 \log \maxK)$ is small relative to $O(\beta(N,\maxK))$.\footnote{We show in \Cref{apd:gibbs_sweep_proof} that, while there are also initial setup costs before running any Gibbs sweep, these costs do not impact the amortized complexity.} 
As an illustrative example, consider again the DPMM application from \Cref{s-sec:random-cluster}. We start with a comparison that we suspect captures typical operating procedure, but we also consider a worst-case comparison. \\
\textbf{Standard comparison:} The direct cost of a standard Gibbs step is $\GibbsTime(N,K) = O(ND + KD^3)$ (see \Cref{prop:dense-runtime} in \Cref{apd:gibbs_sweep_proof}). By Equation 3.24 in \citet{pitman2006combinatorial}, the number of clusters in a DPMM grows a.s.\ as $O(\log N)$ as $N \to \infty$.\footnote{Two caveats: (1) If a Markov chain is run long enough, it will eventually visit all possible cluster configurations. But if we run in finite time, it will not have time to explore every collection of clusters.
So we assume $O(\log N)$ is a reasonable approximation of finite time.
(2) Also note that the $\log N$ growth is for data generated from a DPMM whereas in real life we cannot expect data are perfectly simulated from the model.} If we take $\maxK = O(\log N)$, $O(\maxK^3 \log \maxK)$ will generally be smaller than $\GibbsTime(N,K) = O(ND + KD^3)$ for sufficiently large $N$. \\
\textbf{Worst-case comparison:} 
The complexity of a DPMM Gibbs step can be reduced to $\GibbsTime(N,K) = O(KD + D^3)$
through careful use of data structures and conditional conjugacy (see \Cref{prop:dense-runtime} in \Cref{apd:gibbs_sweep_proof}).
Still, the coupling cost $O(\maxK^3 \log \maxK)$ is not much larger than the cost of this step whenever $\maxK$ is not much larger than $D$.

For our experiments, we run the standard rather than optimized Gibbs step due to its simplicity and use in existing work \citep[e.g.][]{valpine2017programming}.
In e.g.\ our gene expression experiment with $D=50$, we expect this choice has little impact on our results.
Our \Cref{prop:OT-solution} establishing $O(\maxK^3 \log \maxK)$ for the optimal transport solver applies to Orlin's algorithm \citep{orlin1993faster}. However, convenient public implementations are not available. So instead we use the simpler network simplex algorithm \citep{kelly1991minimum} as implemented by \citet{flamary2021pot}. Although \citet[Section 3.6]{kelly1991minimum} upper bound the worst-case complexity of the network simplex as $O(\maxK^5)$, the algorithm's average-case performance may be as good as $O(\maxK^2)$ \citep[Figure 6]{bonneel2011displacement}.

\section{EMPIRICAL RESULTS}\label{sec:applications}
We now demonstrate empirically that our OT coupling (1) gives more accurate estimates and confidence intervals for the same wall time and processor budget as naive parallelism and (2) meets much faster than label-based couplings.

\subsection{Models, Datasets, And Implementation} \label{s-sec:targets}
We run samplers for both clustering and graph coloring problems, which we describe next. 
We detail our construction of ground truth, sampler initialization, and algorithm hyperparameters ($\burnin$ and $\minIter$) in \Cref{apd:MC_setting}.

\textbf{Motivating examples and target models.}
For clustering, we use single-cell RNA sequencing data \citep{prabhakaran2016dirichlet},
X-ray data of agricultural seed kernels \citep{charytanowicz2010complete,dua2019uci}, physical measurements of abalone \citep{nash1994population,dua2019uci},
and synthetic data from a Gaussian mixture model. In each case, our target model is the Bayesian posterior over partitions from the DPMM.
For graph colorings, sampling from the uniform distribution on $k$-colorings of graphs is a key sub-routine in fully polynomial randomized
approximation algorithms. And it suffices to sample from the partition distribution induced by the uniform distribution on $k$-colorings, which serves as our target model; see \Cref{apd:targets} for details.

\textbf{Summaries of interest.} 
Our first summary is the mean proportion of data points in the largest cluster; we write $\LCP$ for ``largest component proportion.'' See, e.g., \citet{liverani2015premium} for its use in Bayesian analysis.
Our second summary is the co-clustering probability; we write $\CC{a}{b}$ for the probability that data points indexed by $a$ and $b$ belong to the same cluster. See, e.g., \citet{DeFord2021Recombination} for its use in redistricting.
In \Cref{apd:ppd}, we also report a more complex summary: the posterior predictive distribution, which is a quantity of interest in density estimation \citep{gorur2010dirichlet,escobar1995bayesian}.

\textbf{Dataset details.}
Our \textsc{synthetic} dataset has {300} observations and {2} covariates.
Our \textsc{gene} dataset originates from \citet{zeisel2015cell} and was previously used by \citet{prabhakaran2016dirichlet} in a DPMM-based analysis. We use a subset with {200} observations and {50} covariates to allow us to quickly iterate on experiments.
We use the unlabeled version of the \textsc{seed} dataset from \citet{charytanowicz2010complete,dua2019uci} with {210} observations and {7} covariates.
For the \textsc{abalone} dataset from \citet{nash1994population,dua2019uci}, we remove the labels and binary features, which yields {4177} observations and {7} covariates.
For graph data (\textsc{k-regular}), we use a 4-regular graph with {6} vertices; we target the partition distribution induced by the uniform distribution on 4-colorings.

\subsection{Improved Accuracy With Coupling} \label{s-sec:accurate-estimation}
In \Cref{fig:composite}, we first show that our coupling estimates and confidence intervals offer improved accuracy over
naive parallelism. To the best of our knowledge, no previous coupling paper as of this writing has compared
coupling estimates or confidence intervals to those that arise from naively parallel chains.

\begin{figure*}[t]
	\centering
	\includegraphics[width=\linewidth]{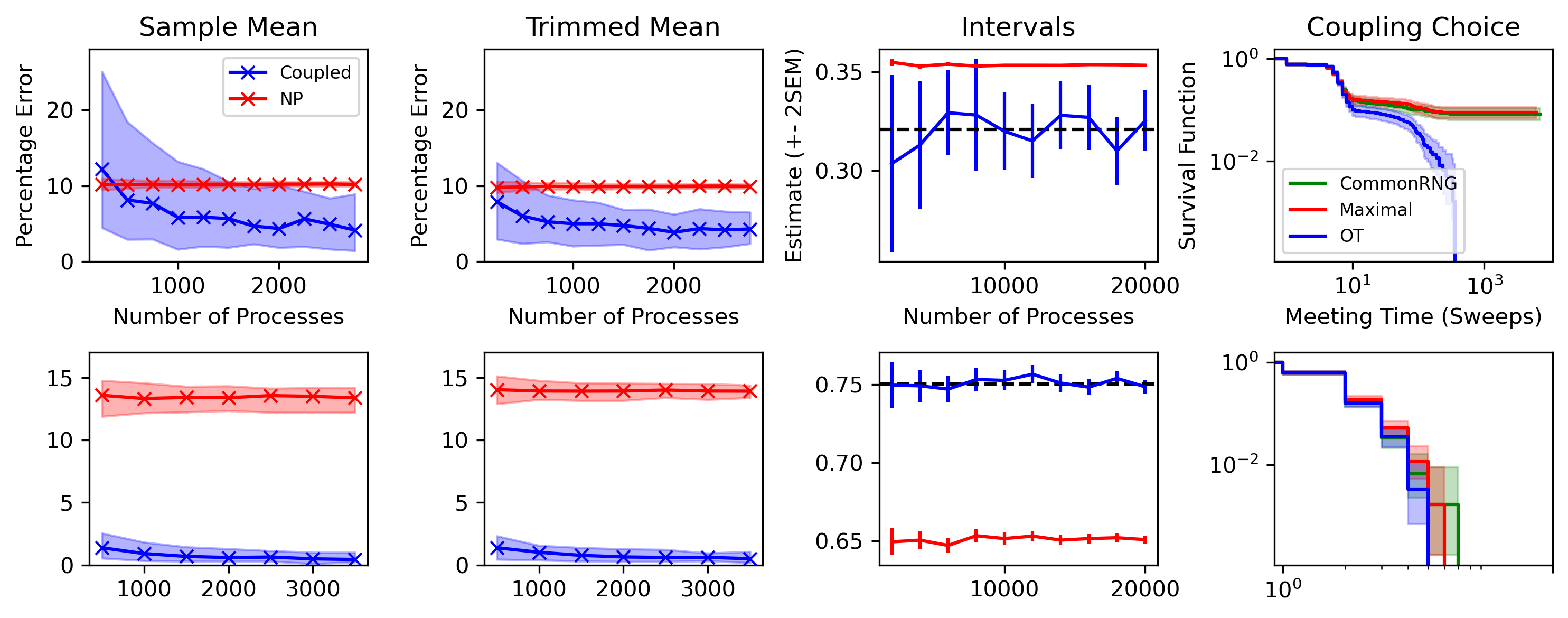}
	\caption{Top row and bottom row give results for \textsc{gene} and \textsc{k-regular}, respectively. The first two columns show that coupled chains provide better point estimates than naive parallelism. The third column shows that confidence intervals based on coupled chains are better than those from naive parallelism. The fourth column shows that OT coupling meets in less time than label-based couplings.}
	\label{fig:composite}
\end{figure*}

\textbf{Processor setup.} We give both coupling and naively parallel
approaches the same number of processors $\nProc$. We ensure equal wall time across processors as we describe next; this setup 
represents a computing system where, e.g., the user pays for total wall time, in which case we ensure equal cost between approaches.
For the coupling on the $j$th processor, we run until the chains meet and record the total time $\rIdx{\timeTaken}{j}$.
In the naively parallel case, then, we run a single chain on the $j$th processor for time $\rIdx{\timeTaken}{j}$.
In either case, each processor returns an estimate of $\targetE$. We can aggregate these estimates with a sample mean or 
trimmed estimator. Let $H_{c,J}$ represent the coupled estimate after aggregation across $J$ processors and $H_{u,J}$ represent
the naive parallel (uncoupled) estimate after aggregation across $J$ processors.
To understand the variability of these estimates, we replicate them $I$ times: $\{H_{c,J}^{(i)}\}_{i=1}^{I}$ and $\{H_{u,J}^{(i)}\}_{i=1}^{I}$.
In particular, we simulate running on 180{,}000 processors, so for each $J$, we let $I = 180{,}000 / J$; see \Cref{apd:MC_setting} for details. 
For the $i$th replicate, we compute squared error $e_{c,i} := (H_{c,J}^{(i)} - \targetE)^2$; similarly in the uncoupled case.

\textbf{Better point estimates.}
The upper left panel of \Cref{fig:composite} shows the behavior of LCP estimates for \textsc{gene}. The horizontal axis gives the number of processes $\nProc$.
The vertical value of any solid line is found by taking the square root of the median (across $I$ replicates) of the squared error and then dividing by the (positive) ground truth. Blue shows the performance of the aggregated standard-mean coupling estimate; red shows the naive parallel estimate.
The blue regions show the 20\% to 80\% quantile range. We can see that, at higher numbers of processors, the coupling estimates consistently yield a lower percentage error than the naive parallel estimates for a shared wall time.
The difference is even more pronounced for the trimmed estimates (first row, second column of \Cref{fig:composite}); here we see that, even at smaller numbers of processors, the coupling estimates consistently outperform the naive parallel estimates for a shared wall time. We see the same patterns for estimating CC(2,4) in \textsc{k-regular} (second row, first two columns of \Cref{fig:composite}) and also for \textsc{synthetic}, \textsc{seed}, and \textsc{abalone} in \Cref{sub-fig:seed-losses,sub-fig:synthetic-losses,sub-fig:abalone-losses} in \Cref{apd:all_figures}. 
We see similar patterns in the root mean squared error across replicates in \Cref{fig:key} (which pertains to \textsc{gene}) and the left panel of \Cref{sub-fig:kRegular-estimation,sub-fig:seed-estimation,sub-fig:synthetic-estimation,sub-fig:abalone-estimation} for the remaining datasets.

\begin{figure}[t]
	\centering
	\includegraphics[width=\linewidth]{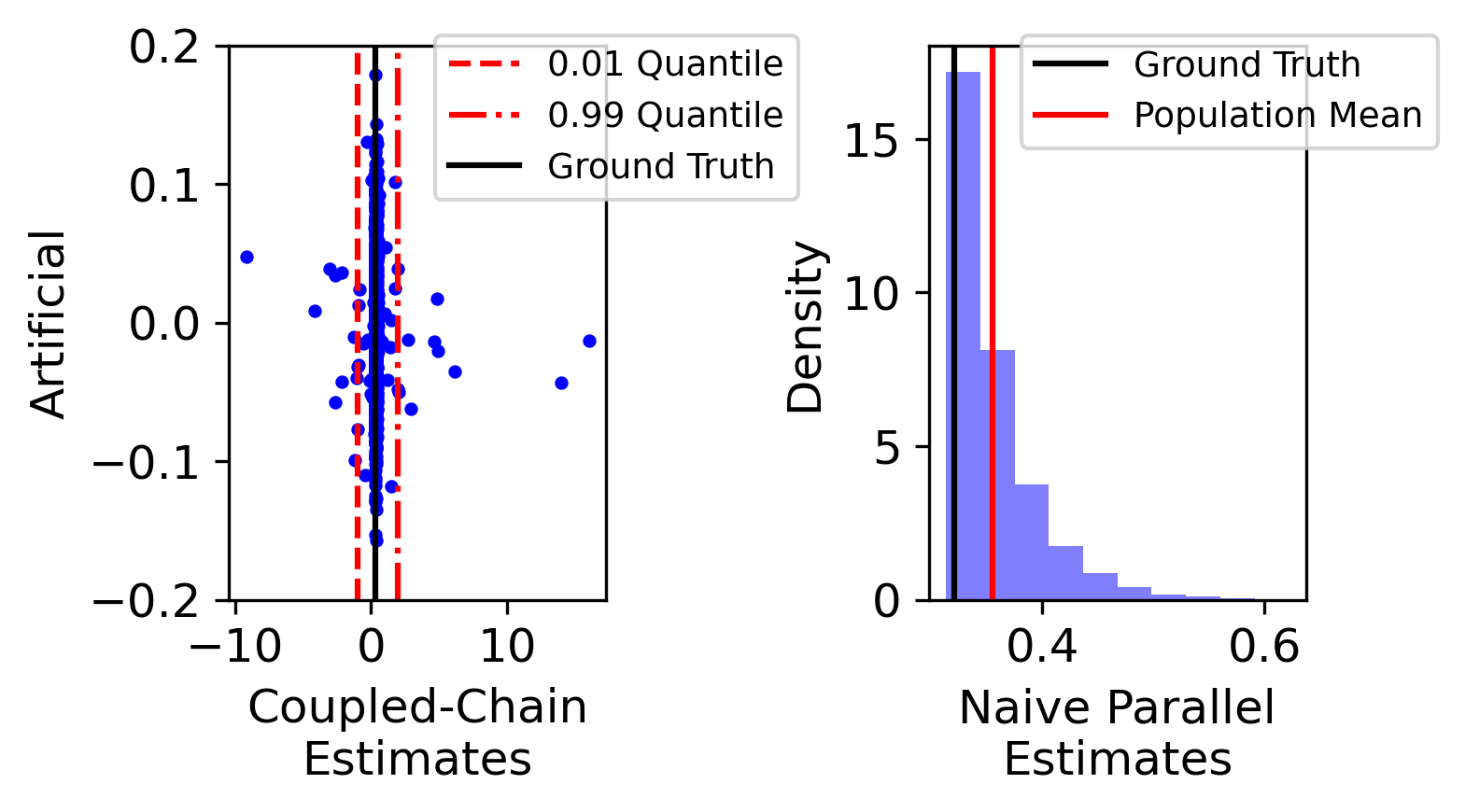}
	\caption{Coupled-chain estimates have large outliers. Meanwhile, naive parallelism estimates have substantial bias that does not go away with replication.}
	\label{fig:estimates}
\end{figure}

\Cref{fig:estimates} illustrates that the problem with naive parallelism is the bias of the individual chains, whereas only variance is eliminated by parallelism. In particular, the histogram on the right depicts the $\nProc$ estimates returned across each uncoupled chain at each processor $j$. We see that the population mean across these estimates is substantially different from the ground truth. This observation also clarifies why trimming does not benefit the naive parallel estimator: trimming can eliminate outliers but not systematic bias across processors.

By contrast, we plot the $\nProc$ coupling estimates returned across each processor $j$ as horizontal coordinates of points in the left panel of \Cref{fig:estimates}. Vertical coordinates are random noise to aid in visualization. By plotting the 1\% and 99\% quantiles of the $\nProc$ estimators, we can see that trimming will eliminate a few outliers. But the vast majority of estimates concentrate near the ground truth.

\textbf{Better confidence intervals.} The third column of \Cref{fig:composite} shows that the confidence intervals returned by coupling are also substantially improved relative to naive parallelism. The setup here is slightly different from that of the first two columns. For the first two columns, we instantiated many replicates of individual users and thereby checked that coupling generally can be counted upon to beat naive parallelism. But, in practice, an actual user would run just a single replicate. Here, we evaluate the quality of a confidence interval that an actual user would construct. We use only the individual estimates $s_j$ that make up \emph{one} $H_{c,J}$, $s_j=H_{\burnin:\minIter}(X^j,Y^j)$ (or the equivalent for $H_{u,J}$), to form a point estimate of $\targetE$ and a notion of uncertainty.

In the third column of \Cref{fig:composite}, each solid line shows the sample-average estimate aggregated across $J$ processors: $(1/J) \times \sum_{j=1}^{J} s_j$.
The error bars show $\pm2$ standard errors of the mean (SEM), where one SEM equals $\sqrt{\textrm{Var}( \{s_j\}_{j=1}^{J})/(J-1)}$.  Since the individual coupling estimators (blue) from each processor $j$ are unbiased, we expect the error bars to be calibrated, and indeed we see appropriate coverage of the ground truth (dashed black line). By contrast, we again see systematic bias in the naive parallel estimates -- and very-overconfident intervals; indeed they are so small as to be largely invisible in the top row of the third column of \Cref{fig:composite} -- i.e., when estimating LCP in the \textsc{gene} dataset.
The ground truth is many standard errors away from the naive parallel estimates. We see the same patterns for estimating CC(2,4) for \textsc{k-regular} (second row, third column of \Cref{fig:composite}).
See the right panel of \Cref{sub-fig:seed-estimation,sub-fig:synthetic-estimation,sub-fig:abalone-estimation} in \Cref{apd:all_figures} for similar behaviors in \textsc{synthetic}, \textsc{seed}, and \textsc{abalone}.

\subsection{Faster Meeting With OT Couplings} \label{s-sec:meeting-times}
Next we show that meeting times with our OT coupling on partitions are faster than 
with label-based coupling using maximal \citep{jerrum1998mathematical} and common random number generator (common RNG) \citep{gibbs2004convergence}.
We did not directly add a comparison with label-based couplings to our plots in \Cref{s-sec:accurate-estimation} 
since, in many cases, the label-based coupling chains fail to meet altogether even with a substantially larger time budget than \Cref{s-sec:accurate-estimation} currently uses.

Instead, we now provide a direct comparison of meeting times in the fourth column of \Cref{fig:composite}.
To generate each figure, we set a fixed amount of compute time budget: {10} minutes for the top row, and {2} minutes for the bottom row. Each time budget is roughly the amount of time taken to generate the ground truth (i.e., the long, single-chain runs) for each dataset.
If during that time a coupling method makes the two chains meet, we record the meeting time $\tau$; otherwise, the meeting time for that replica is right-censored, and we record the number of data sweeps up to that point.
Using the censored data, we estimate the survival functions of the meeting times using the classic Kaplan--Maier procedure \citep{kaplan1958nonparametric}.

In the clustering examples (\Cref{fig:composite} top row, fourth column and also the left panel of \Cref{sub-fig:seed-coupling,sub-fig:synthetic-coupling,sub-fig:abalone-coupling} in \Cref{apd:all_figures}), the label-based couplings' survival functions $\Pr(\tau > t)$ do not go to zero for large times $t,$ but instead they plateau around $0.1$. In other words, the label-based coupling chains fail to meet on about 10\% of attempts. 
Meanwhile, all replicas with our OT coupling successfully meet in the allotted time. 
Since so many label-based couplings fail to meet before the time taken to generate the ground truth, these label-based couplings perform worse than essentially standard MCMC.
In addition to survival functions, we also plot the distance between coupled chains -- which decreases the fastest for our OT couplings -- in the right panel of \Cref{sub-fig:gene-coupling,sub-fig:kRegular-coupling,sub-fig:seed-coupling,sub-fig:synthetic-coupling,sub-fig:abalone-coupling} in \Cref{apd:all_figures}.
As discussed in \Cref{apd:label_switching}, we believe the improvement of our OT coupling over baselines arises from using a coupling function that incentivizes decreasing the distance between partitions rather than between labelings.

Separate from accurate estimation in little time, our comparison of survival functions in the bottom row, fourth column of \Cref{fig:composite} and in \Cref{fig:meeting-ER} from \Cref{apd:more-meeting} is potentially of independent interest.
While the bottom row of \Cref{fig:composite} gives results for \textsc{k-regular}, \Cref{fig:meeting-ER} gives results on Erd\H{o}s-R\'{e}nyi random graphs.
The tightest bounds for mixing time for Gibbs samplers on graph colorings to date \citep{chen2019improved} rely on couplings on labeled representations.
Our result suggests better bounds may be attainable by considering convergence of partitions rather than labelings.

\section{CONCLUSION}
We demonstrated how to efficiently couple partition-valued Gibbs samplers using optimal transport --
to take advantage of parallelism for improved estimation. Multiple directions show promise for future work.
E.g., while we have used CPUs in our experiments here, we expect that GPU implementations will improve the applicability of our methodology.
More extensive theory on the trimmed estimator could clarify its guarantees and best practical settings.
Another direction is developing couplings for models with more complicated combinatorial structure -- such as topic modeling \citet{pritchard2000inference,blei2003latent} or feature allocations \citep{griffiths2011indian}, in which data indices can belong to more than one latent group at a time.

\section*{ACKNOWLEDGMENTS}
This work was supported by an NSF CAREER Award and an ONR Early Career Grant. BLT was also supported by NSF GRFP.

\bibliography{references}
\bibliographystyle{plainnat}

\clearpage
\appendix
\onecolumn

\section{RELATED WORK} \label{apd:related_work}
Couplings of Markov chains have a long history in MCMC.
Historically, they have primarily been a theoretical tool for analyzing convergence of Markov chains (see e.g.\ \citet{lindvall2002lectures} and references therein).
Some works prior to \citet{jacob2020unbiased} used coupled Markov chains for computation, but do not provide guarantees of consistency in the limit of many processes or are not generally applicable to Markov chains over partitions.
E.g., \citet{propp1996exact} and follow-up works generate exact, i.i.d.\ samples but require a partial ordering of the state space that is almost surely preserved by applications of an iterated random function representation of the Markov transition kernel \citep[Chapter 4.4]{jacob2020couplings}.
It is unclear what such a partial ordering looks like for the space of partitions.
\citet{neal1992circularly} proposes estimates obtained using circularly coupled chains that can be computed in parallel and aggregated, but these estimates are not unbiased and so aggregated estimates are not asymptotically exact.
Parallel tempering methods \citep{swendsen1986replica} also utilize coupled chains to improve MCMC estimates but, like naive parallelism, provide guarantees asymptotic only in the number of transitions, not in the number of processes. 

Outside of couplings, other lines of work have sought to utilize parallelism to obtain improved MCMC estimates in limited time.
To our best knowledge, that work has focused on challenges introduced by large datasets and has subsequently focused on distributing datasets across processors.
For example, \citet{rabinovich2015variational,scott2016bayes,srivastava2018scalable} explore methods running multiple chains in parallel on small subsets of a large dataset,
and \citet{lao2020tfp} proposes using data parallelism on GPUs to accelerate likelihood computations.
However, these methods offer little help in the current setting as the partition is the quantity of interest in our case; even if distributions over partitions of subsets are found at each processor, these distributions are not trivial to combine across processors.
Also, the operations that avail themselves to GPU acceleration (such as matrix multiplications) are not immediately present in Markov chains on partitions.

\section{FUNCTIONS OF INTEREST} \label{apd:functions}
We express functions of interest, $h$, in partition notation.
Suppose there are $N$ observations, and the partition is $\Pi = \{ \rIdx{A}{1}, \rIdx{A}{2}, \ldots, \rIdx{A}{K}\}$. 
To compute largest component proportion ($\LCP$), we first rank the clusters by decreasing size, $|\rIdx{A}{(1)}| \geq |\rIdx{A}{(2)}| \geq \ldots \geq |\rIdx{A}{(K)}|$, and report the proportion of data in the largest cluster: $|\rIdx{A}{(1)}|/N$.
If we are interested in the co-clustering probability of data points indexed by $j_1$ and $j_2$, then we let $h$ be the co-clustering indicator. 
Namely, if $j_1$ and $j_2$ belong to the same element of $\Pi$ (i.e.\ there exists some $A \in \Pi$ such that $j_1, j_2 \in A$), then $h(\Pi)$ equals 1; otherwise, it equals 0.

In addition to these summary statistics of the partition, we can also estimate cluster-specific parameters, like cluster centers.
For the Gaussian DPMM from \Cref{s-sec:random-cluster}, suppose that we care about the mean of clusters that contain a particular data point, say data point {1}.
This expectation is $\E(\mu_A \text{ s.t. } 1 \in A)$.
This is equivalent to $\E[\theta_i \given x]$ in the notation of \citet{maceachern1994estimating}.
In \Cref{s-sec:random-cluster}, we use $\mu_A$ to denote the cluster center for all elements $i \in A$,
while \citet{maceachern1994estimating} uses individual $\theta_i$'s to denote cluster centers for individual data points, with the possibility that $\theta_i = \theta_j$ if data points $i$ and $j$ belong in the same partition element. 
We can rewrite the expectation as $\E(  \E[\mu_A \text{ s.t. } 1 \in A \given \Pi] )$, using the law of total expectation.
$\E[\mu_A \text{ s.t. } 1 \in A \given \Pi]$ is the posterior mean of the cluster that contains data point {1}, which is a function only of the partition $\Pi$. 

\section{UNBIASEDNESS THEOREM} \label{apd:verify-proof}
\begin{lemma}[Transition kernel is aperiodic and irreducible for Gaussian DPMM] \label{lem:Tkernel}
	Denote by $\allSame$ the partition of $[N]$ where all elements belong to one cluster. For Gaussian DPMM, the transition kernel from \Cref{alg:single-sweep} satisfies
	\begin{description}
		\item[$\bullet$] For any $X \in \calP{N}$, $T(X, X) > 0.$ 
		\item[$\bullet$] For any $X \in \calP{N}$, $T(X, \allSame) > 0.$ 
		\item[$\bullet$] For any $X \in \calP{N}$, $T(\allSame, X) > 0.$ 
	\end{description}
\end{lemma}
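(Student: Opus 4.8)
The plan is to reduce all three bullets to one observation: for the Gaussian DPMM posterior, every partition has strictly positive mass, $\targetP(\pi) > 0$ for all $\pi \in \calP{N}$. This holds because the DPMM prior weight $\alpha^{|\pi|}\prod_{A\in\pi}(|A|-1)!/[\alpha(\alpha+1)\cdots(\alpha+N-1)]$ is strictly positive ($\alpha>0$), and the marginal likelihood of $\pi$ factors over clusters into Gaussian integrals $\int \prod_{j\in A}\Gaussian(\data_j;\mu_A,\Sigma_1)\,\Gaussian(\mu_A;\mu_0,\Sigma_0)\,d\mu_A$, each the integral of an everywhere-positive density and hence positive. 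Combined with \Cref{eq:GibbsCond-as-Dirac-sum}, this shows that for every $n$ and every current partition, each of the finitely many ($\le N$) accessible atoms $\rIdx{\pi}{k}$ of the leave-out conditional $\GibbsCond{n}(\cdot\given\loo{X}{n})$ has weight $a_k>0$. Since $T(X,\cdot)$ is the law of the output of the sweep in \Cref{alg:single-sweep}, which updates points $1,\dots,N$ in turn, any prescribed sequence of single-point moves in which each move lands on a currently-accessible atom occurs with probability a product of such $a_k$'s, hence strictly positive. It then suffices to exhibit, for each bullet, a sequence of moves reaching the target partition.

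For $T(X,X)>0$: at step $n$, re-assigning point $n$ to its $X$-element is always accessible (after deleting $n$, either $n$ was a singleton in $X$, recovered by the ``new cluster'' atom, or its $X$-element minus $n$ is still present, and adding $n$ back recovers it); if every step does this, the sweep leaves the partition at $X$. For $T(X,\allSame)>0$: keep point $1$ in place at step $1$, and at each step $n\ge 2$ assign point $n$ to the cluster currently containing point $1$ — after deleting $n\neq 1$ that cluster still contains $1$, so it is an accessible atom — after which all $N$ points lie in one cluster, i.e.\ $\onestep{X}=\allSame$.

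For $T(\allSame,X)>0$, I would argue by induction on the number of processed points, starting from $\allSame$ and writing $X=\{A_1,\dots,A_K\}$. The invariant after processing $1,\dots,n$ is: (i) points $n+1,\dots,N$ still form a single ``leftover'' cluster, and (ii) the partition induced on $\{1,\dots,n\}$ equals $X$ restricted to $\{1,\dots,n\}$. At step $n$, remove point $n$ from the leftover cluster; if $n=\min A_j$ for the $X$-element $A_j$ containing $n$, send $n$ to a fresh singleton (the ``new cluster'' atom, always accessible), starting $A_j$ without touching earlier points; otherwise $\min A_j < n$ was already processed, so by (ii) its current cluster is exactly $A_j\cap\{1,\dots,n-1\}$, and sending $n$ there (accessible, since that cluster is nonempty after deleting $n$) grows it to $A_j\cap\{1,\dots,n\}$. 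After step $N$ the induced partition on $[N]$ is $X$ with no leftover cluster, so $\onestep{X}=X$. I expect the main obstacle to be exactly this last argument's bookkeeping — checking that the intended target of each move is genuinely one of the atoms $\rIdx{\pi}{k}$ after the leave-one-out deletion, and that deleting unprocessed points never destroys a cluster the construction relies on — but these follow routinely from invariants (i)--(ii); the remaining ``product of positive numbers is positive'' step is immediate since each sweep involves only finitely many atoms.
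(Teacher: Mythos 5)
Your proposal is correct and follows essentially the same route as the paper's proof: establish that every accessible atom of each leave-out conditional has strictly positive mass (the paper cites the full support of the Gaussian; you derive it from positivity of the posterior p.m.f., which is equivalent), then exhibit explicit positive-probability move sequences for each bullet, with your third construction (new singleton when $n=\min A_j$, else join the cluster of $\min A_j$) matching the paper's $c_j$/$l(n)$ construction exactly. Your explicit invariant for the $T(\allSame,X)$ case is a slightly more careful writeup of the same bookkeeping the paper leaves implicit.
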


\begin{proof}[Proof of \Cref{lem:Tkernel}] 
	For any starting $X \in \calP{N}$, we observe that there is positive probability to stay at the state after the $T(X,\cdot)$ transition i.e.\ $T(X,X) > 0.$ 
	In Gaussian DPMM, because the support of the Gaussian distribution is the whole Euclidean space (see also \Cref{eq:dpmm-cond}), when the $n$th data point is left out (resulting in the conditional $\GibbsCond{n}(\cdot \given X_{-n})$), there is positive probability that $n$th is re-inserted 
	into the same partition element of $X$ i.e.\ $\GibbsCond{n}(X \given X_{-n}) > 0.$  Since $T(X,\cdot)$ is the composition of 
	these $N$ leave--outs and re-inserts, the probability of staying at $X$ is the product of the probabilities for each $\GibbsCond{n}(\cdot \given X_{-n})$), which is overall 
	a positive number. 
	
	One series of updates that transform $X$ into $\allSame$ in one sweep is to a) assign $1$ to its own cluster
	and b) assign $2,3,\ldots,N$ to the same cluster as $1.$ This series of update also has positive probability in Gaussian DPMM. 
	
	On transforming $\allSame$ into $X$, for each component $A$ in $X$, 
	let $c(A)$ be the smallest element in the component. For instance, if $X  =\{ \{1,2\}, \{3,4\} \}$ then $c(\{1,2\}) = 1,c (\{3,4\}) = 3.$ We sort the components $A$ by their $c(A)$, to get a list $c_1 < c_2 < \ldots < c_{|X|}$.
	For each $1 \leq n \leq N$, let $l(n) = c(A)$ for the component $A$ that contains $n$.
	In the previous example, we have $c_1 = 1$ and $c_2 = 3$, while $l(1) = 1, l(2) = 1, l(3) = 3, l(4) = 3$. 
	One series of updates that transform $\allSame$ into $X$ is
	\begin{description}
		\item[$\bullet$] Initialize $j = 1$.
		\item[$\bullet$] for $1 \leq n \leq N$, if $n = c_j$, then make a new cluster with $n$ and increment $j = j + 1.$ Else, assign $n$ to the cluster that currently contains $l(n)$.
	\end{description}
	This series of update also has positive probability in Gaussian DPMM.
\end{proof}

\begin{proof}[{Proof of \Cref{thm:verification}}] \label{proof:verification}

	Because of \citet[Proposition 3]{jacob2020unbiased}, it suffices to check \citet[Assumptions 1--3]{jacob2020unbiased}. 
	
	\paragraph{Checking Assumption 1.} Because the sample space $\calP{N}$ is finite, $\max_{\pi \in \calP{N}} h(\pi)$ is finite. This means the expectation of any moment of $h$ under 
	the Markov chain is also bounded. We show that $\E[h(X^t)] \xrightarrow{t \to \infty} \targetE$ by standard ergodicity arguments.\footnote{\citet[Theorem 1]{maceachern1994estimating} 
	states a geometric ergodicity theorem for the Gibbs sampler like \Cref{alg:single-sweep} but does not provide verification of the aperiodicity, irreducibility or stationarity.} 
	\begin{description}
		
		\item[$\bullet$] Aperiodic. From \Cref{lem:Tkernel}, we know  $T(X,X) > 0$ for any $X$. This means the Markov chain is aperiodic \citep[Section 1.3]{levin2017markov}.
		
		\item[$\bullet$] Irreducible. From \Cref{lem:Tkernel}, for any $X, Y$, we know that $\singleT(X,\allSame) > 0$ and $\singleT(\allSame, Y) > 0$, meaning that $\singleT^2(X,Y) > 0.$ This means the Markov chain is irreducible. 
		
		\item[$\bullet$] Invariant w.r.t.\ $\targetP$. The transition kernel $T(X,\cdot)$ from \Cref{alg:single-sweep} leaves the target $\targetP$ invariant because each 
		leave--out conditional $\GibbsCond{n}$ leaves the target $\targetP$ invariant. If $X \sim \targetP$, then $X_{-n} \sim \lotargetP{n}$. Hence, if 
		$\onestep{X} \given X \sim \GibbsCond{n}(\cdot \given X_{-n})$ then by integrating out $X$, we have $\onestep{X} \sim \targetP.$
	\end{description}

	By \citet[Theorem 4.9]{levin2017markov}, there exists a constant $\alpha \in (0,1)$ and $C > 0$ such that
	\begin{equation*}
		\max_{\pi \in \calP{N}} \TV{T^t(\pi, \cdot)}{\targetP} \leq C \alpha^t.
	\end{equation*}
	Since the sample space is finite, the total variation bound implies that for any $\pi$, expectations under $T^t(\pi, \cdot)$ are close to expectations under $\targetP$,
	\begin{equation*}
		\max_{\pi \in \calP{N}}| \E_{X^t \given X^0 = \pi} h(X^t)  - \targetE | \leq (\max_{\pi \in \calP{N}} h(\pi)) C \alpha^t. 
	\end{equation*}
	Taking expectations over the initial condition $X^0 = \pi$,
	\begin{equation*}
		|\E_{X^t} h(X^t) - \targetE| = | \E_{X^0} [\E_{X^t \given X^0 = \pi} h(X^t)  - \targetE] | \leq  \E_{X^0} |\E_{X^t \given X^0 = \pi} h(X^t)  - \targetE]| \leq (\max_{\pi \in \calP{N}} h(\pi)) C \alpha^t.
	\end{equation*}
	Since the right hand side goes to zero as $t \to \infty$, we have shown that $\E[h(X^t)] \xrightarrow{t \to \infty} \targetE.$
	
	\paragraph{Checking Assumption 2.} To show that the meeting time is geometric, we show that there exists $\overline{\eps}$ such that for any $X$ and $Y$, under one coupled sweep from \Cref{alg:coupled-sweep} ($(\onestep{X}, \onestep{Y}) \sim \coupT(\cdot, (X,Y))$), 
	\begin{equation} \label{eq:meeting-prob}
	\pr(\onestep{X} = \onestep{Y} = \allSame \given X, Y) \geq \overline{\eps}.
	\end{equation}
	If this were true, we have that $\pr(\onestep{X} = \onestep{Y} \given X, Y) \geq \overline{\eps}$, and
	\begin{equation*}
		\pr(\tau > t) = \pr \left(  \cap_{i=0}^t X^{i+1} \neq Y^i \right) = \pr(X^1 \neq Y^0) \prod_{i=1}^{t} \pr(X^{i+1} \neq Y^i \given X^{i} \neq Y^{i-1}),
	\end{equation*}
	where we have used the Markov property to remove conditioning beyond $X^{i} \neq Y^{i-1}.$ 
	Since $\min_{X,Y} \pr(\onestep{X} = \onestep{Y} \given X, Y) \geq \overline{\eps}$, $\pr(X^{i+1} \neq Y^i \given X^{i} \neq Y^{i-1}) \leq 1 - \overline{\eps}$, meaning $\pr(\tau > t)  \leq (1-\overline{\eps})^t.$
	
	To see why \Cref{eq:meeting-prob} is true, because of \Cref{lem:Tkernel}, there exists a series of intermediate partitions $x^1, x^2, \ldots, x^{N-1}$ ($x^0 = X, x^{N} = \allSame$) such that for $1 \leq n \leq N$, $\GibbsCond{n}(x^n \given x^{n-1}_{n}) > 0.$ Likewise, there exists a series $y^1, y^2, \ldots, y^{N-1}$ for $Y.$ Because the coupling function $\coupfnNoArg$ satisfies $u^{ij} > \eps$, for any $n$, there is at least probability $\eps$ of transitioning to $(x^n,y^n)$ from $(x^{n-1}, y^{n-1})$. Overall, there is probability at least $\eps^N$ of transitioning from $(X,Y)$ to $(\allSame, \allSame)$. Since the choice of $X,Y$ has been arbitrary, we have proven \Cref{eq:meeting-prob} with 
	$\overline{\eps} = \eps^N.$
	
	\paragraph{Checking Assumption 3.} By design, the chains remain faithful after coupling.
\end{proof}

\section{TIME COMPLEXITY}\label{apd:gibbs_sweep_proof}
\begin{proposition}\label{prop:OT-solution}
	Given the atom sizes $\tIdx{a}{k},\tIdx{b}{k'}$ and atom locations $\rIdx{\pi}{k}, \rIdx{\nu}{k'}$ in the sense of \Cref{def:coupfn},
	we can compute the coupling matrix $\mu^{k,k'}$ for OT coupling function in $O(\maxK^3\log \maxK)$ time. 
\end{proposition}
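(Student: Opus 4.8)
The program \eqref{eq:OT-coupling} defining $\coupfnOT$ is a transportation linear program between the discrete measures $\sum_k a_k\delta_{\pi^k}$ and $\sum_{k'} b_{k'}\delta_{\nu^{k'}}$, and $K,K'=O(\maxK)$ since each indexes the clusters (plus a fresh-cluster option) of a visited partition. The plan is: (i) obtain a closed form for the cost matrix $C_{k,k'}\defined\HamDist(\pi^k,\nu^{k'})$ exploiting that $\pi^1,\dots,\pi^K$ all arise from the single partition $X_{-n}$ by reinserting data point $n$ (and likewise $\nu^{1},\dots,\nu^{K'}$ from $Y_{-n}$); (ii) use this to assemble the cost matrix in $O(\maxK^2)$ time via an incrementally maintained contingency table; and (iii) solve the resulting transportation problem with Orlin's algorithm.

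For (i), write $X_{-n}=\{A_1,\dots,A_{K-1}\}$, $Y_{-n}=\{B_1,\dots,B_{K'-1}\}$, with the convention $A_K=B_{K'}=\emptyset$ for the fresh-cluster atoms. Substituting into \eqref{eqn:distance_between_partitions} and tracking how reinserting $n$ perturbs each sum — inserting $n$ into $A_k$ changes $\sum_A|A|^2$ by $2|A_k|+1$, inserting $n$ into $B_{k'}$ changes $\sum_B|B|^2$ by $2|B_{k'}|+1$, and doing both changes $\sum_{A,B}|A\cap B|^2$ by $2|A_k\cap B_{k'}|+1$ (only the joint cell containing $n$ is affected) — yields, uniformly in $(k,k')$,
\begin{equation*}
C_{k,k'} = \HamDist(X_{-n},Y_{-n}) + 2|A_k| + 2|B_{k'}| - 4\,|A_k\cap B_{k'}|.
\end{equation*}
Since any coupling's weights satisfy $\sum_{k,k'}u_{k,k'}=1$, $\sum_{k'}u_{k,k'}=a_k$, and $\sum_k u_{k,k'}=b_{k'}$, the first three terms of $C_{k,k'}$ contribute a $u$-independent constant to the transport objective. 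Hence the coupling returned by $\coupfnOT$ is exactly the maximizer of $\sum_{k,k'}u_{k,k'}M_{k,k'}$ over couplings of $a$ and $b$, where $M_{k,k'}\defined|A_k\cap B_{k'}|\in\{0,\dots,N\}$ is the contingency table of $X_{-n}$ and $Y_{-n}$ — equivalently, a minimum-cost transportation problem with the small-integer cost matrix $(\max_{l,l'}M_{l,l'})-M_{k,k'}$.

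For (ii), $M$ need not be recomputed from scratch at each of the $N$ leave-out updates of a coupled sweep: one update moves a single data point between two clusters of one chain, changing $O(1)$ entries of $M$, so $M$ can be maintained incrementally at $O(1)$ amortized cost per update (the one-time cost of building $M$ before the first sweep being amortized away, as in the main text), and the $K\times K'$ cost matrix is written out in $O(\maxK^2)$ time; this is the data-structure use that avoids the naive $O(N\maxK^2)$ of evaluating \eqref{eqn:distance_between_partitions} pairwise. For (iii), the reduced problem is a transportation instance on a bipartite network with $O(\maxK)$ nodes, which Orlin's strongly polynomial algorithm \citep{orlin1993faster} solves in $O(\maxK^3\log\maxK)$ time, dominating part (ii) and giving the claim.

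I expect the one delicate point to be the complexity accounting in (iii): the generic min-cost-flow bound $O(m\log n\,(m+n\log n))$ with $n=\Theta(\maxK)$ nodes but $m=\Theta(\maxK^2)$ arcs only yields $O(\maxK^4\log\maxK)$, so the proof must invoke the form of the bound specialized to transportation problems of this size — equivalently, the fact that such an instance is solved by $O(\maxK)$ successive-shortest-path augmentations, each costing $O(\maxK^2\log\maxK)$ with a binary heap — in order to reach $O(\maxK^3\log\maxK)$. A secondary matter is pedantic bookkeeping for the fresh-cluster atoms $\pi^K,\nu^{K'}$ in step (i) and confirming that the claimed $O(1)$-per-update maintenance of $M$ is consistent with the leave-one-out semantics of \Cref{alg:coupled-sweep}.
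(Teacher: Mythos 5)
Your proposal is correct and follows essentially the same route as the paper's proof: both subtract the constant $\HamDist(X_{-n},Y_{-n})$ from the cost matrix (legitimate since every coupling has total mass one), reduce the residual cost to cluster sizes and intersection sizes maintained incrementally in $O(1)$ per update so the $K\times K'$ matrix is assembled in $O(\maxK^2)$ time, and hand the resulting transportation problem to Orlin's algorithm; your further observation that the marginal terms $2|A_k|+2|B_{k'}|$ are also coupling-independent is a harmless extra simplification the paper does not make. The delicate point you flag is resolved exactly as you suspect: the transportation problem is an \emph{uncapacitated} min-cost flow instance, for which Orlin's bound is $O(n\log n\,(m+n\log n))=O(\maxK^3\log\maxK)$ with $n=O(\maxK)$ nodes and $m=O(\maxK^2)$ arcs -- a point the paper's proof leaves implicit by simply citing the algorithm.
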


\begin{proof}[Proof of \Cref{prop:OT-solution}]
	To find $\mu^{k,k'}$, we need to solve the optimization problem that is \Cref{eq:OT-coupling}. 
	However, given just the marginal distributions ($\tIdx{a}{k},\tIdx{b}{k'}$ and $\rIdx{\pi}{k}, \rIdx{\nu}{k'}$), we do not have enough ``data'' in the optimization problem, since the pairwise distances $\HamDist(\rIdx{\pi}{k}, \rIdx{\nu}{k'})$ for $k \in [K], k' \in [K']$, which define the objective function, are missing.
	We observe that it is not necessary to compute $\HamDist(\rIdx{\pi}{k}, \rIdx{\nu}{k'})$; it suffices to compute $\HamDist(\rIdx{\pi}{k}, \rIdx{\nu}{k'}) - c$ for some constant $c$ in the sense that the solution to the optimization problem in \Cref{eq:OT-coupling} is unchanged when we add a constant value to every distance.
	In particular, because for any coupling $\coup$, $\sum_{k=1}^{K} \sum_{k^\prime=1}^{K^\prime}\rIdx{u}{k,k^\prime} = 1$,
	\begin{equation} \label{eqn:reframed_OT_problem}
			\gamma^* \defined \argmin_{\text{couplings } \coup} \sum_{k=1}^{K} \sum_{k^\prime=1}^{K^\prime} \rIdx{u}{k,k^\prime} \HamDist(\rIdx{\pi}{k}, \rIdx{\nu}{k^\prime})  \\
		= \argmin_{\text{couplings } \coup}  \sum_{k=1}^{K} \sum_{k^\prime=1}^{K^\prime} \rIdx{u}{k,k^\prime} [\HamDist(\rIdx{\pi}{k}, \rIdx{\nu}{k^\prime}) - c].
	\end{equation}
	
	We now show that if we set $c = \HamDist(\loo{\pi}{n}, \loo{\nu}{n})$, then we can compute all $O(\maxK^2)$ values of $\HamDist(\rIdx{\pi}{k}, \rIdx{\nu}{k^\prime}) - c$ in $O(\maxK^2)$ time.
	First, if we use $A^k_n$ and $B^{k'}_n$ to denote the elements of $\pi^k$ and $\nu^{k'}$ respectively, containing data-point $n$, then for any $n$ we may write
	\begin{equation} \label{eqn:additional_distance}
		\begin{aligned}
			d(\rIdx{\pi}{k},\rIdx{\nu}{k'}) = d(\loo{\pi}{n}, \loo{\nu}{n}) + 
			\left[|A_n^k|^2 - (|A^k_n|-1)^2 \right] &+ \left[|B^{k'}_n|^2 - (|B^{k'}_n|-1)^2 \right] + {} \\
			&-2\left[ |A_n^k \cap B_n^{k'}|^2 - ( |A_n^k \cap B_n^{k'}|^2  - 1)^2 \right].
		\end{aligned}
	\end{equation}
	Simplifying some terms, we can also write
	\begin{equation*}
		\begin{aligned}
			d(\rIdx{\pi}{k},\rIdx{\nu}{k'}) &= d(\loo{\pi}{n}, \loo{\nu}{n}) + 
			\left[2|A_n^k| - 1\right] + \left[2|B_n^{k'}| - 1 \right] 
			-2\left[ 2|A_n^k \cap B_n^{k'}| - 1\right] \\
			&=d(\loo{\pi}{n}, \loo{\nu}{n}) + 2 \left[
			|A_n^k| + |B_n^{k'}| - 2|A_n^k\cap B_n^{k'}| \right],
		\end{aligned}
	\end{equation*}
	which means
	\begin{equation*}
			d(\rIdx{\pi}{k},\rIdx{\nu}{k'}) - d(\loo{\pi}{n}, \loo{\nu}{n}) = 2 \left[
			|A_n^k| + |B_n^{k'}| - 2|A_n^k\cap B_n^{k'}| \right].
	\end{equation*}

	At first it may seem that this still does not solve the problem, as directly computing the size of the set intersections is $O(N)$ (if cluster sizes scale as $O(N)$).
	However, \Cref{eqn:reframed_OT_problem} is just our final stepping stone.
	If we additionally keep track of sizes of intersections at every step, updating them as we adapt the partitions, it will take only constant time for each update.
	As such, we are able to form the $K \times K'$ matrix of $\HamDist(\rIdx{\pi}{k}, \rIdx{\nu}{k'}) - c$ in $O(\maxK^2)$ time.
	
	With the array of $\HamDist(\rIdx{\pi}{k}, \rIdx{\nu}{k^\prime}) - \HamDist(\loo{\pi}{n}, \loo{\nu}{n})$, we now have enough ``data'' for the optimization problem that is the optimal transport. 	
	Regardless of $N$, the optimization itself may be computed in $O(\maxK^3 \log \maxK)$ time with Orlin's algorithm \citep{orlin1993faster}.
\end{proof}

The next proposition provides estimates of the time taken to construct the Gibbs conditionals ($\GibbsTime(N,K)$) for Gaussian DPMM. 

\begin{proposition}[Gibbs conditional runtime with dense $\Sigma_0$, $\Sigma_1$] \label{prop:dense-runtime}
	Suppose the covariance matrices $\Sigma_0$ and $\Sigma_1$ are dense i.e.\ the number of non-zero entries is $\Theta(D^2).$
	The standard implementation takes time $\beta(N,K) = O(ND + KD^3)$. 
	By spending $O(D^3)$ time precomputing at beginning of sampling, and using additional data structures, the time can be reduced to $\beta(N,K) = O(KD^2 + D^3)$.
\end{proposition}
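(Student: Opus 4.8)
The plan is to analyze the two parts of the proposition — the $O(ND + KD^3)$ bound for a direct implementation and the $O(KD^2 + D^3)$ bound after precomputation — by unpacking the form of the Gibbs conditional $\GibbsCond{n}(\cdot \given \loo{X}{n})$ for the conjugate Gaussian DPMM of \Cref{s-sec:random-cluster}. By \Cref{eq:GibbsCond-as-Dirac-sum}, this conditional is a mixture of at most $K$ Dirac masses: one for each existing cluster $A$ in $\loo{X}{n}$ (re-inserting $n$ into $A$) and one for a fresh singleton. The weight on the ``$n$ joins $A$'' atom is proportional to $|A|$ times the posterior predictive density of $\data_n$ given the data currently in $A$, which under conjugacy is a Gaussian with covariance $\Sigma_{\mathrm{post},A} + \Sigma_1$ where $\Sigma_{\mathrm{post},A} = (\Sigma_0^{-1} + |A|\,\Sigma_1^{-1})^{-1}$, and mean $\Sigma_{\mathrm{post},A}(\Sigma_0^{-1}\mu_0 + \Sigma_1^{-1} S_A)$ with $S_A := \sum_{j \in A} \data_j$; the ``new cluster'' atom uses the prior predictive $\distNorm{\mu_0}{\Sigma_0 + \Sigma_1}$. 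So computing $\GibbsCond{n}$ amounts to: (i) obtaining the sufficient statistics $(|A|, S_A)$ of every cluster; and (ii) for each of the $O(K)$ atoms, forming the relevant $D \times D$ covariance, factoring it, and evaluating one multivariate Gaussian density.

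For the direct bound, I would argue: step (i) recomputes each $S_A$ from scratch, and since the clusters partition $[N]$ this touches $N$ points at $O(D)$ each, for $O(ND)$; step (ii) costs $O(D^3)$ per atom to invert/Cholesky-factor the dense predictive covariance plus $O(D^2)$ to evaluate the density, so $O(KD^3)$ over all atoms. Summing gives $\beta(N,K) = O(ND + KD^3)$.

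For the optimized bound, two ingredients do the work. First, maintain the sufficient statistics $(|A|, S_A)$ (and the assignment of each index to its cluster) as data structures carried between Gibbs steps: leaving point $n$ out of, or inserting it into, a cluster updates exactly one $S_A$ in $O(D)$, and deleting an emptied cluster or creating a new singleton is also $O(D)$, so step (i) no longer costs $O(ND)$ per Gibbs step. Second, precompute once, in $O(D^3)$ time, a simultaneous diagonalization of the two precision matrices — an invertible $U$ with $U^\top \Sigma_0^{-1} U = I_D$ and $U^\top \Sigma_1^{-1} U = \Lambda$ diagonal — together with the constant $U^\top \Sigma_0^{-1}\mu_0$, the matrix $U^\top \Sigma_1^{-1}$, $\log|\det U|$, and the factorization of the fixed matrix $\Sigma_0 + \Sigma_1$. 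Then for a cluster of size $m$ we have the identities $(\Sigma_0^{-1} + m\Sigma_1^{-1})^{-1} = U\,\mathrm{diag}\!\big(\tfrac{1}{1+m\lambda_i}\big)U^\top$ and $\Sigma_{\mathrm{post},A} + \Sigma_1 = U\,\mathrm{diag}\!\big(\tfrac{1}{1+m\lambda_i} + \tfrac{1}{\lambda_i}\big)U^\top$, so in the coordinates $\widehat x := U^{-1} x$ the predictive covariance is diagonal; evaluating the predictive density of $\data_n$ at a cluster then costs $O(D)$ once its transformed mean is known, and that transformed mean equals $\mathrm{diag}\!\big(\tfrac{1}{1+m\lambda_i}\big)\,(U^\top\Sigma_0^{-1}\mu_0 + U^\top\Sigma_1^{-1}S_A)$, computable in $O(D^2)$ per cluster (the cost of $U^\top\Sigma_1^{-1}S_A$, or $O(D)$ if one additionally caches this vector per cluster and updates it incrementally). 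Transforming $\data_n$ once is $O(D^2)$. Summing over the $O(K)$ atoms gives $O(KD^2)$ per Gibbs step, and adding the one-time $O(D^3)$ precomputation — which, as noted in \Cref{s-sec:runtime}, does not affect the amortized complexity — yields $\beta(N,K) = O(KD^2 + D^3)$.

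The routine parts are the linear-algebra identities; the step I expect to be the main obstacle is verifying that the simultaneous diagonalization really collapses every per-cluster operation involving $\Sigma_0$ or $\Sigma_1$ (covariance inversion, log-determinant, quadratic form, posterior mean) down to $O(D^2)$ — in particular that the predictive covariance $\Sigma_{\mathrm{post},A}+\Sigma_1$, and not just the posterior covariance, diagonalizes in the same basis — and confirming that the incremental bookkeeping of sufficient statistics never reintroduces a hidden $O(N)$ or $O(ND)$ factor into the per-step cost beyond the allowed one-time setup. A secondary point to check is the handling of the ``new cluster'' atom and of clusters that empty out during a sweep, but these are $O(D)$ and do not affect the bound.
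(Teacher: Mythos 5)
Your analysis of the standard implementation is essentially identical to the paper's: $O(ND)$ to rebuild the per-cluster sums $S_A$ from scratch plus $O(KD^3)$ to form and factor each predictive covariance, with the prior-predictive atom costing a further $O(D^3)$. For the optimized bound, however, you take a genuinely different route. The paper's proof simply caches: it maintains $(|A|, S_A)$ incrementally (eliminating the $O(ND)$ term) and stores each cluster's posterior covariance, precision $(\Sigma_c+\Sigma_1)^{-1}$, and determinant, observing that leaving out point $n$ changes the size of exactly one cluster, so only one $O(D^3)$ matrix inversion is needed per step; the per-step cost is then $O(KD^2 + D^3)$ with the cubic term recurring every step. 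You instead precompute a simultaneous congruence diagonalization $U^\top\Sigma_0^{-1}U = I$, $U^\top\Sigma_1^{-1}U = \Lambda$, which is valid (take $\Sigma_0^{-1}=LL^\top$ and let $U=L^{-\top}V$ with $V$ orthogonally diagonalizing $L^{-1}\Sigma_1^{-1}L^{-\top}$), and your key worry checks out: since $\Sigma_1 = U\Lambda^{-1}U^\top$, the predictive covariance $\Sigma_{\mathrm{post},A}+\Sigma_1 = U\,\mathrm{diag}\bigl(\tfrac{1}{1+m\lambda_i}+\tfrac{1}{\lambda_i}\bigr)U^\top$ is diagonal in the same basis, so determinants cost $O(D)$ and quadratic forms $O(D)$ after the one $O(D^2)$ transform of $\data_n$ and of each cluster's statistic. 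Your version is strictly stronger -- the cubic cost becomes purely one-time setup and the recurring per-step cost is $O(KD^2)$, and you store only a size and a $D$-vector per cluster rather than $K$ dense precision matrices -- at the price of a non-orthogonal change of basis and reliance on $\Sigma_0,\Sigma_1$ being fixed throughout sampling (which they are here). Both arguments establish the stated $O(KD^2+D^3)$ bound.
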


\begin{proof}[{Proof of \Cref{prop:dense-runtime}}]
	We first mention the well-known posterior formula of a Gaussian model with known covariances \citep[Chapter 2.3]{bishop2006pattern}. Namely, if $\mu \sim \distNorm{\mu_0}{\Sigma_0}$ and $\data_1, \data_2, \ldots \data_M \given \mu \distind \distNorm{\mu}{\Sigma_1}$ then $\mu \given \data_1, \ldots, \data_M$ is a Gaussian with covariance $\Sigma_c$ and mean $\mu_c$ satisfying
	\begin{equation} \label{eq:Gaussian-posterior}
		\begin{aligned}
			\Sigma_c &= (\Sigma_0^{-1} + M \Sigma_1^{-1} )^{-1} \\
			\mu_c &= \Sigma_c \left(\Sigma_0^{-1}  \mu_0 + \Sigma_1^{-1} \left[\sum_{m=1}^{M}\data_m\right]  \right).
		\end{aligned}
	\end{equation}

	Suppose $|\Pi| = K.$
	Based on the expressions for the Gibbs conditional in \Cref{eq:dpmm-cond}, the computational work involved for a held-out observation $\data_{n}$ can be broken down into three steps 
	\begin{description}
		\item[1.] Evaluating the prior likelihood $\Gaussian(\data_n \given \mu_0, \Sigma_0 + \Sigma_1)$.
		\item[2.] For each cluster $c \in \Pi(-n)$, compute $\mu_c$, $\Sigma_c$, $(\Sigma_c + \Sigma_1)^{-1}$ and the determinant of $(\Sigma_c + \Sigma_1)^{-1}$.
		\item[3.] For each cluster $c \in \Pi(-n)$, evaluate the likelihood $\Gaussian(\data_n \given \mu_c, \Sigma_c + \Sigma_1)$.
	\end{description}

	\paragraph{Standard implementation.} 
	The time to evaluate the prior $\Gaussian(\data_n \given \mu_0, \Sigma_0 + \Sigma_1)$ is $O(D^3)$, as we need to compute the precision matrix $(\Sigma_0 + \Sigma_1)^{-1}$ and its determinant. 
	With time $O(KD^3)$, we can compute the various cluster-specific covariances, precisions and determinants (where $D^3$ is the cost for each cluster).
	To compute the posterior means $\mu_c$, we need to compute the sums $\sum_{j} \data_j$ for all clusters, which takes $O(ND)$, as we need to iterate over all $D$ coordinates of all $N$ observations.
	The time to evaluate $\Gaussian(\data_n \given \mu_c, \Sigma_c + \Sigma_1)$ across clusters is $O(KD^2)$.
	Overall this leads to $O(ND + KD^3)$ runtime.
	
	\paragraph{Optimized implementation.}
	By precomputing $(\Sigma_0 + \Sigma_1)^{-1}$ (and its determinant) once at the beginning of sampling for the cost of $O(D^3)$, we can solve Step 1 in time $O(D^2)$, since that is the time to compute the quadratic form involved in the Gaussian likelihood.
	Once we have the mean and precisions from Step 2, the time to complete Step 3 is $O(KD^2)$: for each cluster, it takes time $O(D^2)$ to evaluate the likelihood, and there are $K$ clusters.
	It remains to show how much time it takes to solve Step 2. 
	We note that quantities like $\Sigma_0^{-1} \mu_0$ and $\Sigma_1^{-1}$ can also be computed once in $O(D^3)$ time at start up.
	
	Regarding the covariance $\Sigma_c$ and the precisions $(\Sigma_c + \Sigma_1)^{-1}$, at all points during sampling, the posterior covariance $\Sigma_c$ only depends on the number of data points in the cluster (\Cref{eq:Gaussian-posterior}), and leaving out data point $n$ only changes the number of points in exactly one cluster. 
	Hence, if we maintain $\Sigma_c$, $(\Sigma_c + \Sigma_1)^{-1}$ (and their determinants) for all clusters $c \in \Pi$, when a data point is left out, we only need to update one such $\Sigma_c$ and $(\Sigma_c + \Sigma_1)^{-1}$.
	Namely, suppose that $\Pi = \{ \rIdx{A}{1}, \rIdx{A}{2}, \ldots, \rIdx{A}{K}\}$. 
	We maintain the precisions are $(\Sigma(\rIdx{A}{1}) + \Sigma_1)^{-1}, (\Sigma(\rIdx{A}{2}) + \Sigma_1)^{-1}, \ldots, (\Sigma(\rIdx{A}{k}) + \Sigma_1)^{-1}$.
	Let $\rIdx{A}{j}$ be the cluster element that originally contained $n$.
	When we leave out data point $n$ to form $\Pi(-n)$, the only precision that needs to be changed is $(\Sigma(\rIdx{A}{j}) + \Sigma_1)^{-1}$.
	Let the new cluster be $\widetilde{\rIdx{A}{j}}$: the time to compute $\Sigma(\widetilde{\rIdx{A}{j}})$, $(\Sigma(\widetilde{\rIdx{A}{j}}) + \Sigma_1)^{-1}$, and its determinant is $O(D^3)$.

	Regarding the means $\mu_c$, the use of data structures similar to the covariances/precisions removes the apparent need to do $O(ND)$ computations.	
	If we keep track of $\sum_{i \in c} \data_i$ for each cluster $c$, then when data point $n$ is left out, we only need to update $\sum_{i \in c}\data_i$ for the cluster $c$ that originally contained $n$, which only takes $O(D)$.
	With the $\sum_{j}\data_j$ in place, to evaluate each of $K$ means $\mu_c$ takes $O(D^2)$; hence the time to compute the means is $O(KD^2).$
	Overall, the time spent in Step 2 is $O(KD^2 + D^3)$, leading to an overall $O(KD^2 + D^3)$ runtime.
\end{proof}

The standard implementation is used, for instance, in \citet{valpine2017programming} (see the CRP\_conjugate\_dmnorm\_dmnorm() function from NIMBLE's source code).
\citet{miller2018mixture} uses the standard implementation in the univariate case (see the Normal.jl function). 

\begin{corollary}[Gibbs conditional runtime with diagonal $\Sigma_0$, $\Sigma_1$] \label{prop:diagonal-runtime}
	Suppose the covariances $\Sigma_0$ and $\Sigma_1$ are diagonal matrices i.e.\ there are only $\Theta(D)$ non-zero entries.
	Then a standard implementation takes time $\beta(N,K) = O(ND)$.
	Using additional data structures, the time can be reduced to $\beta(N,K) = O(KD)$.
\end{corollary}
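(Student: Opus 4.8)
The plan is to specialize the three-step accounting from the proof of \Cref{prop:dense-runtime} to the diagonal case, observing that diagonality collapses every linear-algebra primitive from $O(D^3)$ or $O(D^2)$ down to $O(D)$. Recall the work needed to form the Gibbs conditional $\GibbsCond{n}$ for a held-out $\data_n$: (1) evaluate the prior predictive $\Gaussian(\data_n\given\mu_0,\Sigma_0+\Sigma_1)$; (2) for each cluster $c\in\loo{\Pi}{n}$ compute $\mu_c$, $\Sigma_c$, $(\Sigma_c+\Sigma_1)^{-1}$ and $\det(\Sigma_c+\Sigma_1)$; and (3) for each such cluster evaluate $\Gaussian(\data_n\given\mu_c,\Sigma_c+\Sigma_1)$. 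The key observation I would record first is that when $\Sigma_0,\Sigma_1$ are diagonal, so is every matrix above: by \Cref{eq:Gaussian-posterior} we have $\Sigma_c=(\Sigma_0^{-1}+|c|\Sigma_1^{-1})^{-1}$, and sums, inverses, and determinants of diagonal matrices (as well as Gaussian quadratic forms against them) are all $O(D)$.

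With that in hand, the standard-implementation bound is immediate: Step 1 costs $O(D)$; Step 2 costs $O(D)$ per cluster, hence $O(KD)$ across clusters, \emph{except} that assembling the cluster sums $\sum_{i\in c}\data_i$ required for the $\mu_c$'s by iterating over all observations costs $O(ND)$; and Step 3 costs $O(KD)$. Since $K\le N$, the total is $O(ND+KD)=O(ND)$, which is the first claim.

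For the optimized bound, the plan mirrors the data-structure argument in \Cref{prop:dense-runtime}: maintain, as auxiliary state, the running sum $\sum_{i\in c}\data_i$ for each current cluster (and, if convenient, the diagonal of $(\Sigma_c+\Sigma_1)^{-1}$ and its determinant, which depend only on $|c|$). Leaving out $\data_n$ changes the membership of a single cluster, so updating this state costs $O(D)$; afterwards Steps 1--3 cost $O(D)$, $O(KD)$, $O(KD)$ respectively, giving an amortized per-conditional cost of $O(KD)$. As in the dense case, the one-time cost of initializing the cluster sums is $O(ND)$ but does not enter the amortized complexity.

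I do not anticipate a real obstacle: this is a direct specialization of \Cref{prop:dense-runtime}, and the only point needing care is verifying that each maintained quantity can indeed be refreshed in $O(D)$ after a single leave-out—again exactly the bookkeeping already carried out in the dense proof.
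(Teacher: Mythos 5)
Your proposal is correct and follows essentially the same route as the paper: specialize the three-step accounting from the dense-covariance proposition, note that diagonality reduces every inverse, determinant, and quadratic form to $O(D)$, identify the $O(ND)$ cluster-sum assembly as the bottleneck in the standard implementation, and remove it in the optimized version by maintaining per-cluster running sums updated in $O(D)$ per leave-out. No gaps.
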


\begin{proof}[{Proof of \Cref{prop:diagonal-runtime}}]
	When the covariance matrices are diagonal, we do not incur the cubic costs of inverting $D \times D$ matrices.
	The breakdown of computational work is similar to the proof of \Cref{prop:dense-runtime}.
	
	\paragraph{Standard implementation.} 
	The covariances and precision matrices each take only time $O(D)$ to compute: as there are $K$ of them, the time taken is $O(KD)$.
	To compute the posterior means $\mu_c$, we iterate through all coordinates of all observations in forming the sums $\sum_{j} \data_j$, leading to $O(ND)$ runtime.
	Time to evaluate the Gaussian likelihoods are just $O(D)$ because of the diagonal precision matrices.
	Overall the runtime is $O(ND)$.
	
	\paragraph{Optimized implementation.}
	By avoiding the recomputation of $\sum_{j} \data_j$ from scratch, we reduce the time taken to compute the posterior means to $O(KD)$.
	Overall the runtime is $O(KD)$.
\end{proof}

\section{LABEL-SWITCHING} \label{apd:label_switching}
\subsection{Example 1} \label{s-sec:switching-ex1}

Suppose there are 4 data points, indexed by 1,2,3,4.
The labeling of the $X$ chain is $z_1 = [1,2,2,2]$, meaning that the partition is $\{\{1\}, \{2,3,4\}\}$.
The labeling of the $Y$ chain is $z_2 = [2,1,1,2]$, meaning that the partition is $\{\{1,4\}, \{2,3\}\}$.
The Gibbs sampler temporarily removes the data point $4$.
For both chains, the remaining data points is partitioned into $\{\{1\}, \{2,3\}\}$.
We denote $\rIdx{\pi}{1} = \{\{1,4\}, \{2,3\} \}$, $\rIdx{\pi}{2} = \{\{1\}, \{2,3,4\} \}$, $\rIdx{\pi}{3} = \{\{1\}, \{2,3\}, \{4\} \}$: in the first two partitions, the data point is assigned to an existing cluster while in the last partition ,the data point is in its own cluster. 
There exists three positive numbers $\rIdx{a}{1}, \rIdx{a}{2}, \rIdx{a}{3}$, summing to one, such that
\begin{equation*}
		\GibbsCond{4}(\cdot \given \loo{X}{4}) = \GibbsCond{4}(\cdot \given \loo{Y}{4}) = \sum_{k=1}^{3}  \rIdx{a}{k} \delta_{\rIdx{\pi}{k}}(\cdot).
\end{equation*}

Since the two distributions on partitions are the same, couplings based on partitions like $\coupfnOTwN$ will make the chains meet with probability $1$ in the next step.
However, this is not true under labeling--based couplings like maximal or common RNG.
In this example, the same partition is represented with different labels under either chains.
The $X$ chain represents $\rIdx{\pi}{1}, \rIdx{\pi}{2}, \rIdx{\pi}{3}$ with the labels $1$, $2$, $3$, respectively.
Meanwhile, the $Y$ chain represents $\rIdx{\pi}{1}, \rIdx{\pi}{2}, \rIdx{\pi}{3}$ with the labels $2$, $1$, $3$, respectively. 
Let $z_X$ be the label assignment of the data point in question (recall that we have been leaving out $4$) under the $X$ chain. 
Similarly we define $z_Y.$ 
Maximal coupling maximizes the probability that $z_X = z_Y.$
However, the coupling that results in the two chains $X$ and $Y$ meeting is the following
\begin{equation*}
	\Pr(z_X = u, z_Y = v) = \begin{cases}
	\rIdx{a}{3}	&\text{ if } u = v = 3 \\
	\rIdx{a}{1}	&\text{ if } u = 1, v = 2 \\
	\rIdx{a}{2}	&\text{ if } u = 2, v = 1 \\
	0	&\text{ otherwise}.
	\end{cases}
\end{equation*}
In general, $\rIdx{a}{1} \neq \rIdx{a}{2}$, meaning that the maximal coupling is different from this coupling that causes the two chains to achieve the same partition after updating the assignment of $4.$
A similar phenomenon is true for common RNG coupling. 

\subsection{Example 2} \label{s-sec:switching-ex2}
For the situation in \Cref{s-sec:switching-ex1}, the discussion of Ju et al.\ from \citet{tancredi2020unified} proposes a relabeling procedure to better align the clusters in the two partitions before constructing couplings.
Indeed, if $z_2$ were relabeled $[1,2,2,1]$ (the label of each cluster is the smallest data index in that cluster), then upon the removal of data point 4, both the label-based and partition-based couplings would agree. 
However, such a relabeling fix still suffer from label-switching problem in general, since the smallest data index does not convey much information about the cluster.
For concreteness, we demonstrate an example where the best coupling from minimizing label distances is different from the best coupling minimizing partition distances.

Suppose there are 6 data points, indexed from 1 through 6.
The partition of the $X$ chain is $\{ \{1,3,4\}, \{2,5,6\}  \}$.
The partition of the $Y$ chain is $\{ \{1,5,6\}, \{2,3,4\}  \}$.
Using the labeling rule from above, the label vector for $X$ is $z_X = [1,2,1,1,2,2]$ while that for $Y$ is $z_Y = [1,2,2,2,1,1]$. 
The Gibbs sampler temporarily removes the data point $1$.
The three next possible states of the $X$ chain are the partitions $\nu_1, \nu_2, \nu_3$ where $\nu_1 = \{\{1,3,4\}, \{2,5,6\} \}$, $\nu_2 = \{\{3,4\}, \{1,2,5,6\} \}$ and $\nu_3 = \{\{3,4\}, \{2,5,6\}, \{1\} \}$.
The labelings of data points $2$ through $6$ for all three partitions are the same; the only different between the labeling vectors are the label of data point $1$: for $\nu_1$, $z_X(1) = 1$, for $\nu_2$, $z_X(1) = 2$ and for $\nu_1$, $z_X(1) = 3$.
On the $Y$ side, the three next possible states of the $Y$ chain are the partitions $\mu_1, \mu_2, \mu_3$ where $\mu_1 = \{\{1,5,6\}, \{2,3,4\} \}$, $\mu_2 = \{\{5,6\}, \{1,2,3,4\} \}$ and $\mu_3 = \{\{5,6\}, \{2,3,4\}, \{1\}\}$.
As for the labeling of $1$ under $Y$, for $\mu_1$, $z_Y(1) = 1$, for $\mu_2$, $z_Y(1) = 2$ and for $\mu_3$, $z_Y(1) = 3$.
Suppose that the marginal assignment probabilities are the the following:
\begin{description}
	\item[$\bullet$] $\Pr(X = \nu_1) = \Pr(X = \nu_2) = 0.45, \Pr(X = \nu_3) = 0.1$.
	\item[$\bullet$] $\Pr(Y = \mu_1) = \Pr(Y = \mu_2) = 0.45, \Pr(Y = \mu_3) = 0.1$.
\end{description}

Under label-based couplings, since $\Pr(z_X(1) = a) = \Pr(z_Y(1) = a)$ for $a \in [1,2,3]$, the coupling that minimizes the distance between the labels will pick $\Pr(z_X(1) = z_Y(1)) = 1$, which means the following for the induced partitions:
\begin{equation} \label{app-eq:label-coupling}
	\Pr(X = \nu, Y = \mu) = \begin{cases}
		0.45 &\text{ if } \nu = \nu_1, \mu = \mu_1 \\
		0.45 &\text{ if } \nu = \nu_2, \mu = \mu_2 \\
		0.1 &\text{ if } \nu = \nu_3, \mu = \mu_3 \\
	\end{cases}.
\end{equation}

Under the partition-based transport coupling, the distance between partitions (\Cref{eqn:distance_between_partitions}) is the following.
\begin{center}
	\begin{tabular}{c | c c c} 
		 & $\mu_1$ & $\mu_2$ & $\mu_3$ \\ [0.5ex] 
		\hline
		$\nu_1$  & 16 & 10 & 12 \\ 
		\hline
		$\nu_2$ & 10 & 16 & 14 \\
		\hline
		$\nu_3$ & 12 & 14 & 8 \\
	\end{tabular}
\end{center}
Notice that the distances $d(\nu_1,\mu_1)$ and $d(\nu_2,\mu_2)$ are actually larger than $d(\nu_1,\mu_2)$ and $d(\nu_2,\mu_1)$: in other words, the label-based coupling from \Cref{app-eq:label-coupling} proposes a coupling with larger-than-minimal expected distance.
In fact, solving the transport problem, we find that the coupling that minimizes the expected partition distance is actually
\begin{equation} \label{app-eq:partition-coupling}
	\Pr(X = \nu, Y = \mu) = \begin{cases}
		0.45 &\text{ if } \nu = \nu_1, \mu = \mu_2 \\
		0.45 &\text{ if } \nu = \nu_2, \mu = \mu_1 \\
		0.1 &\text{ if } \nu = \nu_3, \mu = \mu_3 \\
	\end{cases}.
\end{equation}

\section{TRIMMING} \label{apd:trimming}
We consider the motivating situation in \Cref{ex:outliers}. 
This is a case where trimming outliers before taking the average yields a more accurate estimator (in terms of mean squared error) than the regular sample mean.
For reference, the RMSE of an estimator $\widehat{\mu}$ of a real-valued unknown quantity $\mu$ is
\begin{equation*}
	\sqrt{\mathbb{E} \|\widehat{\mu} - \mu\|^2}.
\end{equation*} 

\begin{example}[Mixture distribution with large outliers] \label{ex:outliers}
	For $\mu > 0$, $p < 1$, consider the mixture distribution $(0.5-p/2)\distNorm{-\mu}{1} + p\distNorm{0}{1} +  (0.5-p/2)\distNorm{\mu}{1}$. The mean is $0.$ The variance is $1 + (1-p)\mu^2.$ Therefore, the RMSE of the sample mean computed using $\nProc$ iid draws is $\sqrt{1 + (1-p)\mu^2}/\sqrt{\nProc}.$ 
\end{example}
In \Cref{ex:outliers}, increasing $\mu$, which corresponds to larger outlier magnitude, increases the RMSE.

In trimmed means (\Cref{s-sec:var-redux}), the quantity $\alpha$ determines how much trimming is done. 
Intuitively, for \Cref{ex:outliers}, if we trim about $0.5-p/2$ of the top and bottom samples from the mixture distribution in \Cref{ex:outliers}, what remain are roughly samples from $\mathcal{N}(0,1)$.
The mean of these samples should have variance only $1/\nProc$, resulting in an RMSE which does not suffer from large $\mu.$

In \Cref{fig:synthetic}, we illustrate the improvement of trimmed mean over sample mean for problems like \Cref{ex:outliers}. 
We set $p = 0.9$, $\mu = 7,$ and $\alpha = 1.2(0.5-p/2)$. 
Similar to \Cref{fig:key}, RMSE is estimated by adding another level of simulation to capture the variability across aggregates.
The left panel shows that RMSE of trimmed mean is smaller than that of sample mean.
The right panel explains why that is the case. 
Here, we box plot the trimmed mean and sample mean, where the randomness is from the iid Monte Carlo draws from the target mixture for $J = 1000.$
The variance of trimmed mean is smaller than that of sample mean, which matches the motivation for trimming. 
\begin{figure}[t]
	\centering
	\includegraphics[scale=0.8]{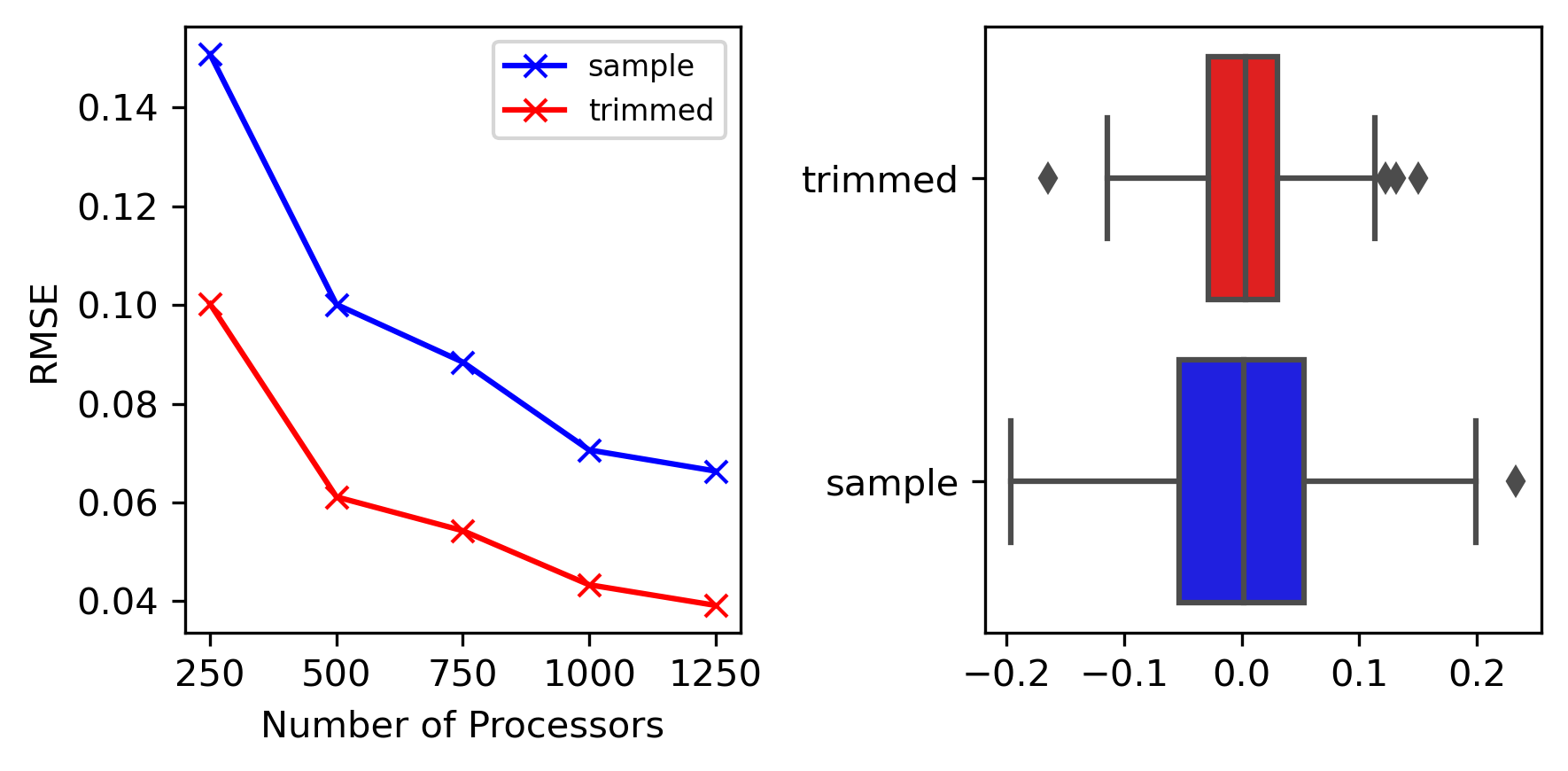}
	\caption{Trimmed mean has better RMSE than sample mean on \Cref{ex:outliers}. Left panel plots RMSE versus $\nProc.$ Right panel gives boxplots $\nProc = 1000.$}
	\label{fig:synthetic}
\end{figure}

For other situations where there exist better estimators than the sample mean, we refer to the literature on Stein's paradox \citep{stigler1990galtonian}.

\section{ADDITIONAL EXPERIMENTAL DETAILS}\label{apd:experimental_setup}
\subsection{Target Distributions And Gibbs Conditionals} \label{apd:targets}
\paragraph{DPMM.}
Denote $\Gaussian(x \given \mu, \Sigma)$ to be the Gaussian density at $x$ for a Gaussian distribution with mean $\mu$ and covariance $\Sigma.$ 
For the Gaussian DPMM from \Cref{s-sec:random-cluster}, the Gibbs conditional have the form
\begin{equation} \label{eq:dpmm-cond}
	\Pr(z_n = c \given \Pi(-n), \data_{1:N}) = \begin{cases}
		\beta\frac{\alpha}{N-1+\alpha} \Gaussian(\data_n \given \mu_0, \Sigma_0 + \Sigma_1)  &\text{ if } c \text{ is new cluster} \\
		\beta \frac{\text{size of cluster c}}{N-1+\alpha} \Gaussian(\data_n \given \mu_c, \Sigma_c + \Sigma_1) &\text{ if } c \text{ is an existing cluster}, \\
	\end{cases}
\end{equation}
where $\beta$ is a normalization constant so that $\sum_{c} \Pr(z_n = c \given \Pi(-n), \data_{1:N}) = 1$,
$c$ is an index into the clusters that comprise $\Pi(-n)$ (or a new cluster),
$\mu_c$ and $\Sigma_c$ are the posterior parameters of the cluster indexed by $c.$ See \citet{neal2000markov} for derivations.

\paragraph{Graph coloring.}
Let $G$ be an undirected graph with vertices $V=[N]$ and edges $E \subset V \otimes V,$ and let $Q=[q]$ be set of $q$ colors.
A graph coloring is an assignment of a color in $Q$ to each vertex satisfying that the endpoints of each edge have different colors.
We here demonstrate an application of our method to a Gibbs sampler which explores the uniform distribution over valid $q-$colorings of $G$,
i.e.\ the distribution which places equal mass on ever proper coloring of $G$.

To employ \Cref{alg:coupled-sweep}, for this problem we need only to characterise the p.m.f.\ on partitions of the vertices implied by the uniform distribution on its colorings.
A partition corresponds to a proper coloring only if no two adjacent vertices are in the element of the partition.
As such, we can write
$$
p_{\Pi_N}(\pi) \propto \ind{|\pi| \le q \text{  and  }  A(\pi)_{i,j}=1 \rightarrow (i, j) \not \in E, \ \forall i \ne j } { q \choose |\pi|} |\pi|!,
$$
where the indicator term checks that $\pi$ can correspond to a proper coloring
and the second term accounts for the number of unique colorings which induce the partition $\pi$.
In particular it is the product of the number of ways to choose $|\pi|$ unique colors from $Q$ ( ${q \choose |\pi|} :=\frac{q!}{|\pi|! (q-|\pi|)!}$) and the number of ways to assign those colors to the groups of vertices in $\pi$.

The Gibbs conditionals have the form
\begin{equation} \label{eq:graph-cond}
	\GibbsCond{n}(\Pi = y \given \loo{\Pi}{n}) =  \frac{ \frac{q!}{(q-|y|)!} }{\sum_{x \text{ consistent with } \loo{\Pi}{n}}  \frac{q!}{(q-|x|)! }} = \frac{ \frac{1}{(q-|y|)!} }{\sum_{x \text{ consistent with } \loo{\Pi}{n}}  \frac{1}{(q-|x|)! }}.
\end{equation}
In \Cref{eq:graph-cond}, $x$ and $y$ are partitions of the whole set of $N$ vertices. 

In implementations, to simulate from the conditional \Cref{eq:graph-cond}, it suffices to represent the partition with a color vector.
Suppose we condition on $\loo{\Pi}{n})$  i.e.\ when the colors for all but the $n$ vertex are fixed, and there are $q'$ unique colors that have been used ($q'$ can be strictly smaller than $q$).
$n$ can either take on a color in $[q']$ (as long as the color is not used by a neighbor), or take on the color $q' + 1$ (if $q' < q$). 
The transition probabilities are computed from the induced partition sizes $|x|$.

\subsection{General Markov Chain Settings} \label{apd:MC_setting}
\paragraph{Ground truth.} For clustering, we run {10} single-chain Gibbs samplers for {10,000} sweeps each; we discard the first {1,000} sweeps. For graph coloring, we also run {10} chains, but each for {100,000} sweeps and discard the first {10,000}. We compute an unthinned MCMC estimate from each chain and use the average across the 10 chains as ground truth. The standard errors across chains are very small. Dividing the errors by the purported ground truth yields values with magnitude smaller than $5 \times 10^{-3}$. In percentage error, this is less than $0.5\%$, which is orders of magnitude smaller than the percentage errors from coupled chains or naive parallel estimates.\footnote{The percentage errors for $\LCP$ are typically $0.01\%$, while percentage errors for co-clustering are typically $0.1\%$.}

\paragraph{Sampler initializations.} 
In clustering, we initialize each chain at the partition where all elements belong to the same element i.e.\ the one-component partition.
In graph coloring, we initialize the Markov chain by greedily coloring the vertices.
Our intuition suggests that coupling should be especially helpful relative to naively parallel chains when samplers require a large burn-in -- since slow mixing induces bias in the uncoupled chains. In general, one cannot know in advance if that bias is present or not, but we can try to encourage suboptimal initialization in our experiments to explore its effects. For completeness, we consider alternative initialization schemes, such as k-means, in \Cref{fig:kmeans=5-init}.

\paragraph{Choice of hyperparameters in aggregate estimates.} 
Recall that \Cref{eqn:unbiased_estimate} involves two free hyperparameters, $\burnin$ and $\minIter$, that we need to set.
A general recommendation from \citet[Section 3.1]{jacob2020unbiased} is to select $\minIter = 10 \burnin$ and $\burnin$ to be a large quantile of the meeting time distribution.
We take heed of these suggestions, but also prioritize $\minIter$'s that are small because we are interested in the time-limited regime. 
Larger $\minIter$ leads to longer compute times across both coupled chains and naively parallel chains, and the bias in naively parallel chains is more apparent for shorter $\minIter$: see \Cref{fig:diffMinIter}. 
In the naive parallel case, we discard the first $10\%$ of sweeps completed in any time budget as burn-in steps. 
In our trimmed estimates, we remove the most extreme $1\%$ of estimates (so $0.5\%$ in either directions).

\paragraph{Simulating many processes.} To quantify the sampling variability of the aggregate estimates (sample or trimmed mean across $\nProc$ processors), we first generate a large number ($\nEst = 180{,}000$) of coupled estimates $\cEst{j}$ (and $\nEst$ naive parallel estimates $\uEst{j}$, where the time to construct $\cEst{j}$ is equal to the time to construct $\uEst{j}$).\footnote{The best computing infrastructure we have access to has only $400$ processors, so we generate these $\nEst$ estimates by sequential running $nEst/400$ batches, each batch constructing $400$ estimates in parallel.}
For each $\nProc$, we batch up the $\nEst$ estimates in a consistent way across coupled chains and naive parallel, making sure that the equality between coupled wall time and naive parallel wall time is maintained.
There are $I = \nEst/\nProc$ batches.
For the $i$th batch, we combine $\cEst{j}$ (or $\uEst{j})$ for indices $j$ in the list $[(i-1)\nProc + 1,i\nProc]$ to form $H_{c,J}^{(i)}$ (or $H_{u,J}^{(i)}$) in the sense of \Cref{s-sec:accurate-estimation}.
By this batching procedure, smaller values of $\nProc$ have more batches $I$.
The largest $\nProc$ we consider for \textsc{gene}, \textsc{k-regular} and \textsc{abalone} is $2{,}750$ while that for \textsc{synthetic} and \textsc{seed} is $1{,}750$.
This mean the largest $\nProc$ has at least $57$ batches. 

To generate the survival functions (last column of \Cref{fig:composite}), we use {600} draws from the (censored) meeting time distribution by simulating {600} coupling experiments.

\subsection{Datasets Preprocessing, Hyperparameters, Dataset-Specific Markov Chain Settings} 
\paragraph{\textsc{gene} i.e.\ single-cell RNAseq.}
We extract $D=50$ genes with the most variation of $N=200$ cells.
We then take the log of the features, and normalize so that each feature has mean $0$ and variance $1$.
We target the posterior of the probabilistic model in \Cref{s-sec:random-cluster} with $\alpha = 1.0$, $\mu_0 = 0_D$, diagonal covariance matrices $\Sigma_0 = 0.5 I_D$, $\Sigma_1 = 1.3 I_D$.
Notably, this is a simplification of the set-up considered by \citet{prabhakaran2016dirichlet}, who work with a larger dataset and additionally perform fully Bayesian inference over these hyperparameters.
That the prior variance is smaller than the noise variance yields a ``challenging'' clustering problem, where the cluster centers themselves are close to each other and observations are noisy realizations of the centers. 
We set $\burnin = 30$ and $\minIter = 300.$

\paragraph{\textsc{seed} i.e.\ wheat seed measurements.}
The original dataset from \cite{charytanowicz2010complete} has 8 features; we first remove the ``target'' feature, which contains label information for supervised learning.
Overall there are $N = 210$ observations and $D = 7$ features.
We normalize each feature to have mean 0 and variance 1.
We target the posterior of the probabilistic model in \Cref{s-sec:random-cluster} with $\alpha = 1.0$, $\mu_0 = 0_D$, diagonal covariance matrices $\Sigma_0 = 1.0 I_D$, $\Sigma_1 = 1.0 I_D$.
We set $\burnin = 10$ and $\minIter = 100.$

\paragraph{\textsc{synthetic}.}
We generate $N = 300$ observations from a $4$-component mixture model in 2 dimensions.
The four cluster centers are $[-0.8,-0.8], [-0.8,0.8], [0.8,-0.8], [0.8,0.8]$
Each data point is equally likely to come from one of four components; the observation noise is isotropic, zero-mean Gaussian with standard deviation $0.5.$
These settings result in a dataset where the observations form clear clusters, but there is substantial overlap at the cluster boundaries -- see \Cref{sub-fig:dataViz-synthetic}.

On this data, we target the posterior of the probabilistic model in \Cref{s-sec:random-cluster} with $\alpha = 0.2$, $\mu_0 = 0_D$, diagonal covariance matrices $\Sigma_0 = 0.75 I_D$, $\Sigma_1 = 0.7 I_D$.
Different from \textsc{gene}, the prior variance is larger than the noise variance for \textsc{synthetic}. 
We set $\burnin = 10$, $\minIter = 100.$

\paragraph{\textsc{abalone} i.e.\ physical measurements of abalone specimens.}
The original dataset from \cite{nash1994population} has 9 features; we first remove the ``Rings'' feature, which contains label information for supervised learning, and the ``Sex'' feature, which contains binary information that is not compatible with the Gaussian DPMM generative model.
Overall there are $N = 4{,}177$ observations and $D = 7$ features.
We normalize each feature to have mean 0 and variance 1.
We target the posterior of the probabilistic model in \Cref{s-sec:random-cluster} with $\alpha = 1.0$, $\mu_0 = 0_D$, diagonal covariance matrices $\Sigma_0 = 2.0 I_D$, $\Sigma_1 = 2.0 I_D$.
We set $\burnin = 10$ and $\minIter = 100.$

\paragraph{\textsc{k-regular}.}
Anticipating that regular graphs are hard to color, we experiment with a 4-regular, 6-node graph -- see \Cref{sub-fig:dataViz-kRegular}.
The target distribution is the distribution over vertex partitions induced by uniform colorings using 4 colors.
We set $\burnin = 1$, $\minIter = 4$. 

\subsection{Visualizing Synthetic Data}
\Cref{fig:dataViz} visualizes the two synthetic datasets.
\begin{figure*}[t]
	\centering
	\begin{subfigure}[b]{0.45\linewidth}
		\centering
		\includegraphics[scale=1.0]{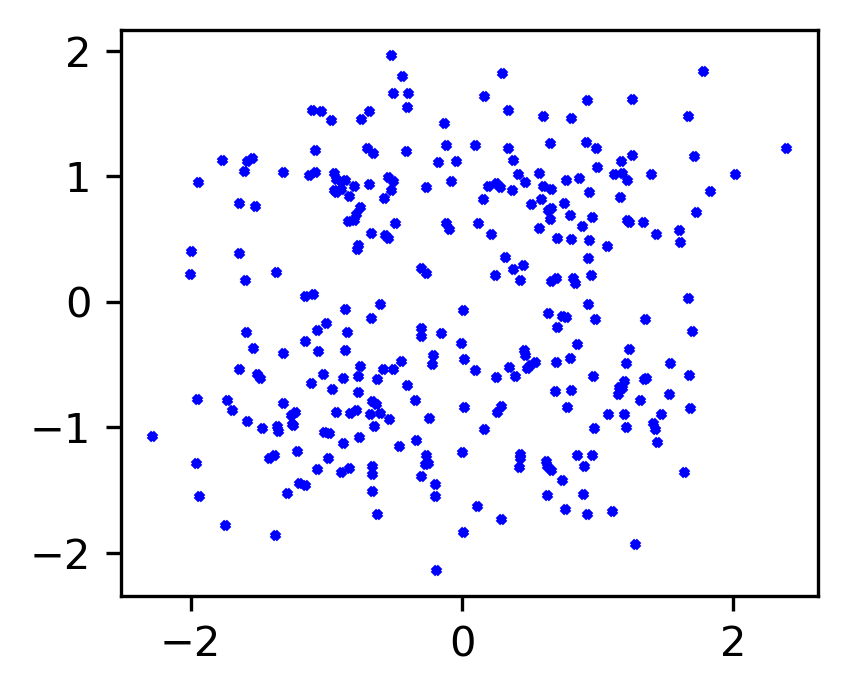}
		\caption{\textsc{synthetic} data}
		\label{sub-fig:dataViz-synthetic}
	\end{subfigure}
	\hfill
	\begin{subfigure}[b]{0.45\linewidth}
		\centering
		\includegraphics[scale=0.4]{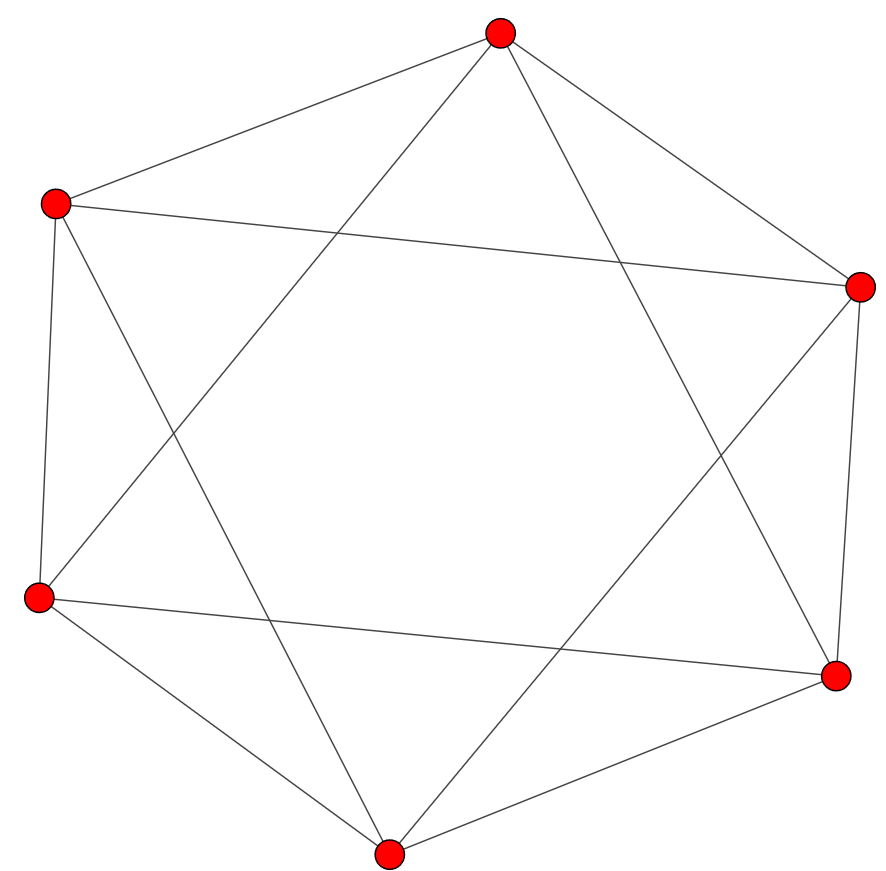}
		\caption{\textsc{k-regular} data}
		\label{sub-fig:dataViz-kRegular}
	\end{subfigure}%
	\caption{Visualizing synthetic data}
	\label{fig:dataViz}
\end{figure*}

\section{ALL FIGURES} \label{apd:all_figures}
\subsection{\textsc{gene}}
\Cref{fig:gene-all} shows results for $\LCP$ estimation on \textsc{gene} -- see \Cref{fig:co-cluster} for results on co-clustering.
The two panels that did not appear in \Cref{fig:composite} are the left panel of \Cref{sub-fig:gene-estimation} and the right panel of \Cref{sub-fig:gene-coupling}.
The left panel of \Cref{sub-fig:gene-estimation} is the same as \Cref{fig:key}: the y-axis plots the RMSE instead of the range of losses. 
As expected from the bias-variance decomposition, the RMSE for coupled estimates decreases with increasing $\nProc$ because of unbiasedness, while the RMSE for naive parallel estimates does not go away because of bias.
The right panel of \Cref{sub-fig:gene-coupling} plots typical $\HamDist$ distances between coupled chains under different couplings as a function of the number of sweeps done.
$\HamDist$ decreases to zero very fast under OT coupling, while it is possible for chains under maximal and common RNG couplings to be far from each other even after many sampling steps.

\begin{figure}[h]
	\centering
	\begin{subfigure}[h]{0.48\linewidth}
		\includegraphics[scale=0.67]{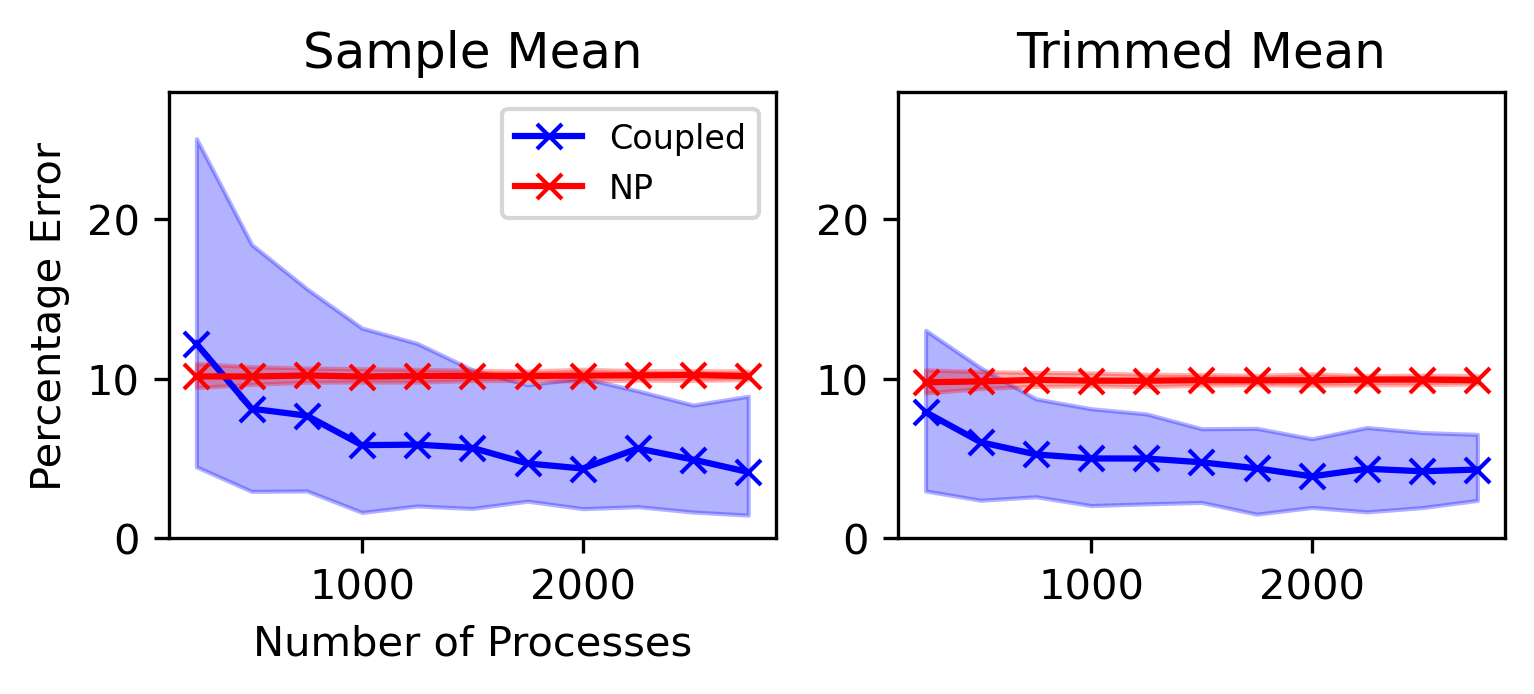}
		\caption{Losses}
		\label{sub-fig:gene-losses}
	\end{subfigure}\hspace{5mm}
	\begin{subfigure}[h]{0.48\linewidth}
		\includegraphics[scale=0.67]{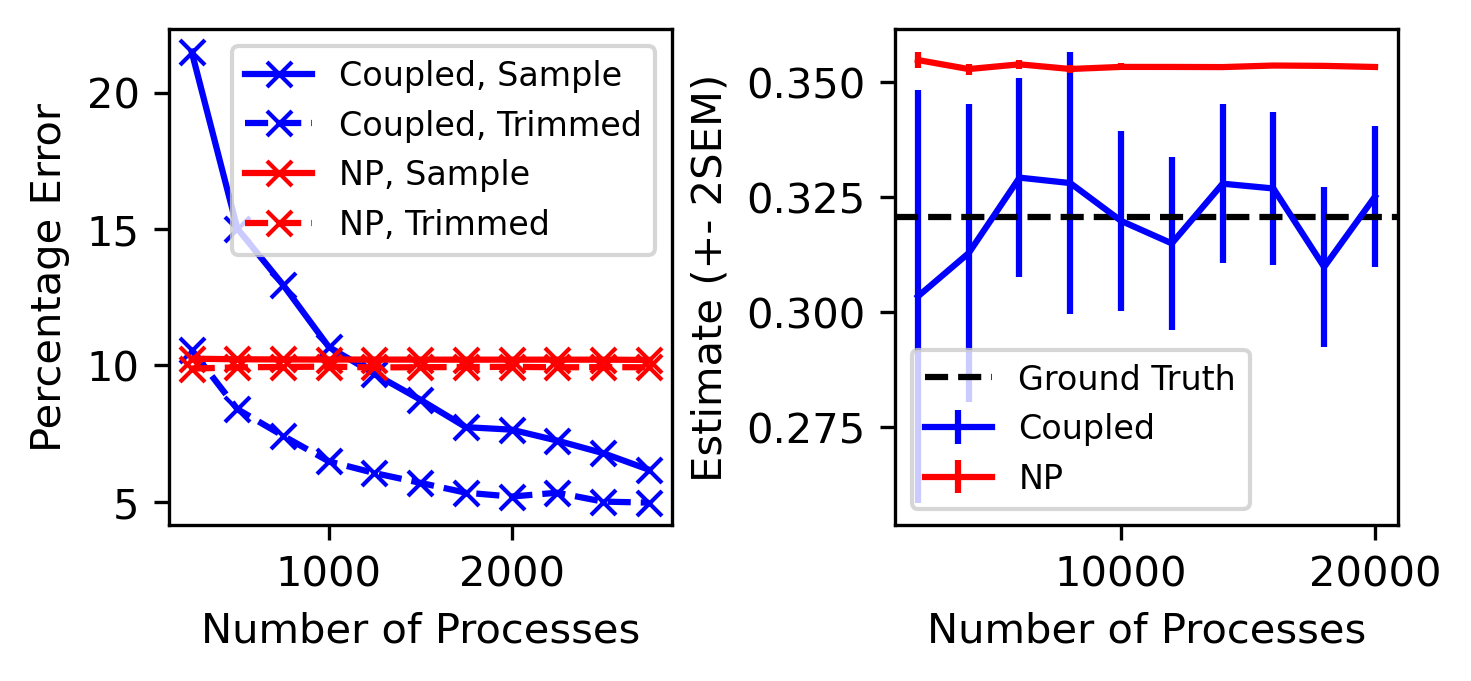}
		\caption{RMSE and intervals}
		\label{sub-fig:gene-estimation}
	\end{subfigure}
	\begin{subfigure}[b]{0.48\linewidth}
		\includegraphics[scale=0.67]{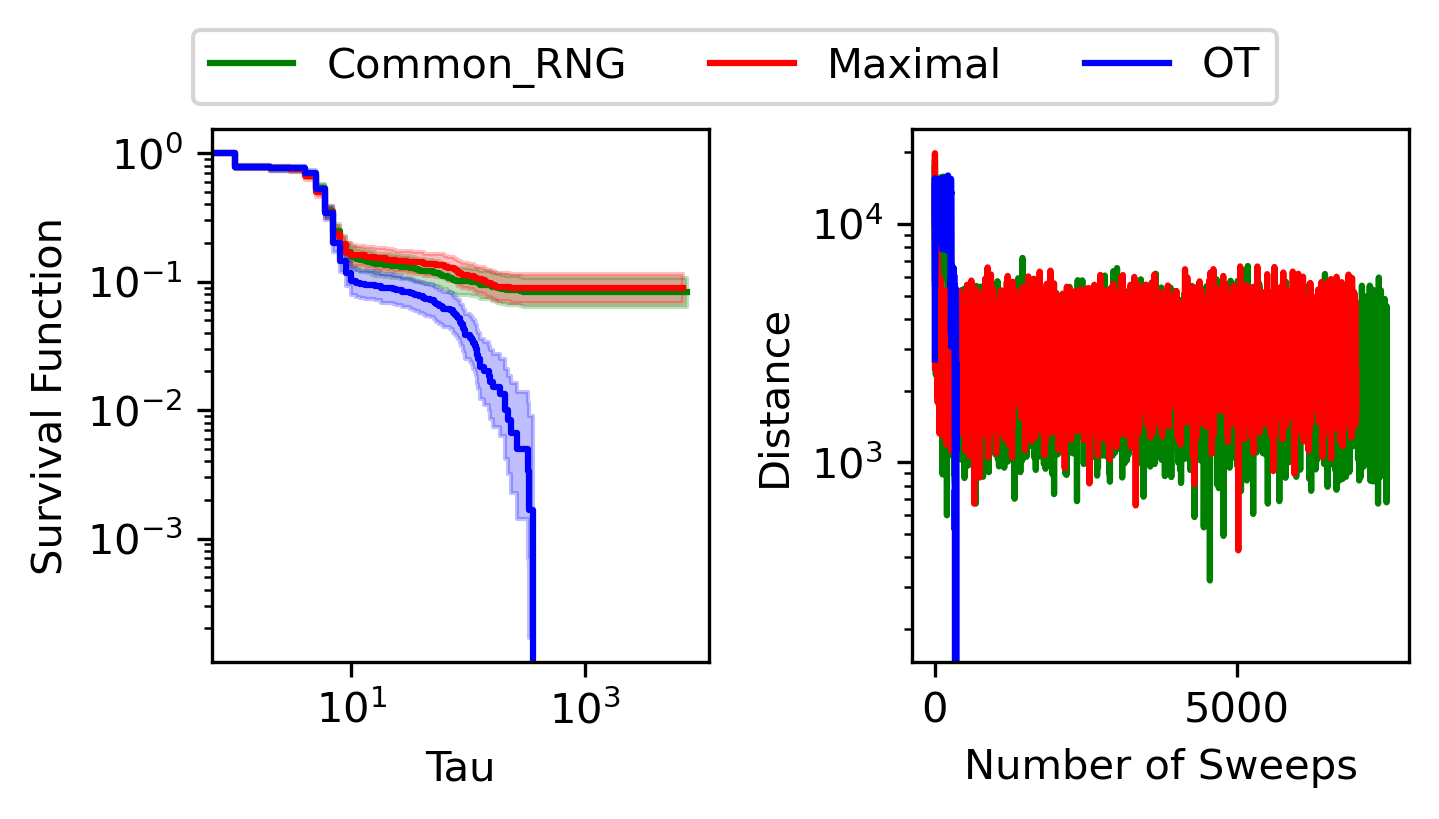}
		\caption{Coupling choice}
		\label{sub-fig:gene-coupling}
	\end{subfigure}
	\caption{Results on \textsc{gene}.}
	\label{fig:gene-all}
\end{figure}

\subsection{\textsc{synthetic}}
\Cref{fig:synthetic-all} shows results for $\LCP$ estimation on \textsc{synthetic} -- see \Cref{fig:co-cluster} for results on co-clustering.

\begin{figure}[h]
	\centering
	\begin{subfigure}[h]{0.48\linewidth}
		\includegraphics[scale=0.67]{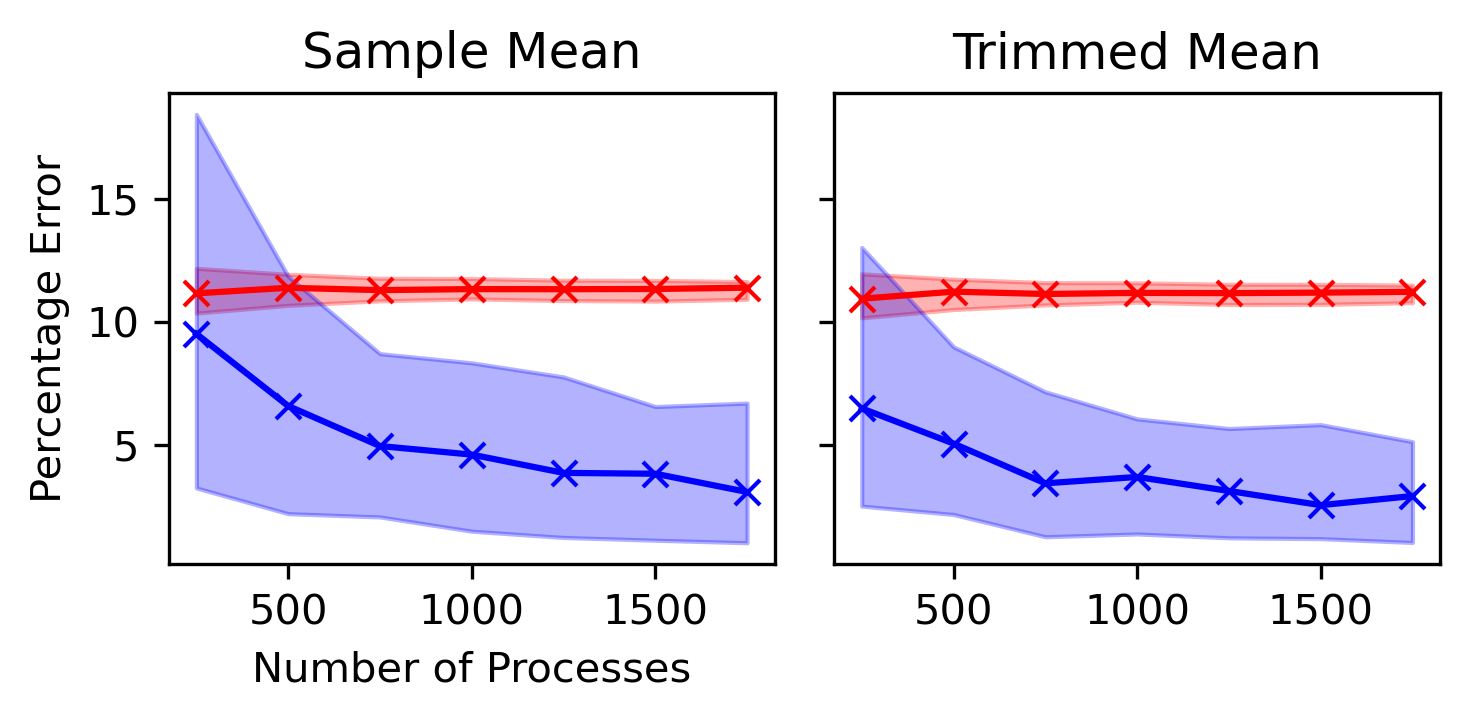}
		\caption{Losses}
		\label{sub-fig:synthetic-losses}
	\end{subfigure}
	\begin{subfigure}[h]{0.48\linewidth}
		\includegraphics[scale=0.67]{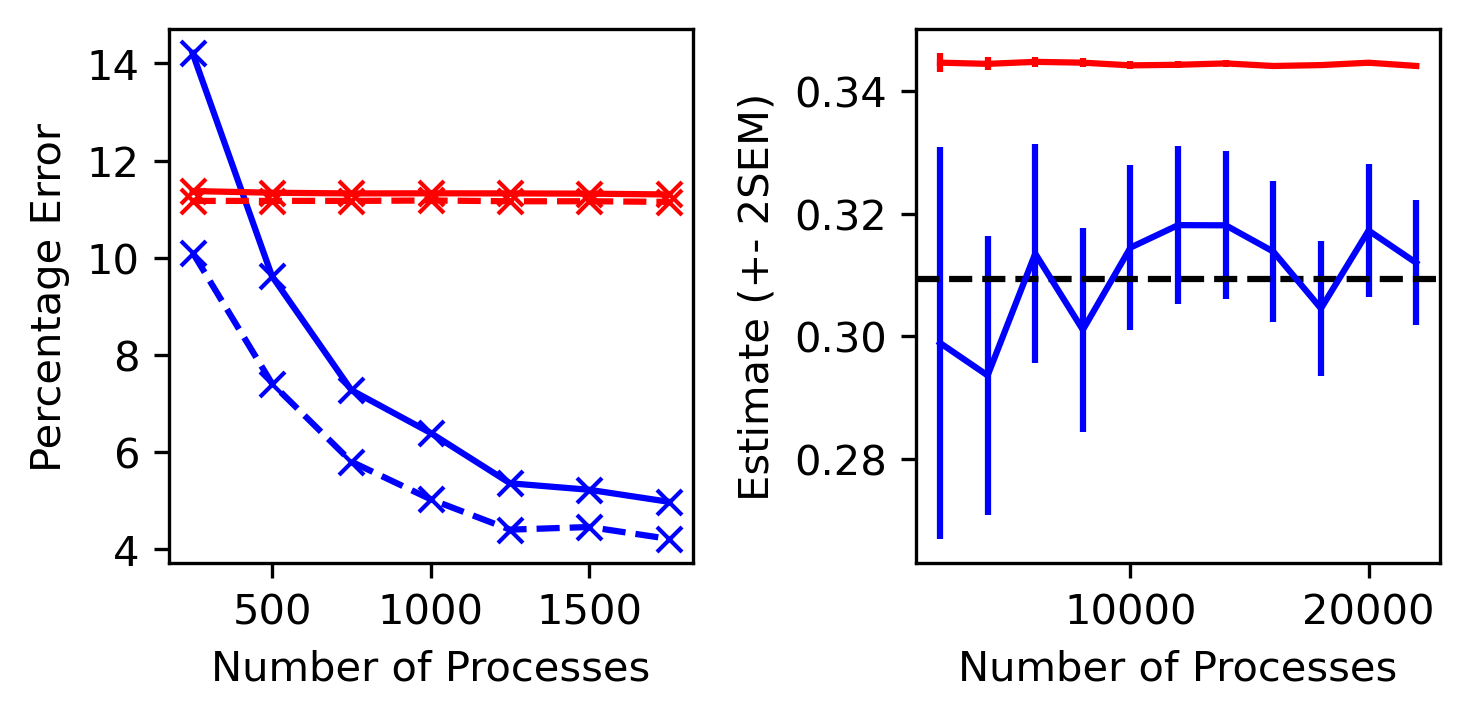}
		\caption{RMSE and intervals}
		\label{sub-fig:synthetic-estimation}
	\end{subfigure}
	\begin{subfigure}[b]{0.48\linewidth}
		\includegraphics[scale=0.67]{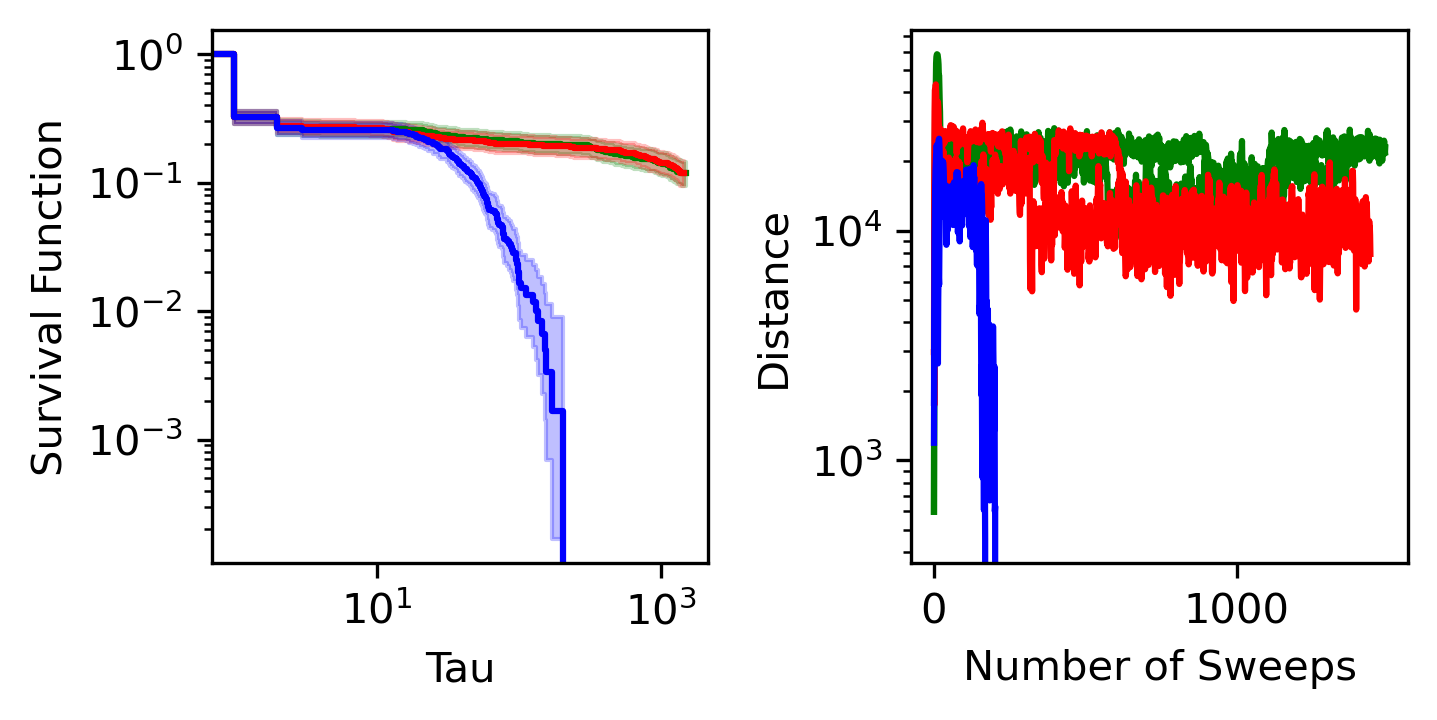}
		\caption{Coupling choice}
		\label{sub-fig:synthetic-coupling}
	\end{subfigure}
	\caption{Results on \textsc{synthetic}. Figure legends are the same as \Cref{fig:gene-all}. The results are consistent with \Cref{fig:composite}.}
	\label{fig:synthetic-all}
\end{figure}

\subsection{\textsc{seed}}
\Cref{fig:seed-all} shows results for $\LCP$ estimation on \textsc{seed} -- see \Cref{fig:co-cluster} for results on co-clustering.

\begin{figure}[h]
	\centering
	\begin{subfigure}[h]{0.48\linewidth}
		\includegraphics[scale=0.67]{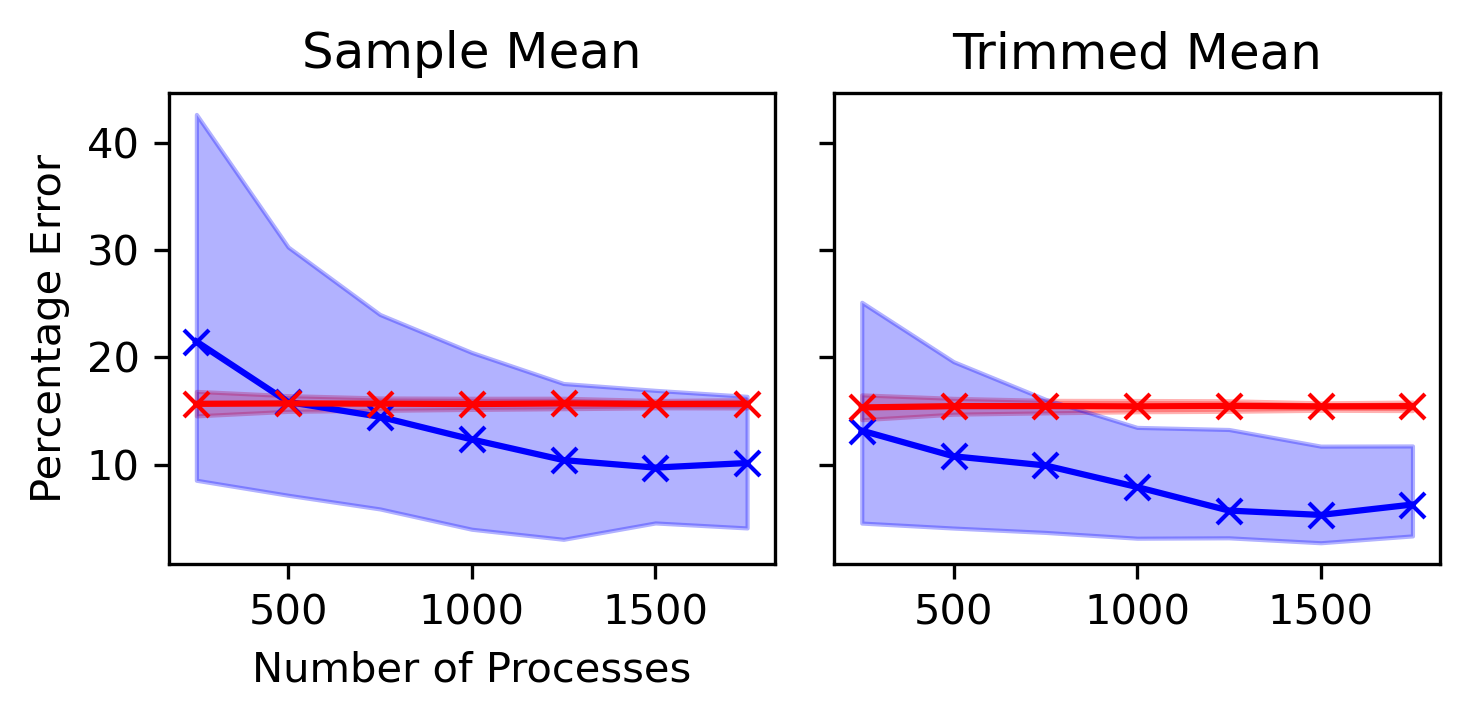}
		\caption{Losses}
		\label{sub-fig:seed-losses}
	\end{subfigure}
	\begin{subfigure}[h]{0.48\linewidth}
		\includegraphics[scale=0.67]{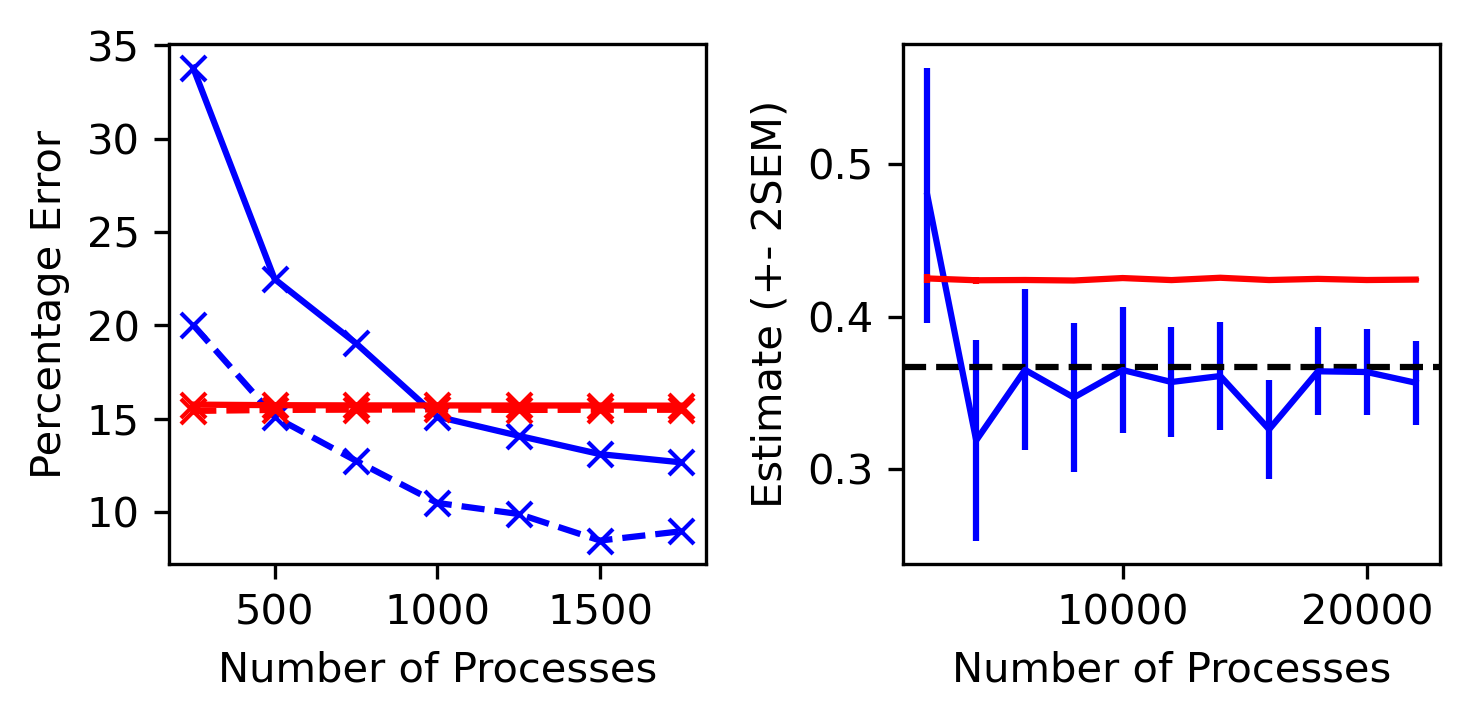}
		\caption{RMSE and intervals}
		\label{sub-fig:seed-estimation}
	\end{subfigure}
	\begin{subfigure}[b]{0.48\linewidth}
		\includegraphics[scale=0.67]{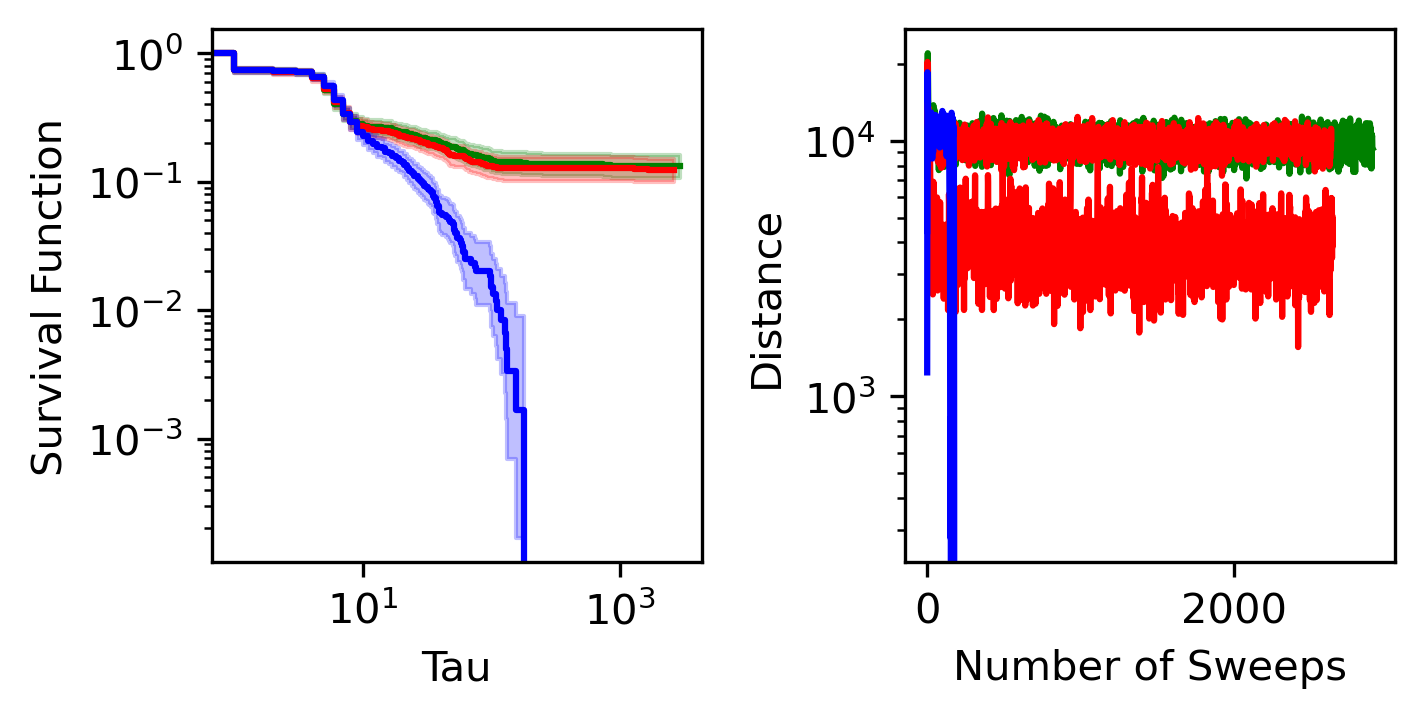}
		\caption{Coupling choice}
		\label{sub-fig:seed-coupling}
	\end{subfigure}
	\caption{Results on \textsc{seed}. Figure legends are the same as \Cref{fig:gene-all}. The results are consistent with \Cref{fig:composite}.}
	\label{fig:seed-all}
\end{figure}

\subsection{\textsc{abalone}}
\Cref{fig:abalone-all} shows results for $\LCP$ estimation on \textsc{abalone}.
In \Cref{sub-fig:abalone-losses} and \Cref{sub-fig:abalone-estimation}, we do not report results for the trimmed estimator with the default trimming amount ($0.01$ i.e.\ $1\%$). 
This trimming amount is too large for the application, and in \Cref{fig:trimming}, we show that trimming the most extreme $0.1\%$ yields much better estimation. 

\begin{figure}[h]
	\centering
	\begin{subfigure}[h]{0.3\linewidth}
		\includegraphics[scale=0.7]{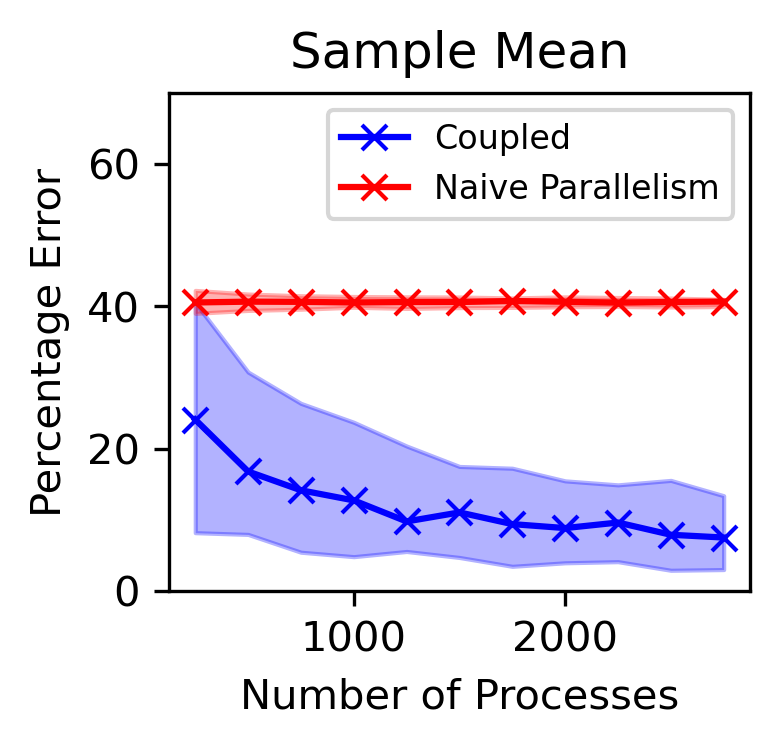}
		\caption{Losses}
		\label{sub-fig:abalone-losses}
	\end{subfigure}
	\begin{subfigure}[h]{0.48\linewidth}
		\includegraphics[scale=0.75]{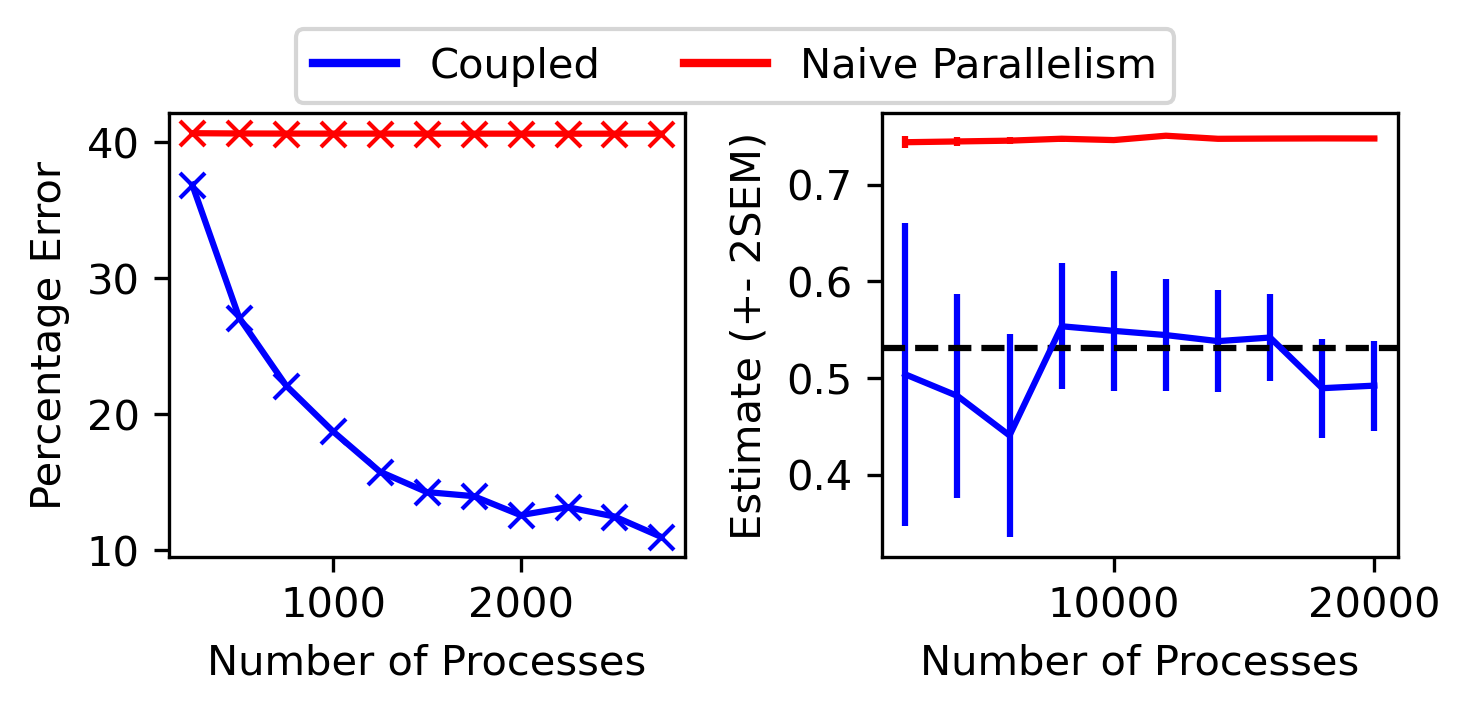}
		\caption{RMSE and intervals}
		\label{sub-fig:abalone-estimation}
	\end{subfigure}
	\begin{subfigure}[b]{0.48\linewidth}
		\includegraphics[scale=0.7]{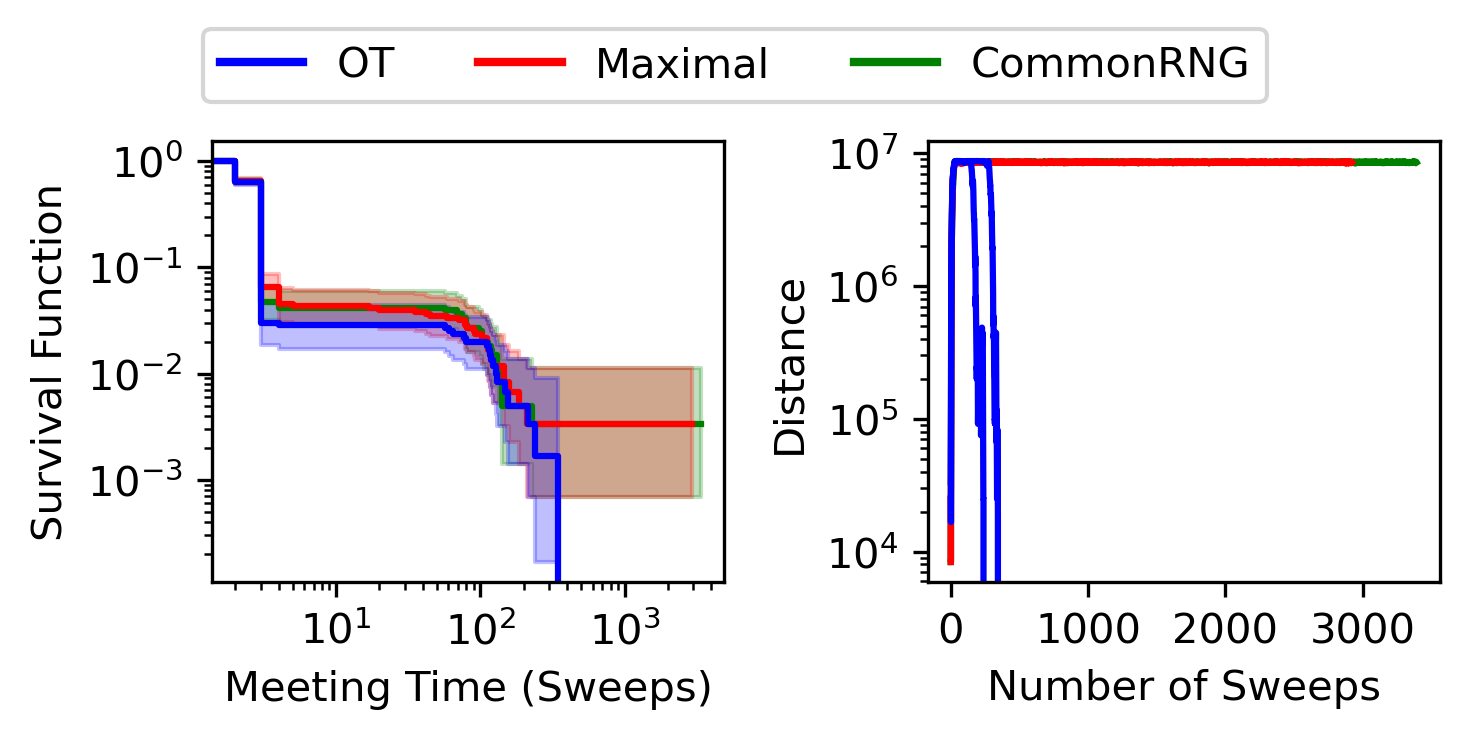}
		\caption{Coupling choice}
		\label{sub-fig:abalone-coupling}
	\end{subfigure}
	\caption{Results on \textsc{abalone}. Similar to \Cref{fig:composite}, coupled chains perform better than naive parallelism with more processes, and our coupling yields smaller meeting times than label-based couplings. See \Cref{fig:trimming} for the performance of trimmed estimators.}
	\label{fig:abalone-all}
\end{figure}

In \Cref{fig:trimming}, the first panel (from the left) plots the errors incurred using the trimmed mean with the default $\alpha = 1\%$.
Trimming of coupled chains is still better than naive parallelism, but worse than sample mean of coupled chains.
In the second panel, we use $\alpha = 0.1\%$, and the trimming of coupled chains performs much better. 
In the third panel, we fix the number of processes to be {2000} and quantify the RMSE as a function of the trimming amount (expressed in percentages). 
We see a gradual decrease in the RMSE as the trimming amount is reduced, indicating that this is a situation in which smaller trimming amounts is prefered.
\begin{figure}[h]
	\centering
	\includegraphics[scale=0.8]{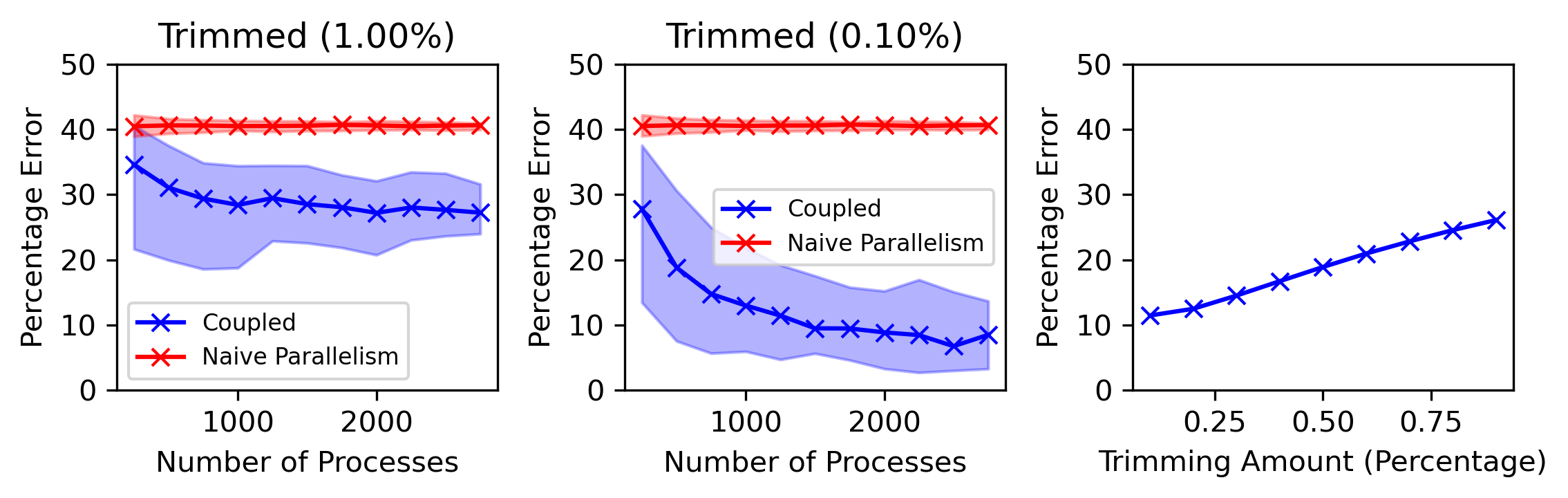}
	\caption{Effect of trimming amount on \textsc{abalone}.}
	\label{fig:trimming}
\end{figure}

\subsection{\textsc{k-regular}}
\Cref{fig:kRegular-all} shows results for $\CC{2}{4}$ estimation on \textsc{k-regular}.

\begin{figure}[h]
	\centering
	\begin{subfigure}[h]{0.48\linewidth}
		\includegraphics[scale=0.67]{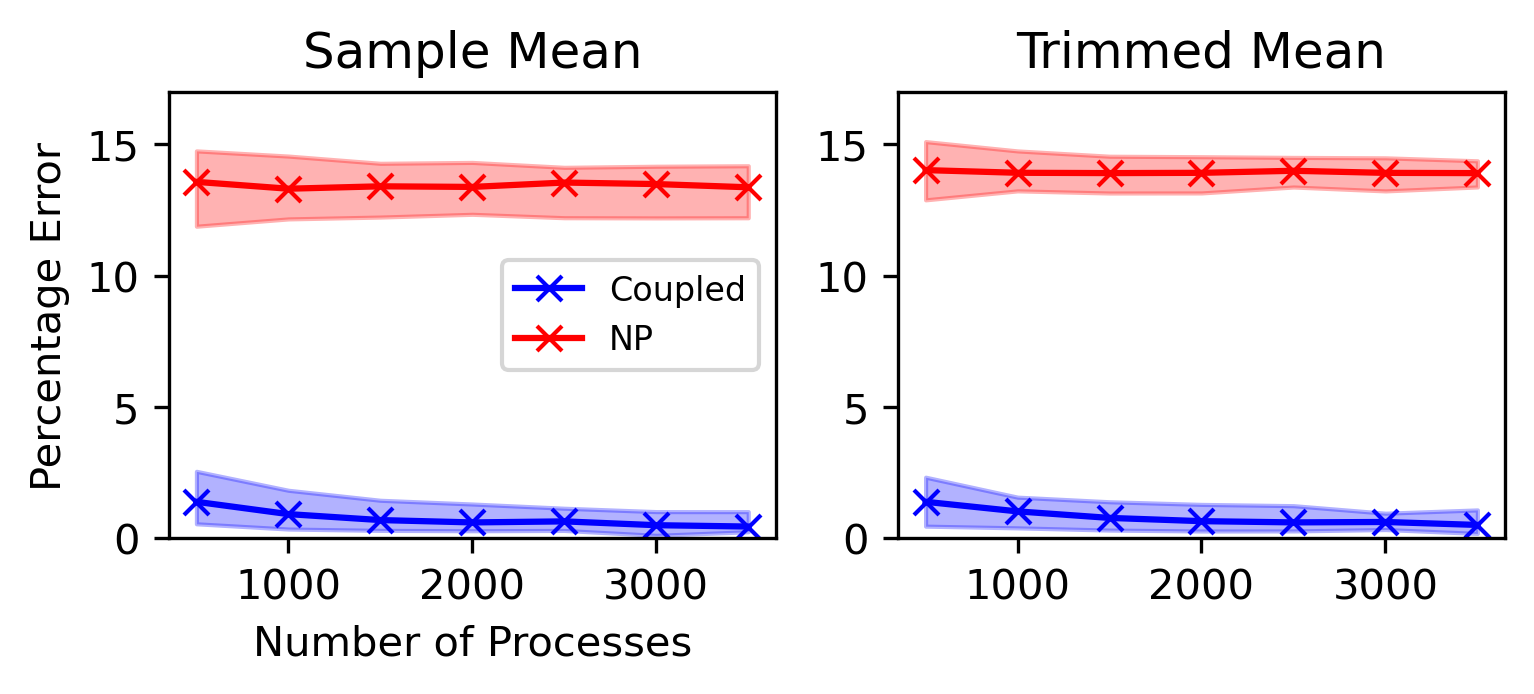}
		\caption{Losses}
		\label{sub-fig:kRegular-losses}
	\end{subfigure}\hspace{5mm}
	\begin{subfigure}[h]{0.48\linewidth}
		\includegraphics[scale=0.67]{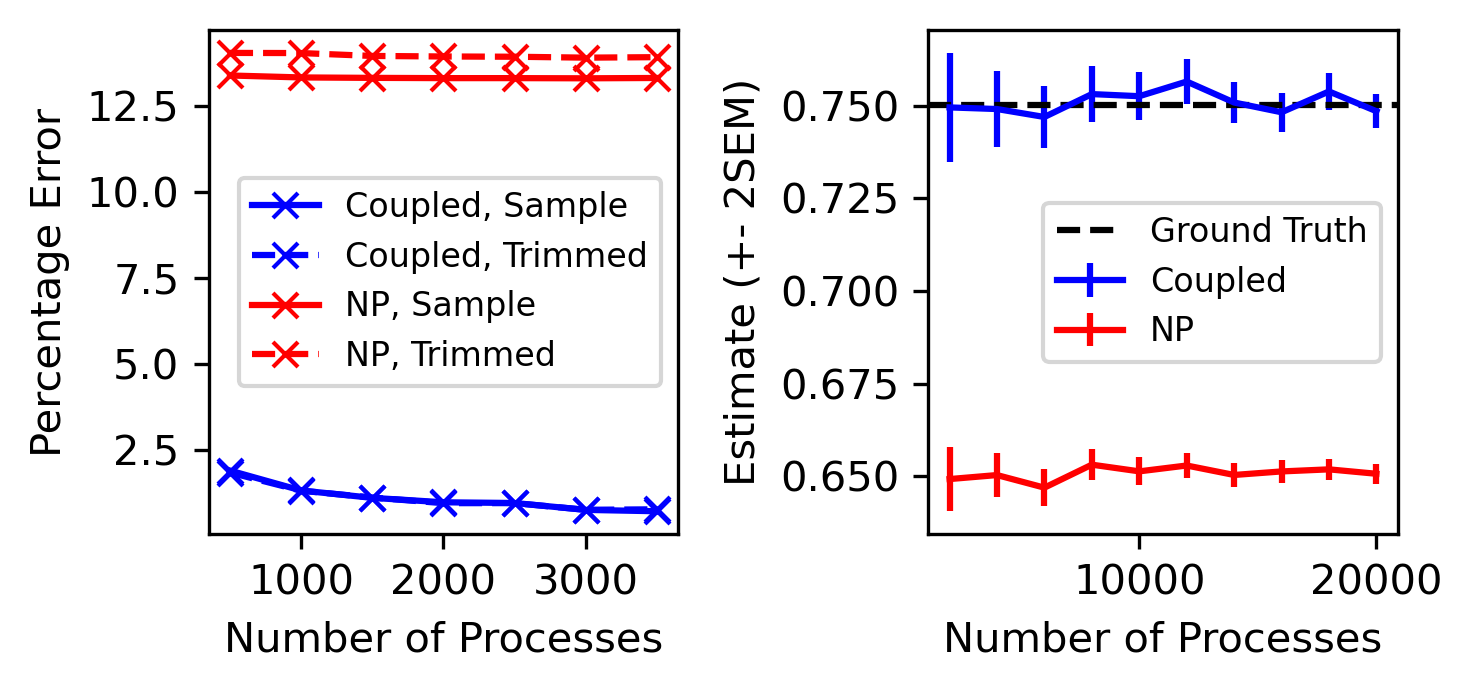}
		\caption{RMSE and intervals}
		\label{sub-fig:kRegular-estimation}
	\end{subfigure}
	\begin{subfigure}[b]{0.48\linewidth}
		\includegraphics[scale=0.67]{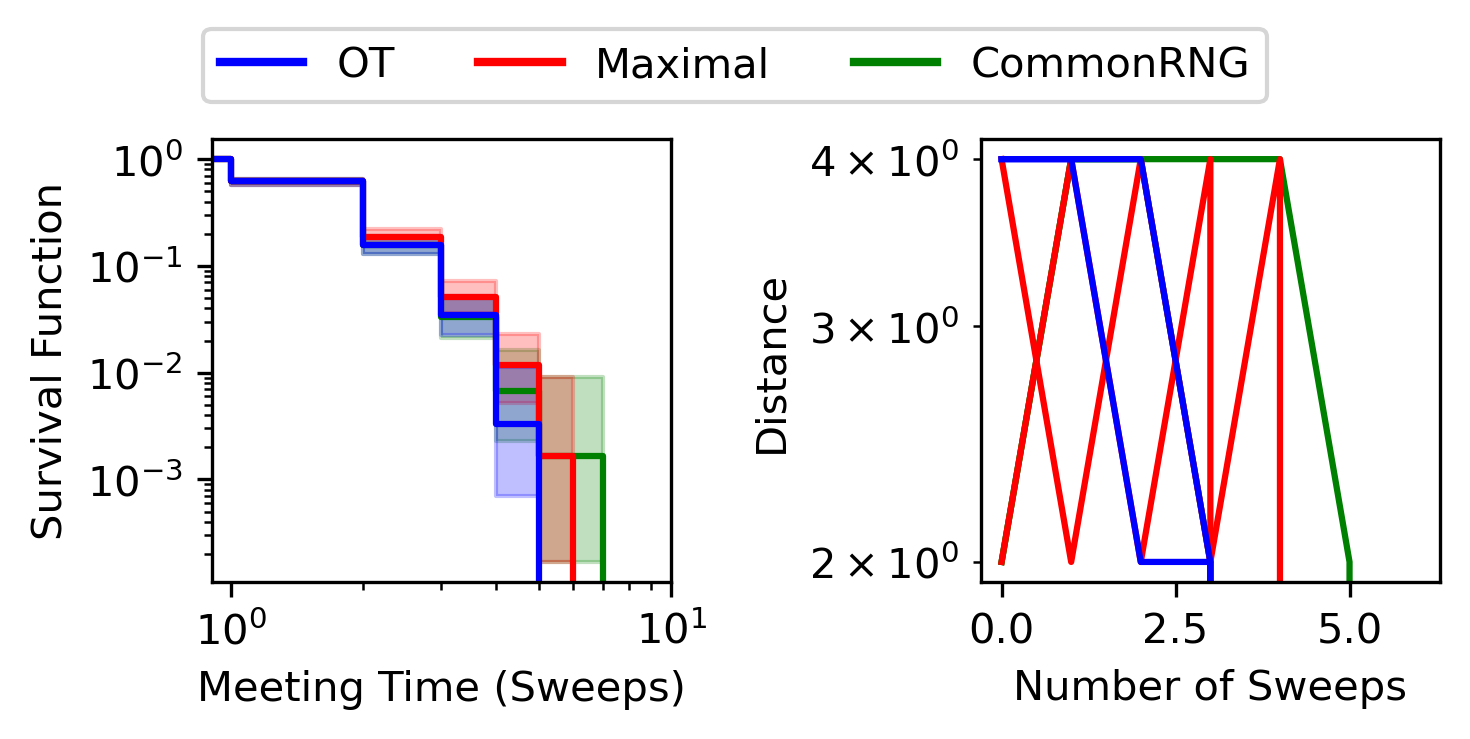}
		\caption{Coupling choice}
		\label{sub-fig:kRegular-coupling}
	\end{subfigure}
	\caption{Results on \textsc{k-regular}. Figure legends are the same as \Cref{fig:gene-all}.}
	\label{fig:kRegular-all}
\end{figure}

\section{METRIC IMPACT} \label{apd:metric}
\subsection{Definition Of Variation Of Information Metric}
Variation of information, or VI, is defined in \citet[Equation 16]{meilua2007comparing}.
We replicate the definition in what follows.
Let $\pi$ and $\nu$ be two partitions of $[N]$.
Denote the clusters in $\pi$ by $\{\rIdx{A}{1},\rIdx{A}{2},\ldots,\rIdx{A}{K}\}$
and the clusters in $\nu$ by $\{\rIdx{B}{1},\rIdx{B}{2},\ldots,\rIdx{B}{K'}\}$.
For each $k \in [K]$ and $k' \in K'$, define the number $P(k,k')$ to be
\begin{equation*}
	P(k,k') \defined \frac{|\rIdx{A}{k} \cap \rIdx{B}{k'} |}{N}.
\end{equation*}
$|\rIdx{A}{k} \cap \rIdx{B}{k'}|$ is the size of the overlap between $\rIdx{A}{k}$ and $\rIdx{B}{k'}$. 
Because of the normalization by $N$, the $P(k,k')$'s are non-negative and sum to $1$, hence can be interpreted as probability masses. 
Summing across all $k$ (or $k'$) has a marginalization effect, and we define
\begin{equation*}
	P(k) \defined \sum_{k'=1}^{K'} P(k,k'). 
\end{equation*}
Similarly we define $P'(k') \defined \sum_{k=1}^{K} P(k,k')$.
The VI metric is then
\begin{equation} \label{eq:VI}
	d_{I}(\pi, \nu) = \sum_{k=1}^{K} \sum_{k'=1}^{K'} P(k,k') \log \frac{P(k,k')}{P(k)P(k')}.
\end{equation}
In terms of theoretical properties, \citet[Property 1]{meilua2007comparing} shows that $d_I$ is a metric for the space of partitions.

\subsection{Impact Of Metric On Meeting Time} 
In \Cref{sub-fig:gene-metric,sub-fig:seed-metric,sub-fig:synthetic-metric,sub-fig:kRegular-metric}, we examine the effect of metric on the meeting time for coupled chains. 
In place of the Hamming metric in \Cref{eqn:distance_between_partitions}, we can use the variation of information (VI) metric from \Cref{eq:VI}
in defining the OT problem (\Cref{eq:OT-coupling}). 
Based on the survival functions, the meeting time under VI metric is similar to meeting time under the default Hamming metric: in all cases, the survival functions lie mostly right on top of each other. 
Time is measured in number of sweeps taken, rather than processor time, because under Hamming metric we have a fast implementation (\Cref{s-sec:runtime}) while we are not aware of fast implementations for the VI metric. 
Hence, our recommended metric choice is Hamming (\Cref{eqn:distance_between_partitions}).

\begin{figure}[h]
	\centering
	\begin{subfigure}{0.48\linewidth}
		\includegraphics[scale=0.64]{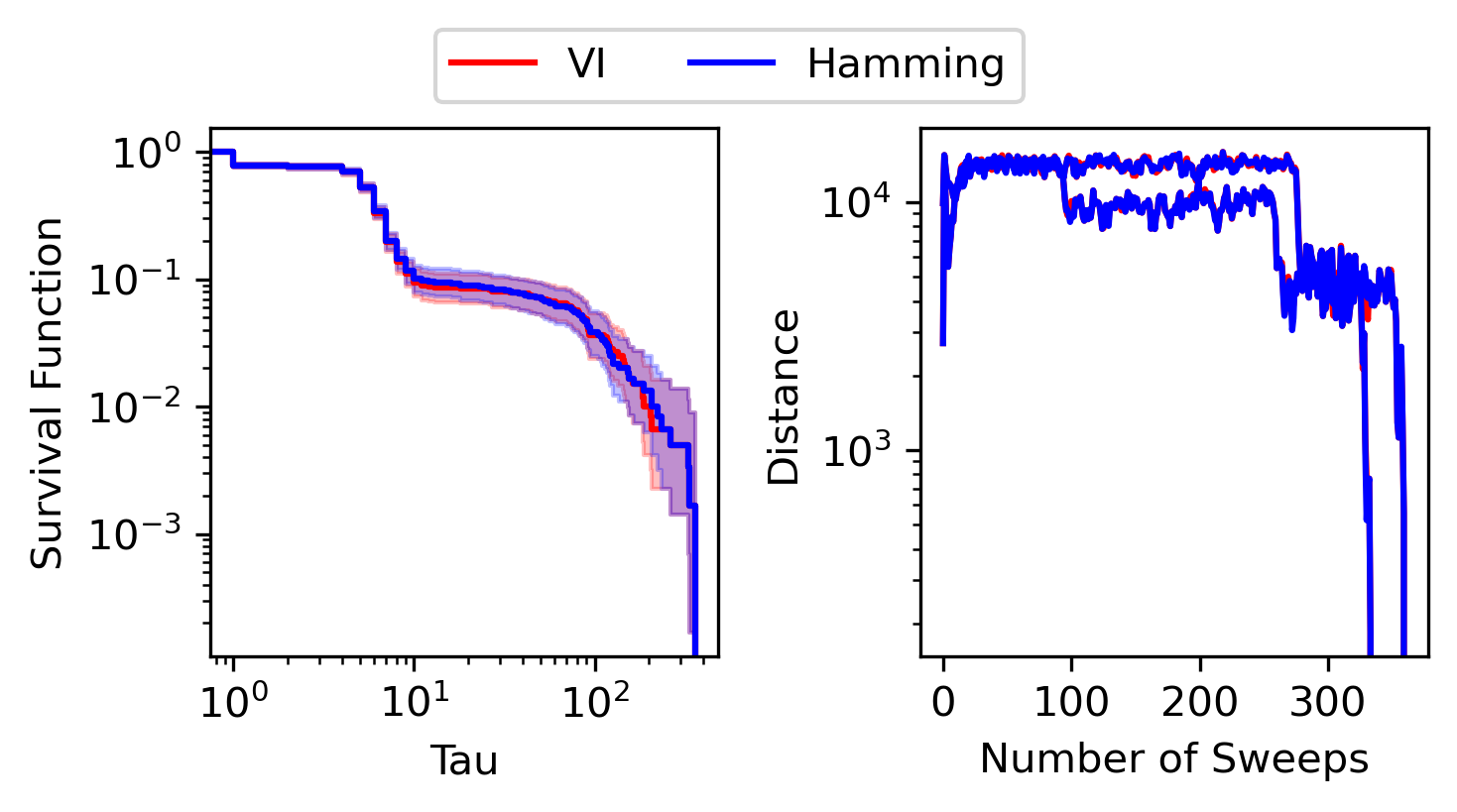}
		\caption{\textsc{gene}}
		\label{sub-fig:gene-metric}
	\end{subfigure}
	\begin{subfigure}{0.48\linewidth}
		\includegraphics[scale=0.67]{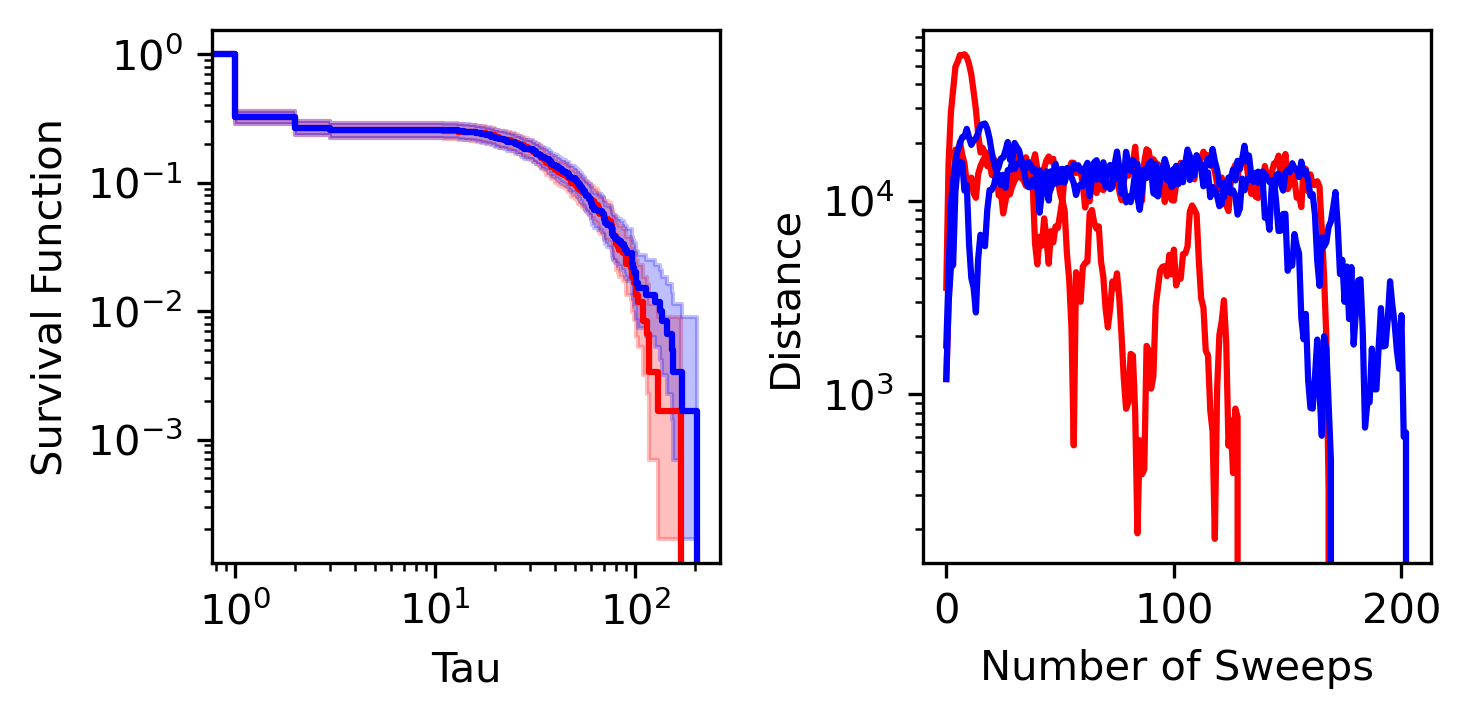}
		\caption{\textsc{synthetic}}
		\label{sub-fig:synthetic-metric}
	\end{subfigure}\\
	\begin{subfigure}{0.48\linewidth}
		\includegraphics[scale=0.67]{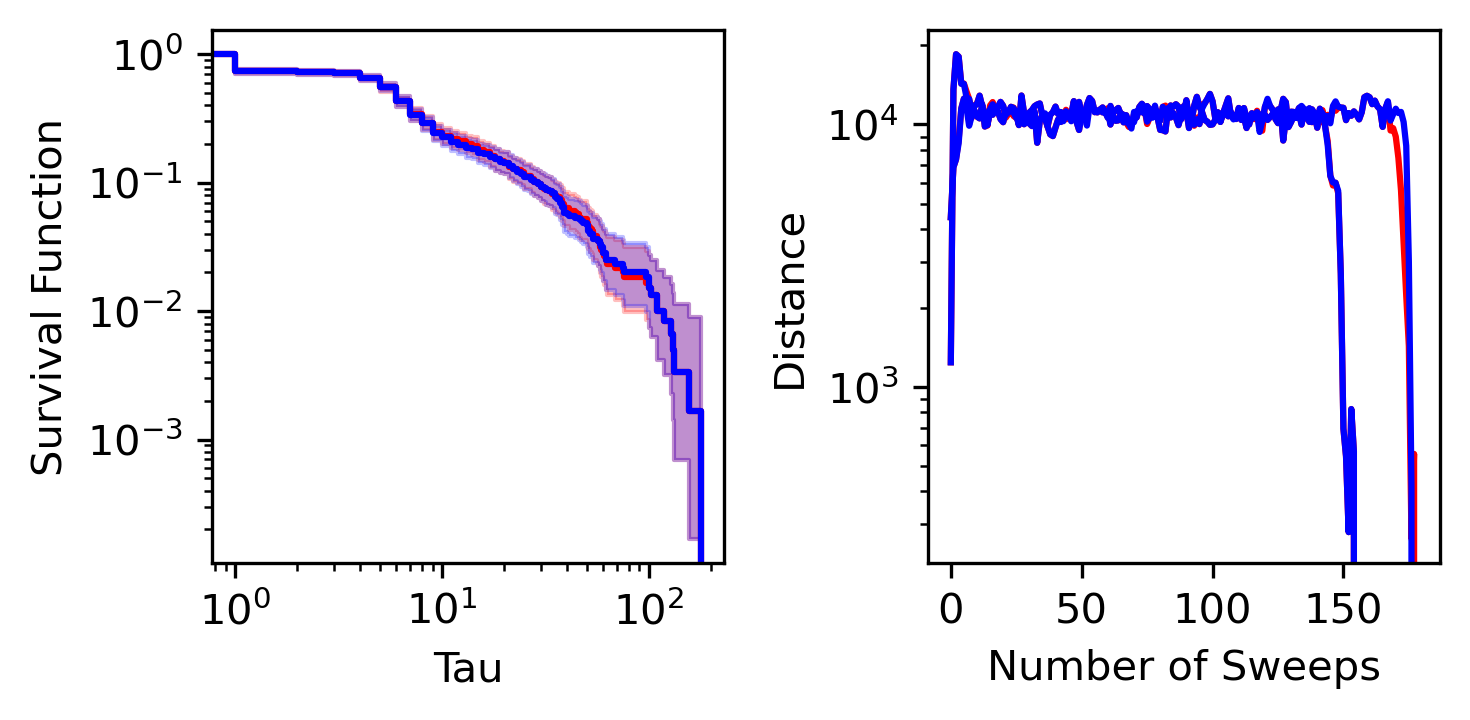}
		\caption{\textsc{seed}}
		\label{sub-fig:seed-metric}
	\end{subfigure}
	\begin{subfigure}{0.48\linewidth}
		\includegraphics[scale=0.67]{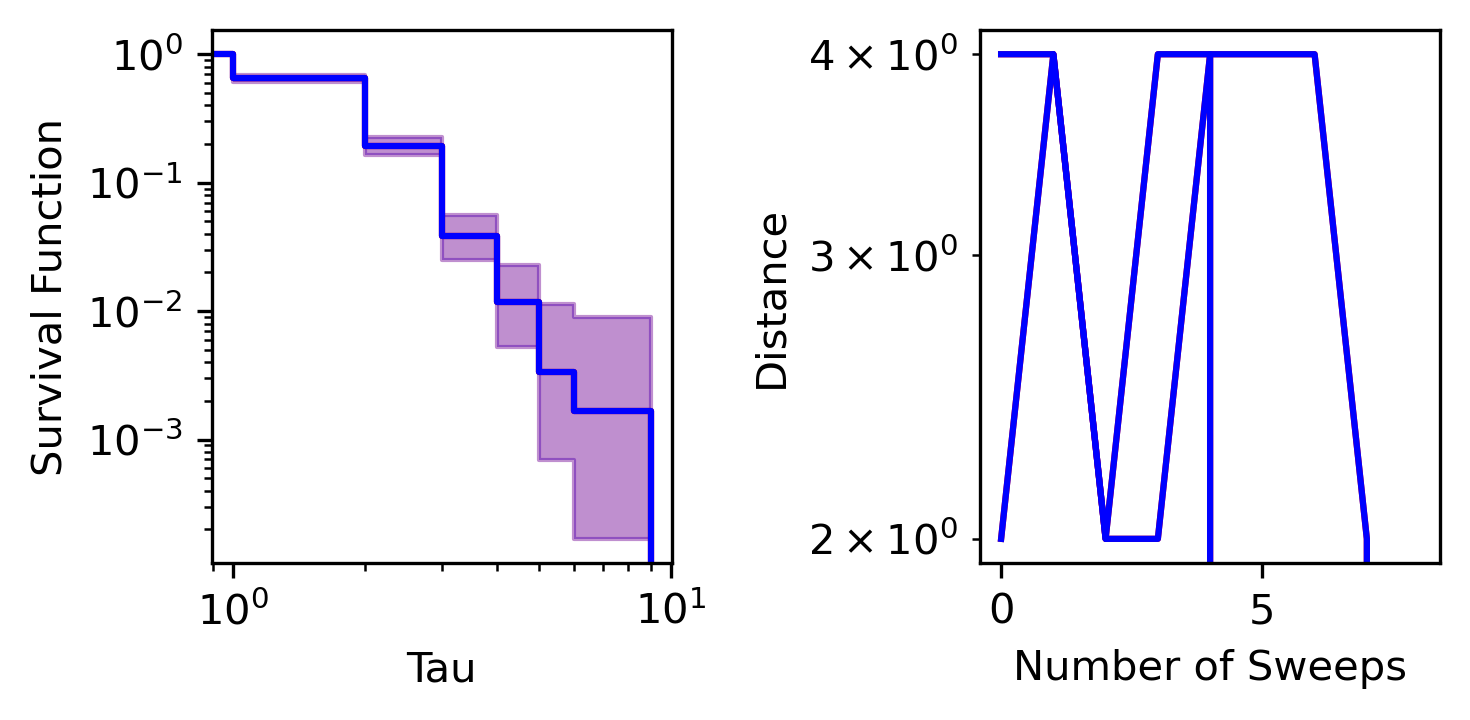}
		\caption{\textsc{k-regular}}
		\label{sub-fig:kRegular-metric}
	\end{subfigure}
	\caption{Hamming and VI metric induce similar meeting time}
	\label{fig:metrics}
\end{figure}

\section{EXTENSION TO SPLIT-MERGE SAMPLER} \label{apd:extensions}
$\SpMe(i,j,X)$ is the ``Restricted Gibbs Sampling Split--Merge Procedure'' from \citet{jain2004split}, where our implementation proposes $1$ split--merge move and uses $5$ intermediate Gibbs scan to compute the proposed split (or merge) states.

We refer to \Cref{apd:experimental_setup} for comprehensive experimental setup.
The $\LCP$ estimation results for \textsc{gene} are given in \Cref{fig:gene-splitMerge}. 
Instead of the one-component initialization, we use a k-means clustering with 5 components as initialization.
$\minIter$ is set to be 100, while $\burnin$ is 10.
Switching from pure Gibbs sampler to split-merge samplers can reduce the bias caused by a bad initialization.
But there is still bias that does not go away with replication, and the results are consistent with \Cref{fig:composite}.

\begin{figure}[h]
	\centering
	\begin{subfigure}[h]{0.48\linewidth}
		\includegraphics[scale=0.67]{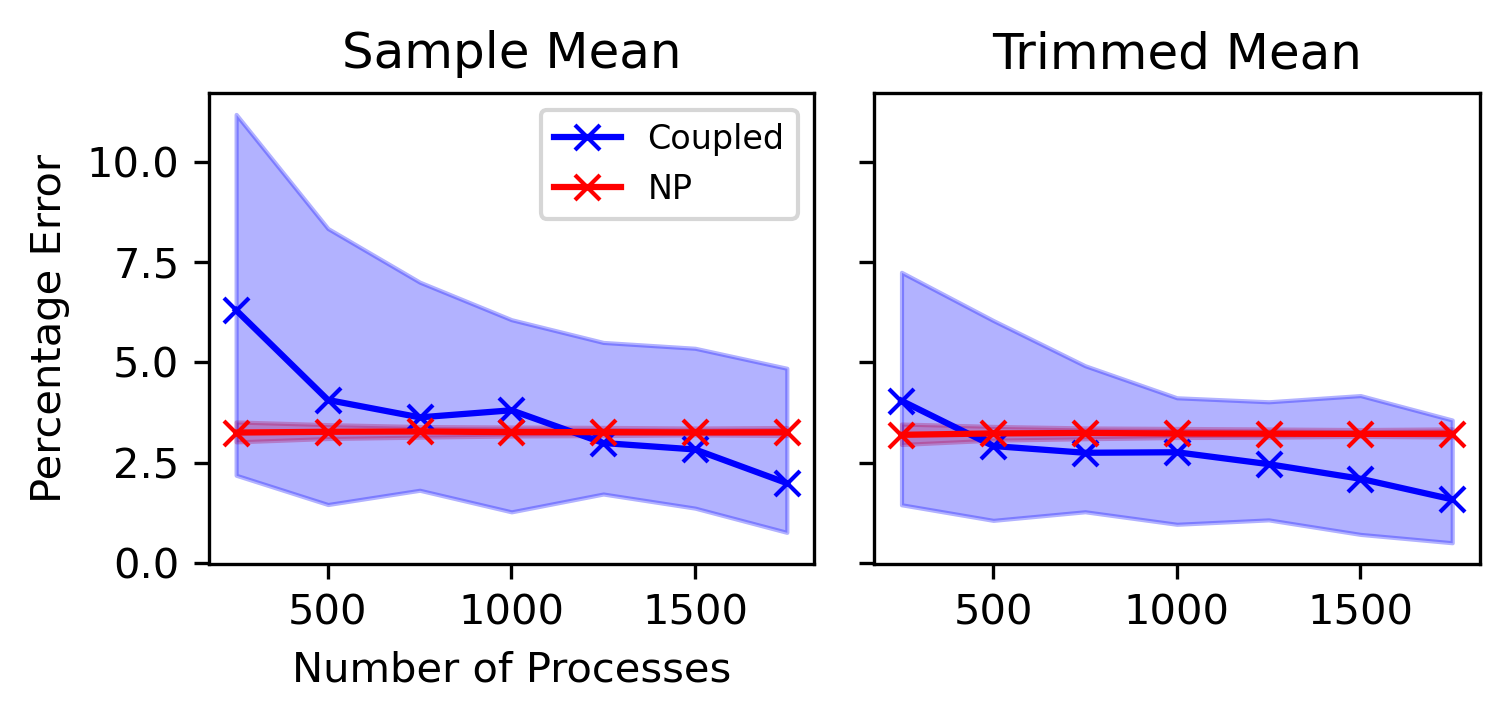}
		\caption{Losses}
	\end{subfigure}\hspace{5mm}
	\begin{subfigure}[h]{0.48\linewidth}
		\includegraphics[scale=0.67]{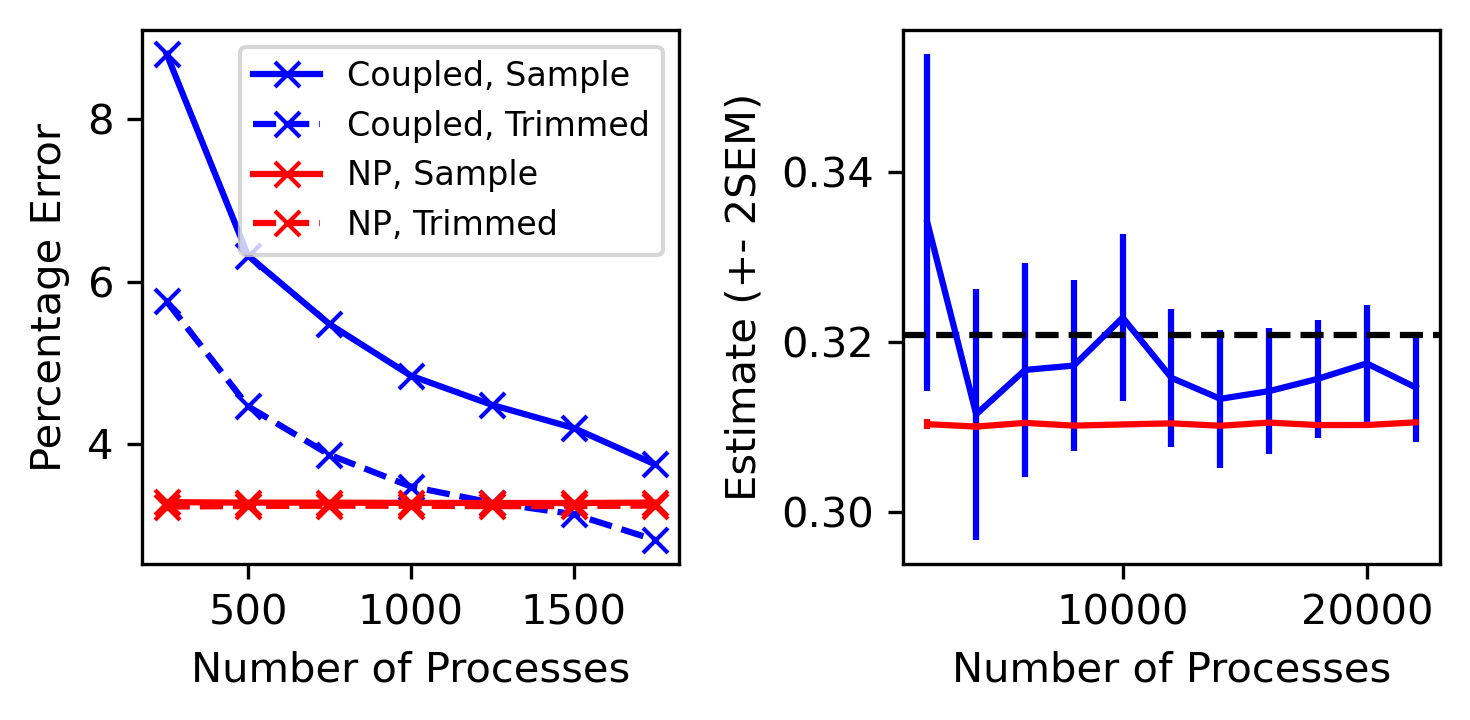}
		\caption{RMSE and intervals}
	\end{subfigure}
	\begin{subfigure}[b]{0.48\linewidth}
		\includegraphics[scale=0.67]{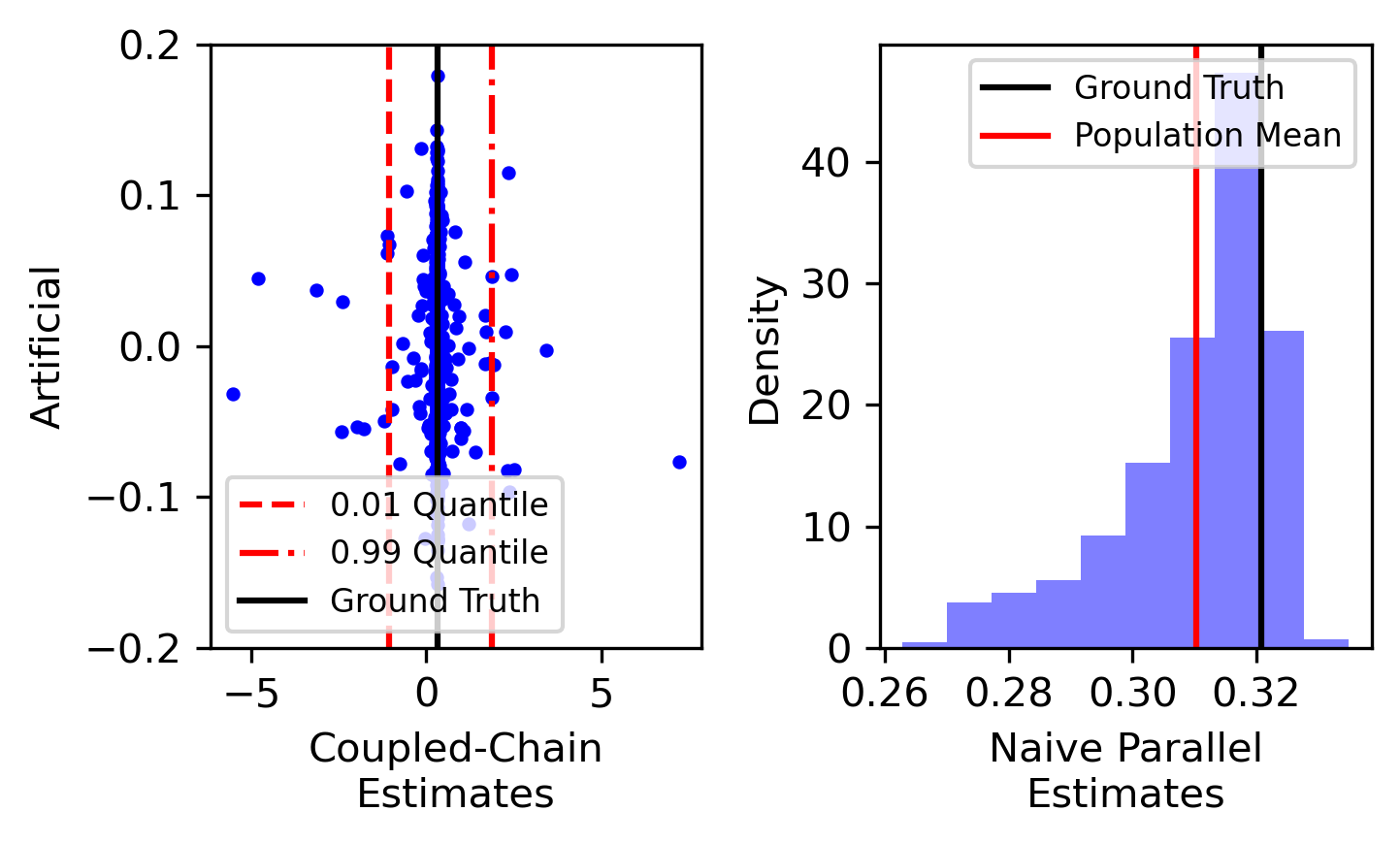}
		\caption{Estimates}
	\end{subfigure}
	\caption{Split-merge results on \textsc{gene}}
	\label{fig:gene-splitMerge}
\end{figure}

We also have split-merge results for estimation of $\CC{0}{1}$ on \textsc{synthetic} in \Cref{fig:synthetic-splitMerge}.
$\minIter$ is set to be 50, while $\burnin$ is 5.

\begin{figure}[h]
	\centering
	\begin{subfigure}[h]{0.48\linewidth}
		\includegraphics[scale=0.67]{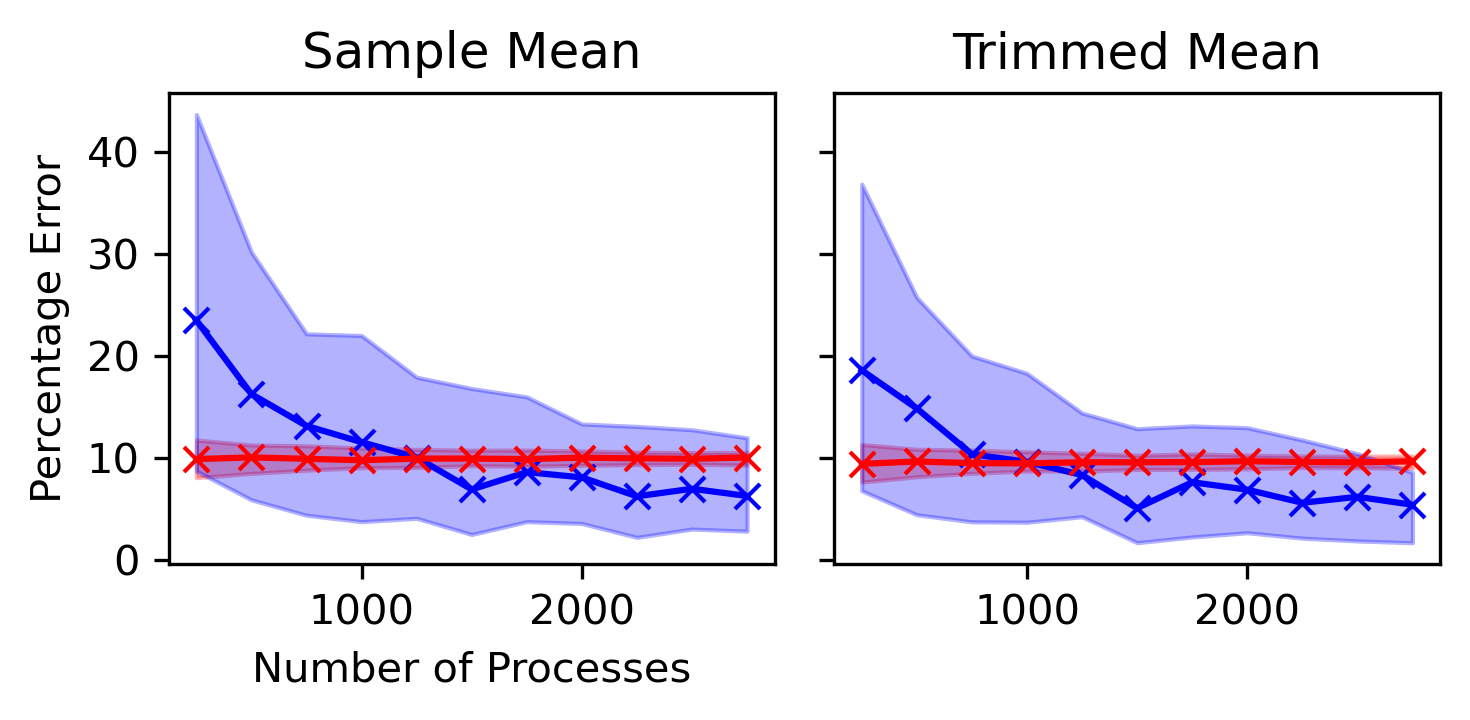}
		\caption{Losses}
	\end{subfigure}
	\begin{subfigure}[h]{0.48\linewidth}
		\includegraphics[scale=0.67]{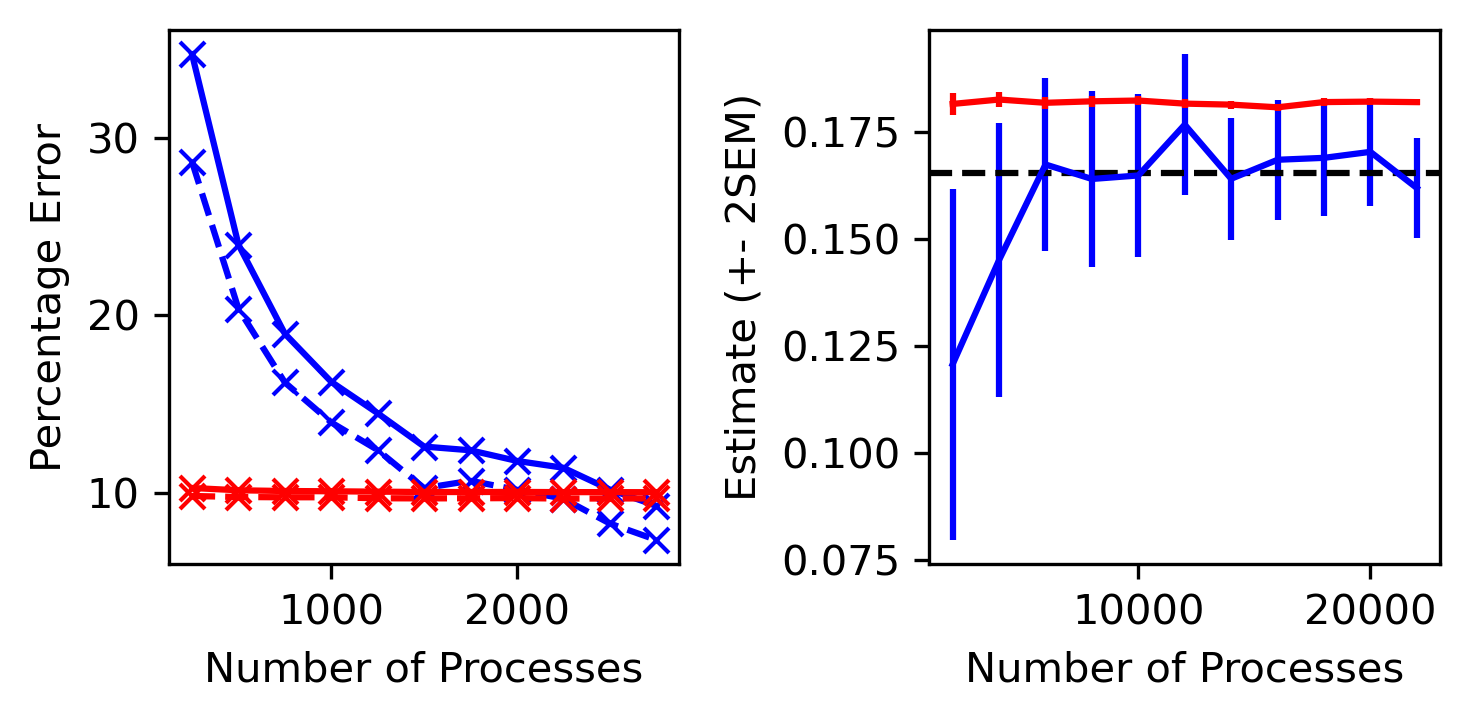}
		\caption{RMSE and intervals}
	\end{subfigure}
	\begin{subfigure}[b]{0.48\linewidth}
		\includegraphics[scale=0.67]{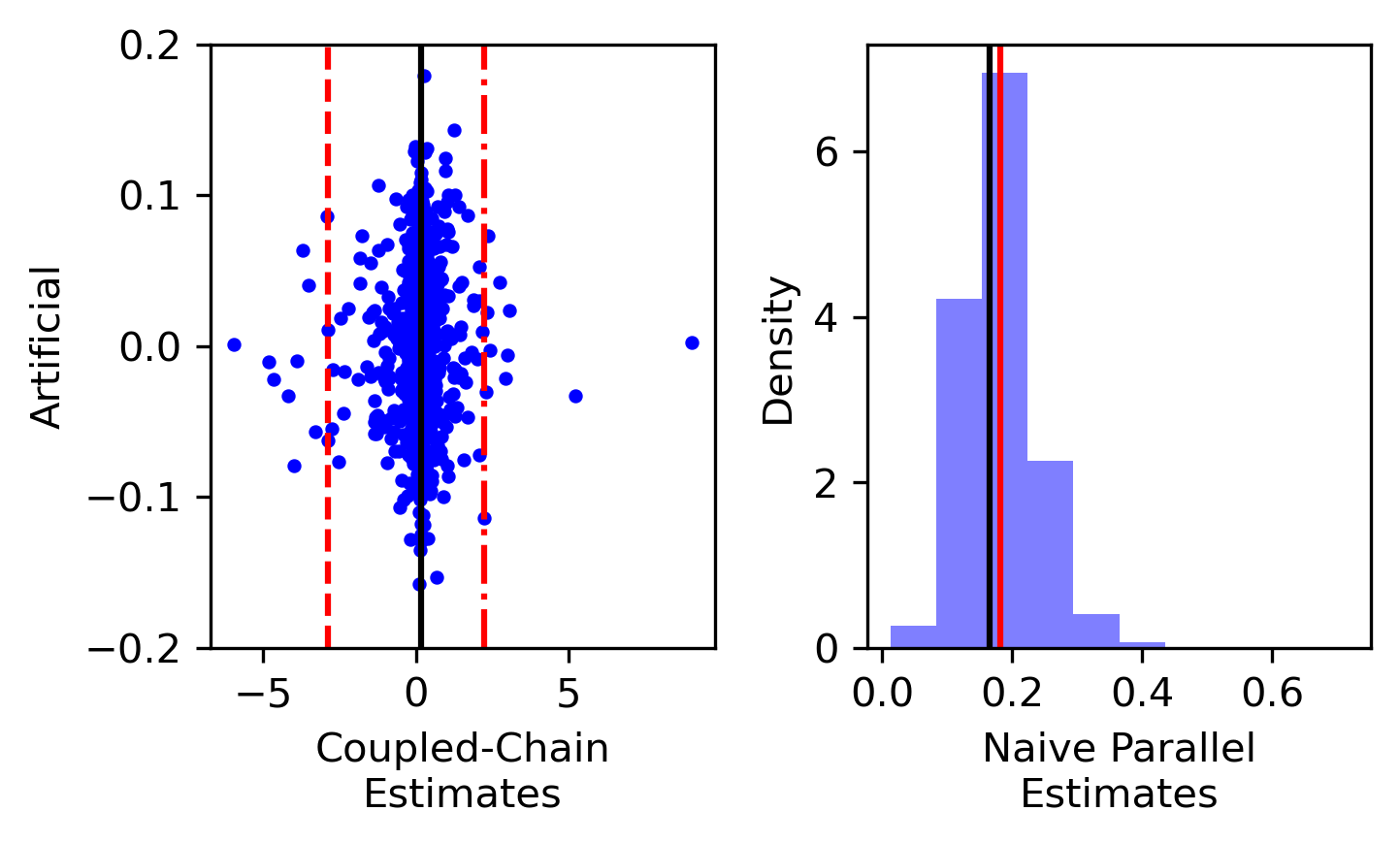}
		\caption{Estimates}
	\end{subfigure}
	\caption{Split-merge results on \textsc{synthetic}}
	\label{fig:synthetic-splitMerge}
\end{figure}

\section{MORE RMSE PLOTS} \label{apd:more-rmse}
\subsection{Different Functions Of Interest}
\Cref{fig:co-cluster} displays co-clustering results for clustering data sets.
The results are consistent with those for $\LCP$ estimation.
Co-clustering appears to be a more challenging estimation problem than $\LCP$, indicated by the higher percentage errors for the same $\minIter$.

\begin{figure}[h]
	\centering
	\begin{subfigure}{0.48\linewidth}
		\includegraphics[scale=0.67]{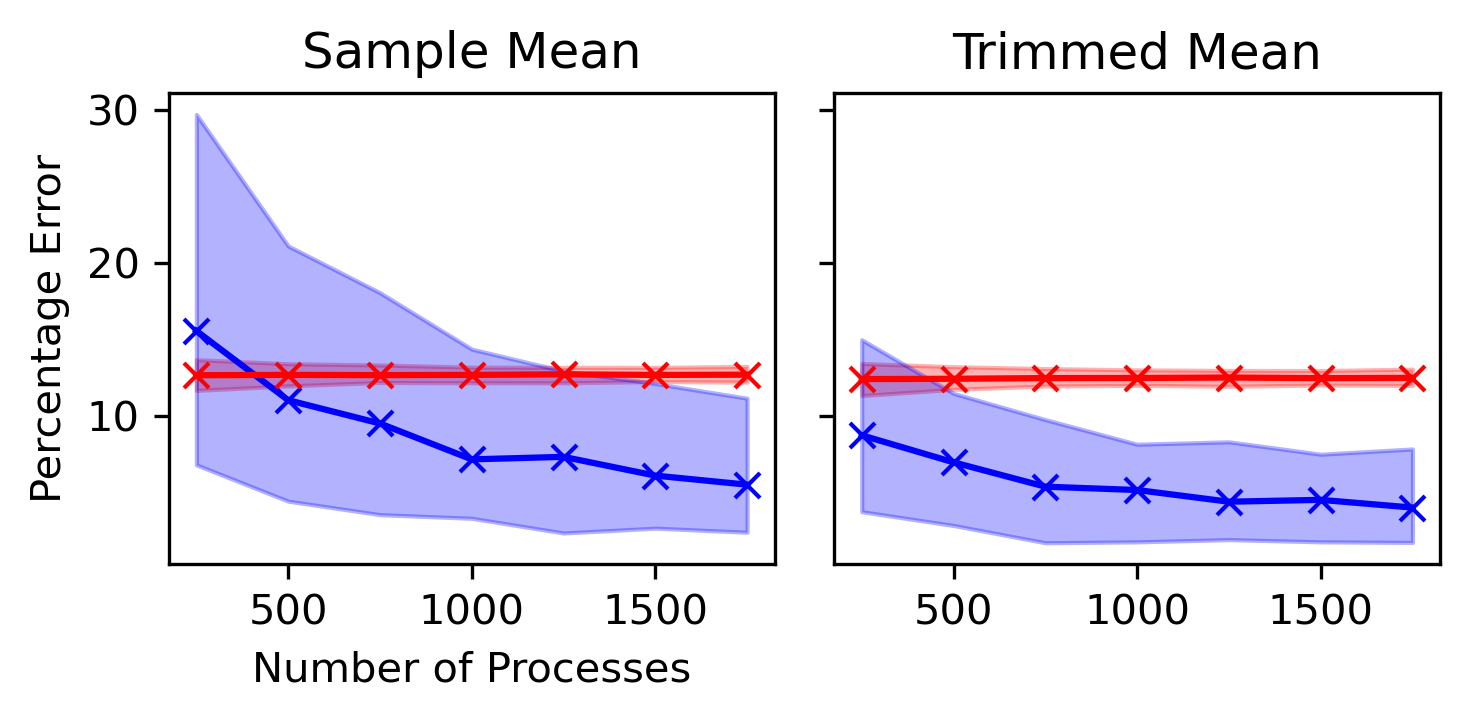}
		\caption{Losses, $\CC{0}{21}$ estimation on \textsc{gene}}
	\end{subfigure}
	\begin{subfigure}{0.48\linewidth}
		\includegraphics[scale=0.67]{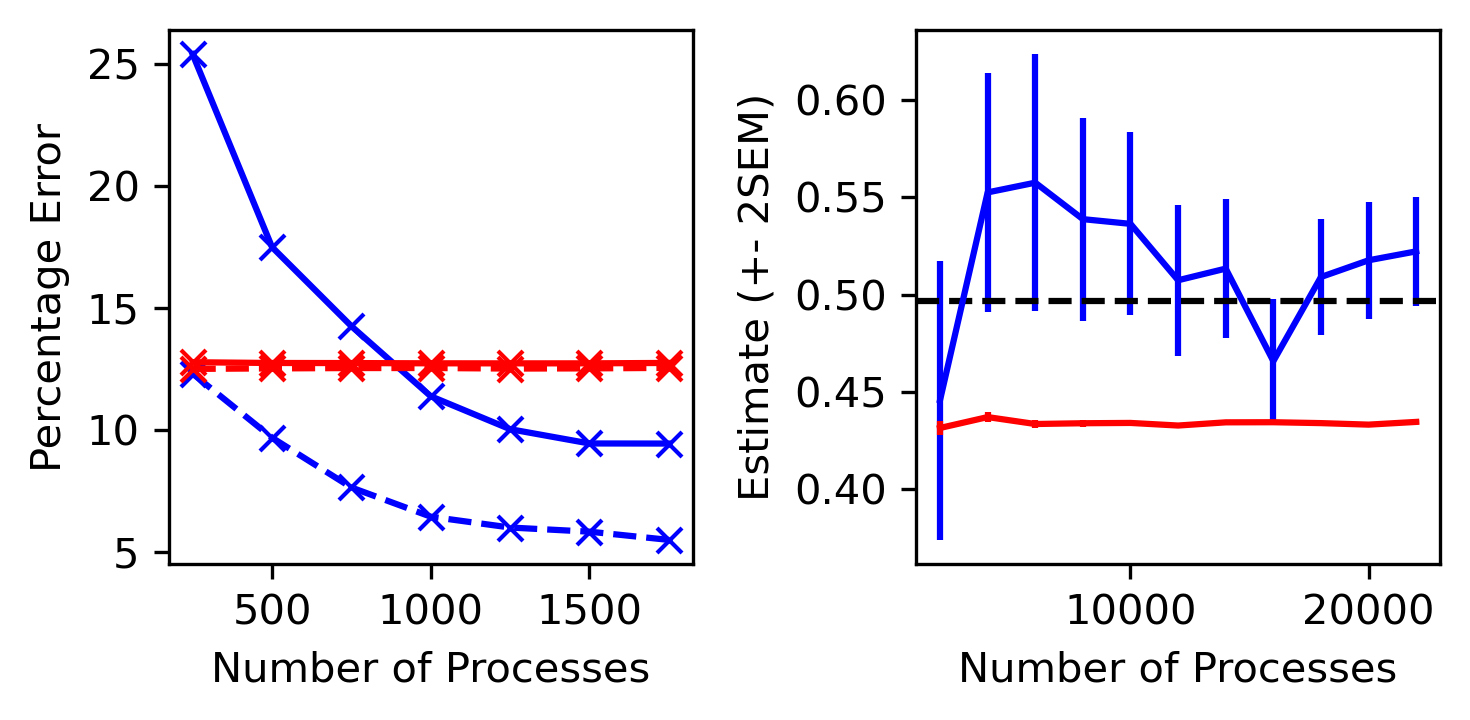}
		\caption{RMSE and intervals $\CC{0}{21}$ estimation on \textsc{gene}}
	\end{subfigure}
	\begin{subfigure}{0.48\linewidth}
		\includegraphics[scale=0.67]{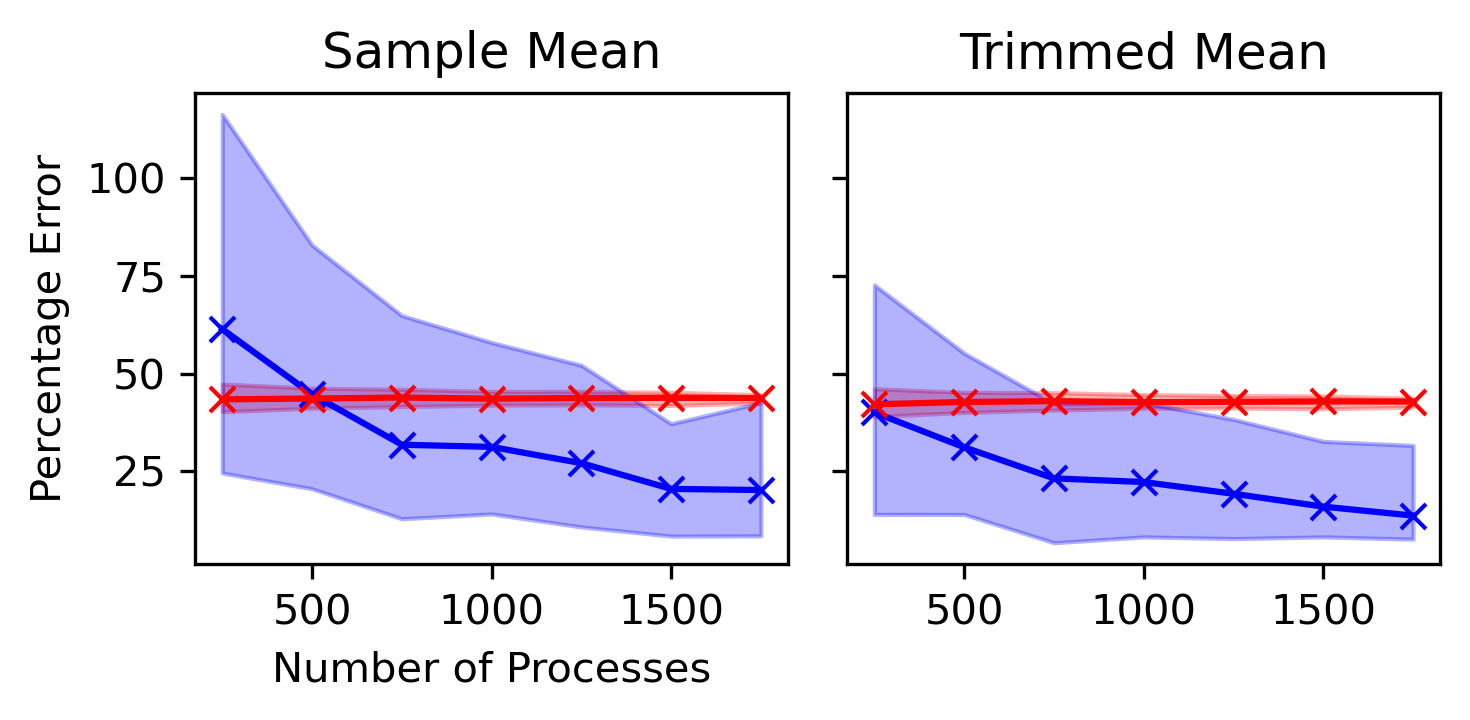}
		\caption{Losses, $\CC{0}{1}$ estimation on \textsc{synthetic}}
	\end{subfigure}
	\begin{subfigure}{0.48\linewidth}
		\includegraphics[scale=0.67]{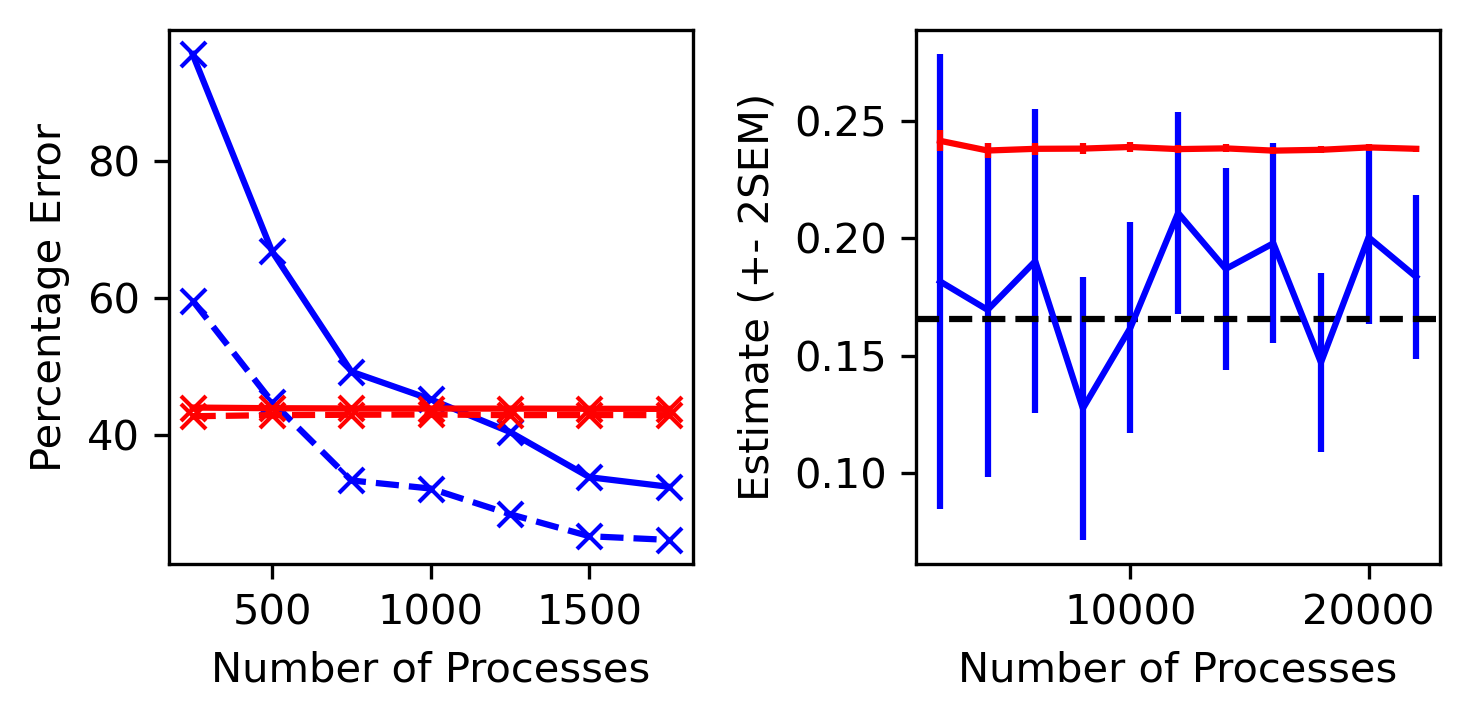}
		\caption{RMSE and intervals, $\CC{0}{1}$ estimation on \textsc{synthetic}}
	\end{subfigure}
	\begin{subfigure}{0.48\linewidth}
		\includegraphics[scale=0.67]{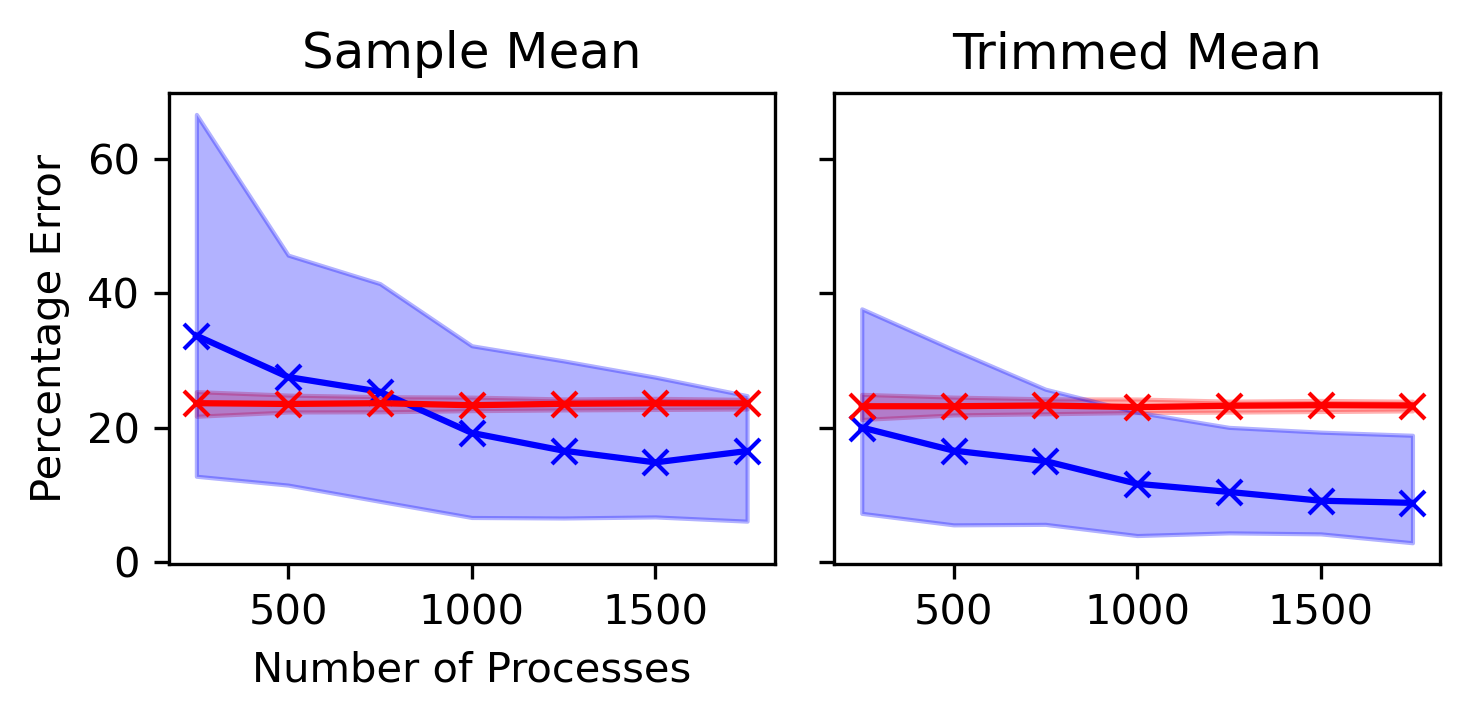}
		\caption{Losses, $\CC{0}{19}$ estimation on \textsc{seed}}
	\end{subfigure}
	\begin{subfigure}{0.48\linewidth}
		\includegraphics[scale=0.67]{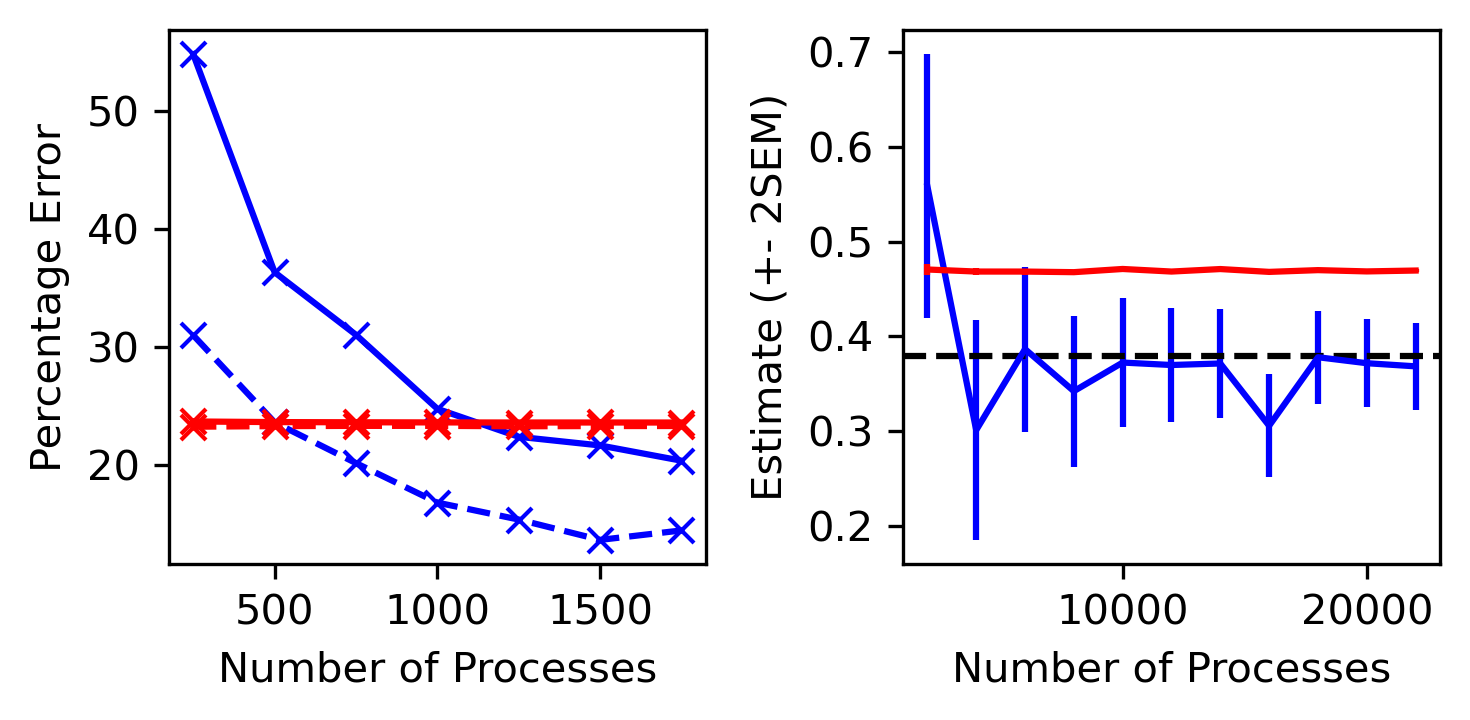}
		\caption{RMSE and intervals, $\CC{0}{19}$ estimation on \textsc{seed}}
	\end{subfigure}
	\caption{Co-clustering results for clustering data sets.}
	\label{fig:co-cluster}
\end{figure}

\subsection{Different Minimum Iteration ($\minIter$) Settings} \label{apd:diffMinIter}
In \Cref{fig:diffMinIter}, with an increase in $\minIter$ (from the default {100} to {150}), the bias in the naive parallel approach reduces (percentage error goes from $15\%$ to $10\%$, for instance), 
and the variance of coupled chains' estimates also reduce.

\begin{figure}[h]
	\centering
	\begin{subfigure}{0.23\linewidth}
		\includegraphics[scale=0.67]{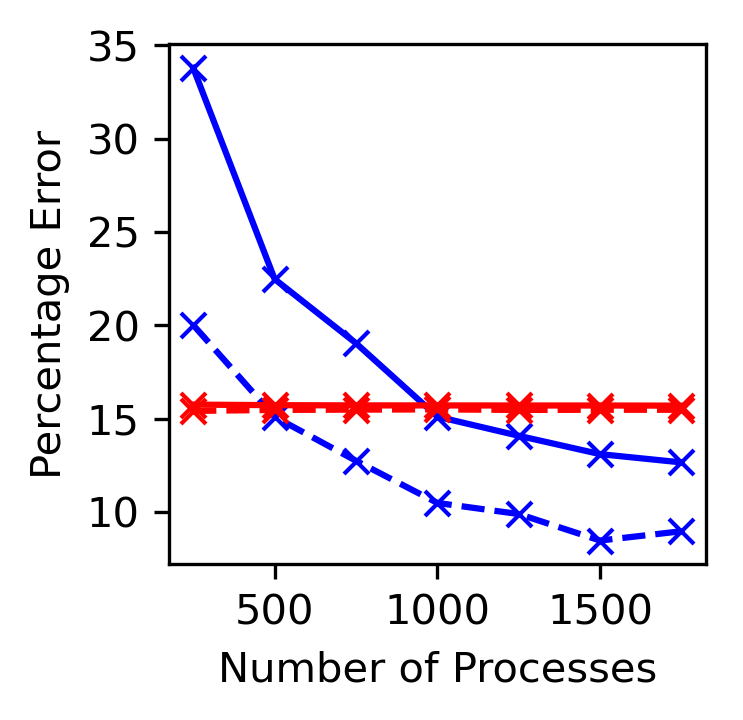}
		\caption{$\minIter = 100$, \textsc{seed}}
	\end{subfigure}
	\begin{subfigure}{0.23\linewidth}
		\includegraphics[scale=0.67]{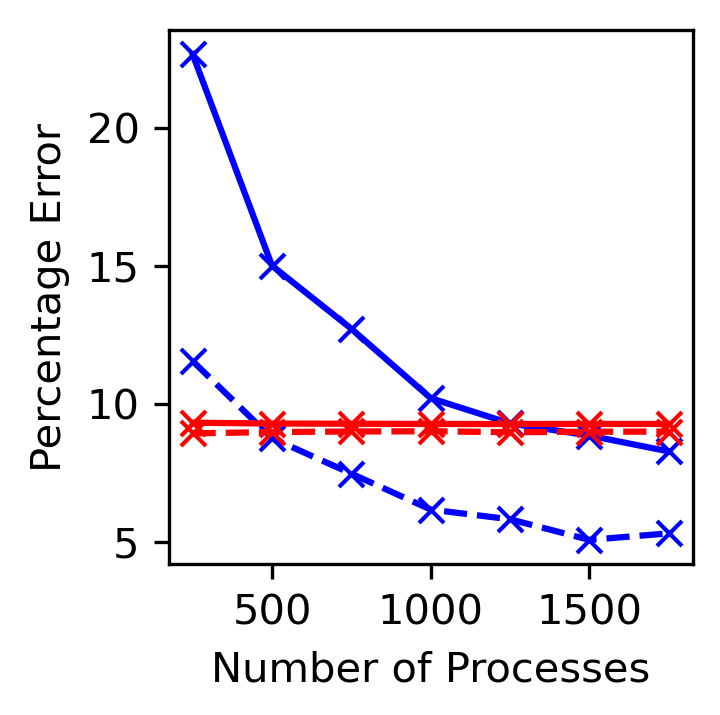}
		\caption{$\minIter = 150$, \textsc{seed}}
	\end{subfigure}
	\begin{subfigure}{0.23\linewidth}
		\includegraphics[scale=0.67]{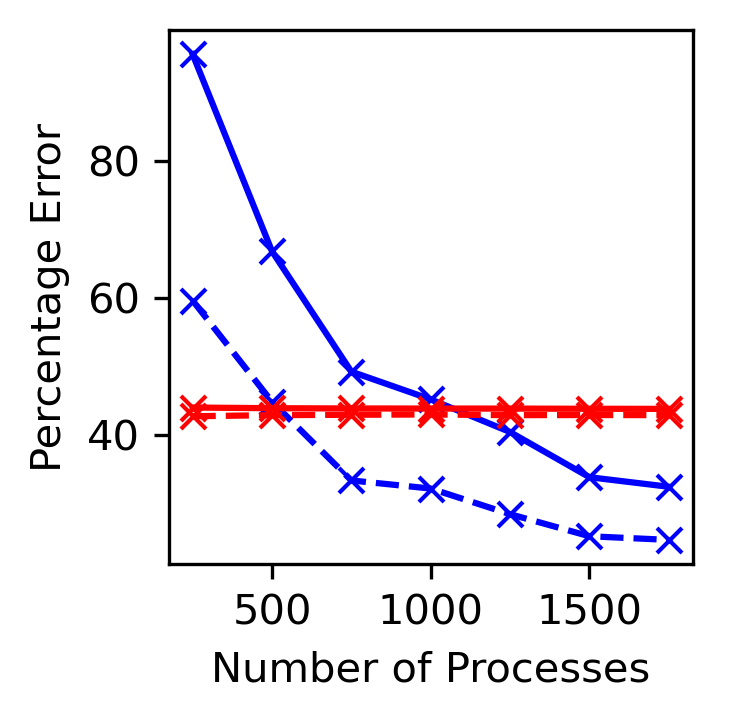}
		\caption{$\minIter = 100$, \textsc{synthetic}}
	\end{subfigure}
	\begin{subfigure}{0.23\linewidth}
		\includegraphics[scale=0.67]{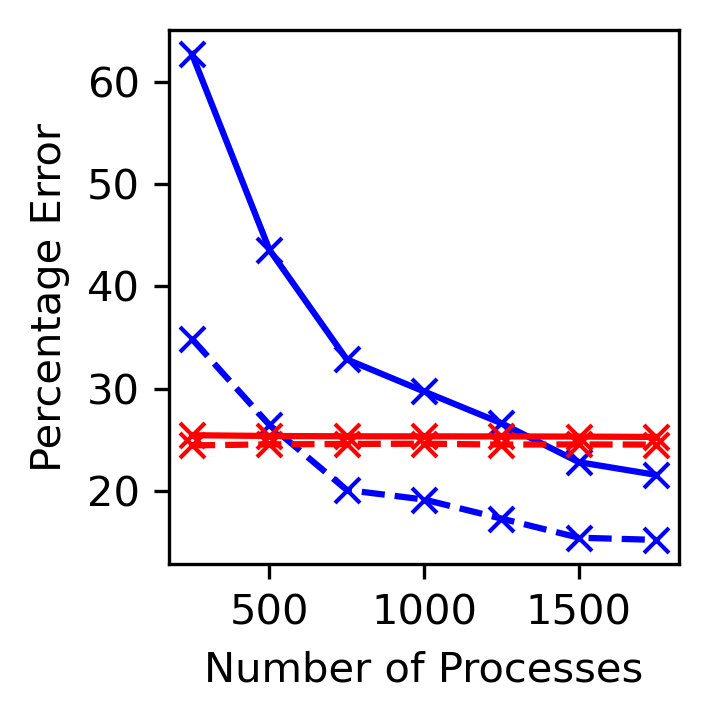}
		\caption{$\minIter = 150$, \textsc{synthetic}}
	\end{subfigure}
	\caption{Impact of different $\minIter$ on the RMSE. The first two panels are $\LCP$ estimation for \textsc{seed}. The last two panels are $\CC{0}{1}$ estimation for \textsc{synthetic}.}
	\label{fig:diffMinIter}
\end{figure}

\subsection{Different Initialization}
In \Cref{fig:kmeans=5-init}, we initialize the Markov chains with the clustering from a k-means clustering with $5$ clusters, instead of the one-component initialization.
Also see \Cref{fig:gene-splitMerge} for more kmeans initialization results. 
The bias from naive parallel is smaller than when initialized from the one-component initialization (RMSE in \Cref{fig:kmeans=5-init} is around $5\%$ while RMSE in \Cref{fig:gene-all} is about $10\%$). 
However, the bias is still significant enough that even with a lot of processors, naive parallel estimates are still inadequate.
\begin{figure}[t]
	\centering
	\includegraphics[scale=0.67]{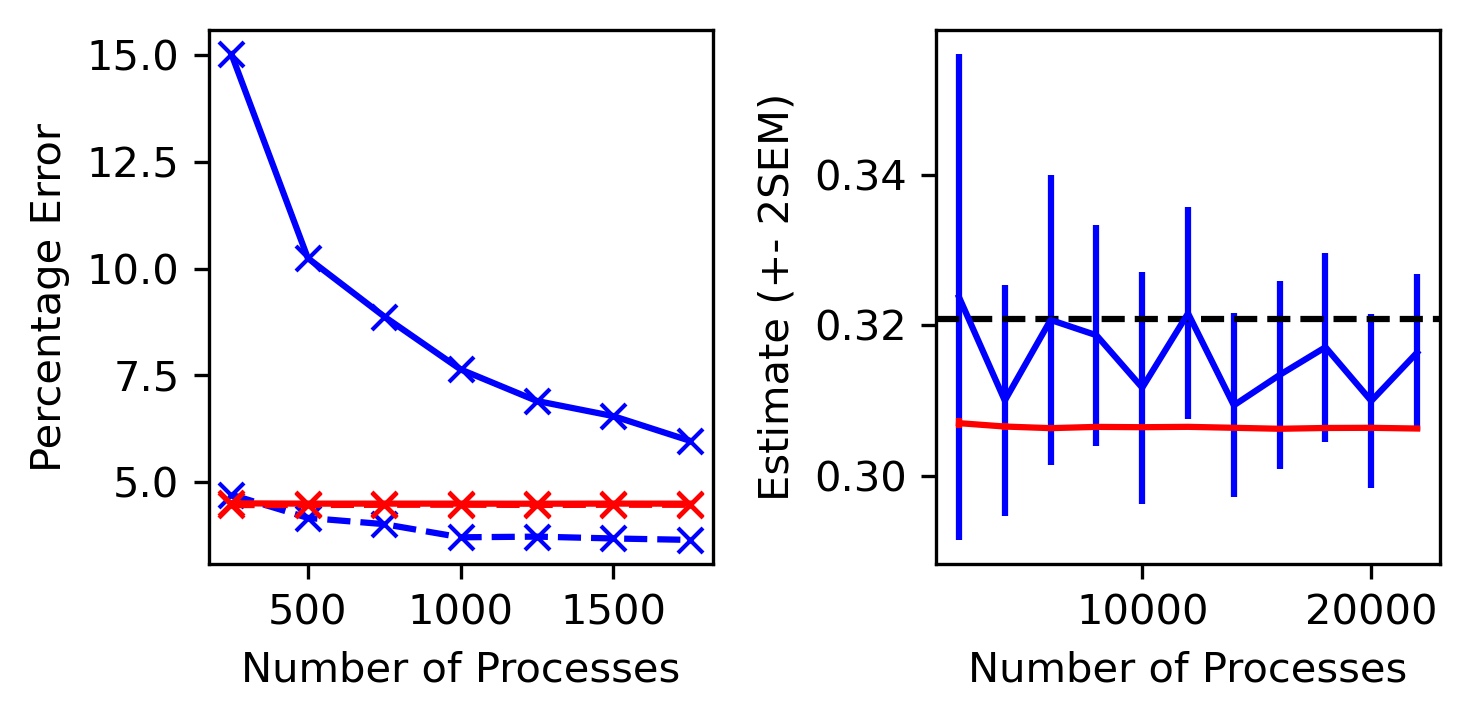}
	\caption{RMSE and intervals for \textsc{gene} on k-means initialization.}
	\label{fig:kmeans=5-init}
\end{figure}

\subsection{Different DPMM Hyperparameters}
For convenience, throughout our experiments, we use diagonal covariance matrices $\Sigma_0 = s_0 I_D$ and $\Sigma_1 = s_1 I_D$, where the variances in different dimensions are the same.
We find that the bias of standard MCMC is influenced by $s_0$ and $s_1$: some settings cause naive parallel chains to have meaningfully large bias, while others do not.
\Cref{fig:diffDPMM} illustrates on \textsc{synthetic} that when $s_1$ is small compared to $s_0$, standard MCMC actually has small bias even when run for a short amount of time.
For values of $s_1$ that are closer to (or larger than $s_0$), the bias in standard MCMC is much larger. 
$\minIter$ and $\burnin$ are set to be 100 and 10 across these settings of $s_1$.

\begin{figure}[h]
	\centering
	\begin{subfigure}{0.31\linewidth}
		\includegraphics[scale=0.67]{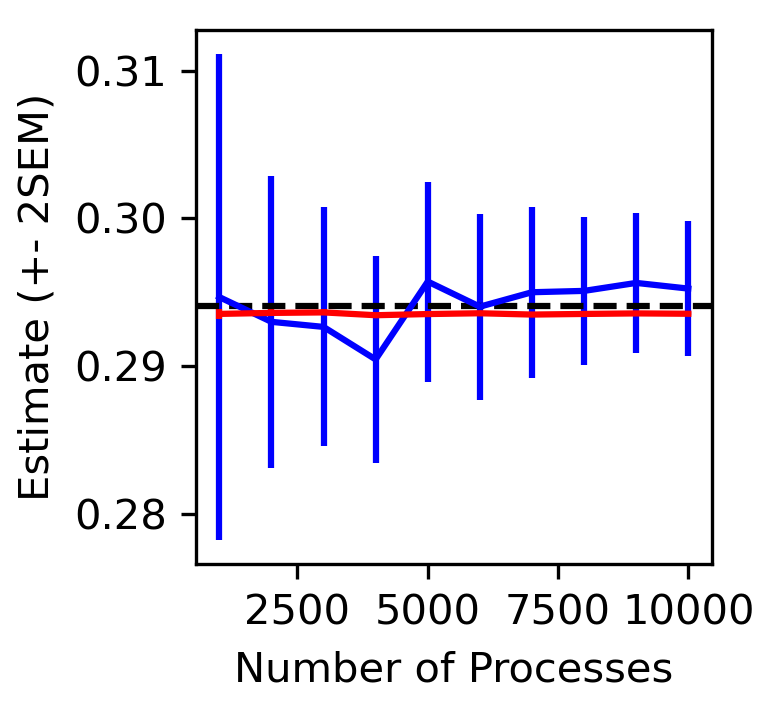}
		\caption{$s_1 = 0.5, s_0 = 0.75$}
	\end{subfigure}
	\begin{subfigure}{0.31\linewidth}
		\includegraphics[scale=0.67]{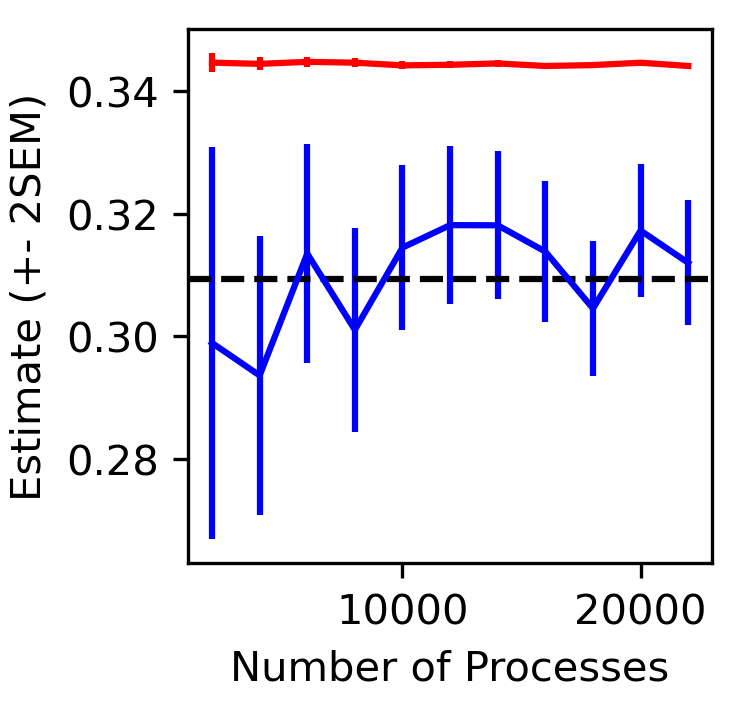}
		\caption{$s_1 = 0.7, s_0 = 0.75$}
	\end{subfigure}
	\begin{subfigure}{0.31\linewidth}
		\includegraphics[scale=0.67]{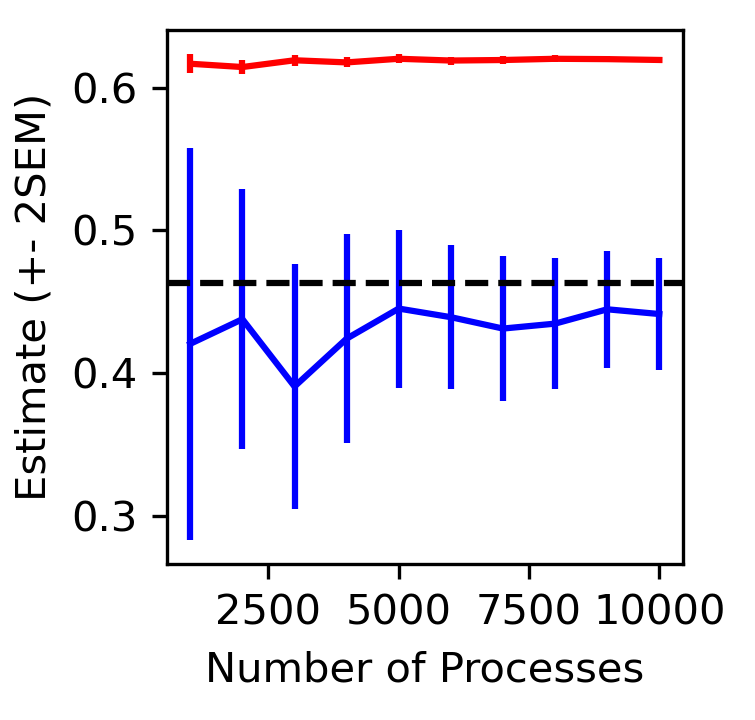}
		\caption{$s_1 = 0.85, s_0 = 0.75$}
	\end{subfigure}
	\caption{The bias in naive parallel estimates is a function of the DPMM hyperparameters.}
	\label{fig:diffDPMM}
\end{figure}

\section{MORE MEETING TIME PLOTS} \label{apd:more-meeting}
In \Cref{fig:meeting-ER}, we generate Erd\H{o}s-R\'{e}nyi random graphs, including each possible edge with probability $0.2$.
The graph in the first two panels has $N = 25$ vertices, while the one in the latter two panels has $N = 30.$
We determine a sufficient number of colors by first greedily coloring the vertices.
It turns out that 6 colors is sufficient to properly color the vertices in either set of panels.

\begin{figure}[h]
	\centering
	\begin{subfigure}{0.48\linewidth}
		\includegraphics[scale=0.8]{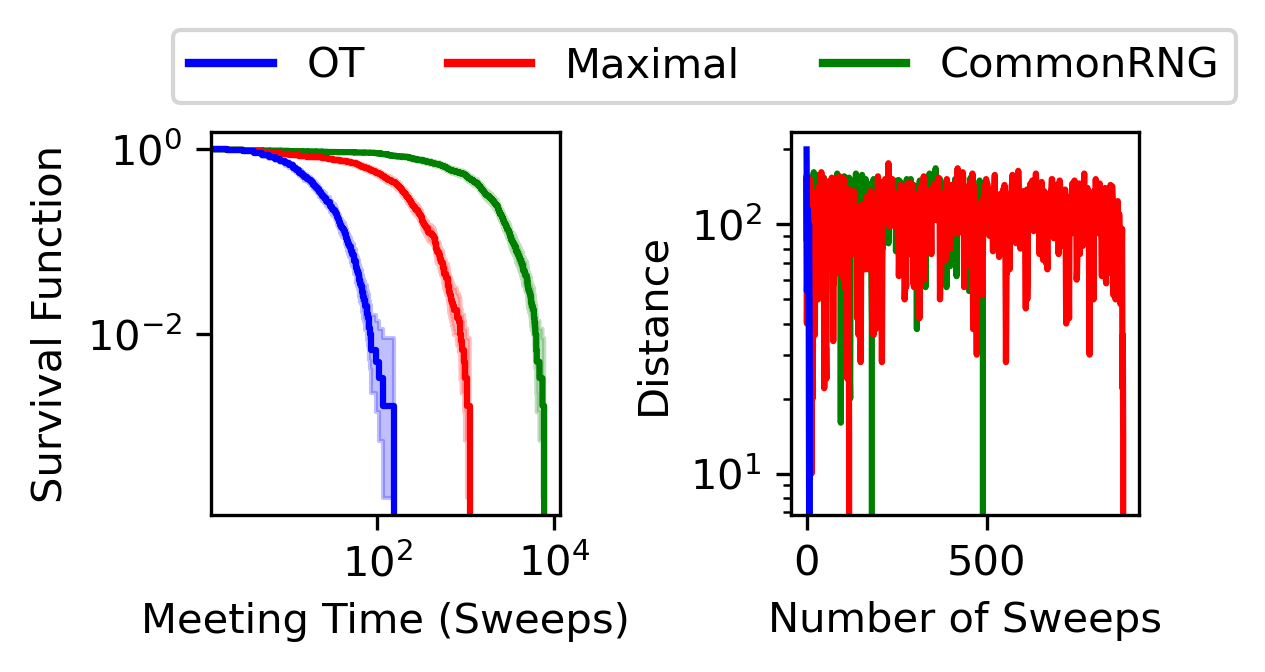}
		\caption{$N = 25$}
	\end{subfigure}
	\begin{subfigure}{0.48\linewidth}
		\includegraphics[scale=0.8]{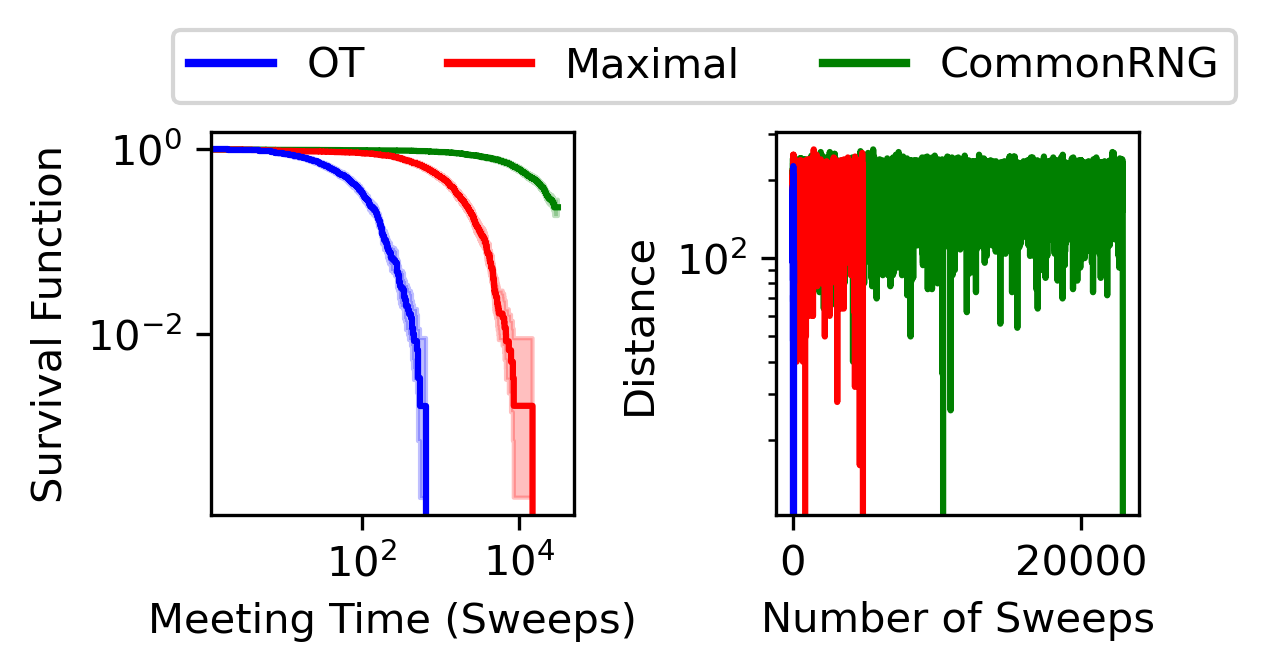}
		\caption{$N = 30$}
	\end{subfigure}
	\caption{Meeting time under OT coupling is better than alternative couplings on Erdos--Renyi graphs, indicated by the fast decrease of the survival functions.}
	\label{fig:meeting-ER}
\end{figure}

\section{ESTIMATES OF PREDICTIVE DENSITY} \label{apd:ppd}
\subsection{Data, Target Model, And Definition Of Posterior Predictive}
As the posterior predictive is easiest to visualize in one dimension, we draw artificial data from a univariate, $10$-component Gaussian mixture model with known observational noise standard deviation $\sigma = 2.0$, and use a DPMM to analyze this data.
The cluster proportions were generated from a symmetric Dirichlet distribution with mass $1$ for all $10$-coordinates. 
The cluster means were randomly generated from $\distNorm{0}{10^2}$.
Since this is an artificial dataset, we can control the number of observations: we denote \textsc{gmm-100} to be the dataset of 100 observations, for instance.

The target DPMM has $\mu_0 = 0, \alpha = 1, \Sigma_0 = 3.0$, and $\Sigma_1 = 2.0$

The function of interest is the posterior predictive density
\begin{equation} \label{eq:pred-def}
	\Pr(\data_{N+1} \in dx \given \data_{1:N}) = \sum_{\Pi_{N+1}} \Pr(\data_{N+1} \in dx \given \Pi_{N+1}, \data_{1:N}) \Pr(\Pi_{N+1} \given \data_{1:N}).
\end{equation}
In \Cref{eq:pred-def}, $\Pi_{N+1}$ denotes the partition of the data $\data_{1:(N+1)}$. To translate \Cref{eq:pred-def} into an integral over just the posterior over $\Pi_{N}$ (the partition of $\data_{1:N}$) we break up $\Pi_{N+1}$ into $(\Pi_{N}, Z)$ where $Z$ is the cluster indicator specifying the cluster of $\Pi_{N}$ (or a new cluster) to which $\data_{N+1}$ belongs. Then
\begin{equation*}
	\Pr(\data_{N+1} \in dx \given \data_{1:N}) = \sum_{\Pi_N} \left[ \sum_{Z}  \Pr(\data_{N+1} \in dx, Z \given \Pi_{N}, \data_{1:N}) \right] \Pr(\Pi_N \given \data_{1:N})
\end{equation*}
Each $\Pr(\data_{N+1} \in dx, Z \given \Pi_{N}, \data_{1:N})$ is computed using the prediction rule for the CRP and Gaussian conditioning. Namely
\begin{equation*}
	\Pr(\data_{N+1} \in dx, Z \given \Pi_{N}, \data_{1:N}) = \underbrace{\Pr(\data_{N+1} \in dx \given Z, \Pi_{N}, \data_{1:N})}_{\text{Posterior predictive of Gaussian}} \times \underbrace{\Pr(Z \given \Pi_{N})}_{\text{CRP prediction rule}}.
\end{equation*}
The first term is computed with the function used during Gibbs sampling to reassign data points to clusters. In the second term, we ignore the conditioning on $\data_{1:N}$, since $Z$ and $\data_{1:N}$ are conditionally independent given $\Pi_{N}.$

\subsection{Estimates Of Posterior Predictive Density}
We first discretize the domain using $150$ evenly-spaced points in the interval $[-20, 30]$: these are the locations at which to evaluate the posterior predictive.
We set $\minIter = 100$ and $\burnin = 10$ in constructing the estimate from \Cref{eqn:unbiased_estimate}. 
We average the results from $400$ coupled chain estimates.
In each panel of \Cref{fig:predictives}, the solid blue curve is an unbiased estimate of the posterior predictive density: the error across replicates is very small and we do not plot uncertainty bands.
The black dashed curve is the true density of the population i.e.\ the $10$-component Gaussian mixture model density.
The grey histogram bins the observed data. 
\begin{figure}[h]
	\centering
	\begin{subfigure}{0.31\linewidth}
		\includegraphics[scale=0.6]{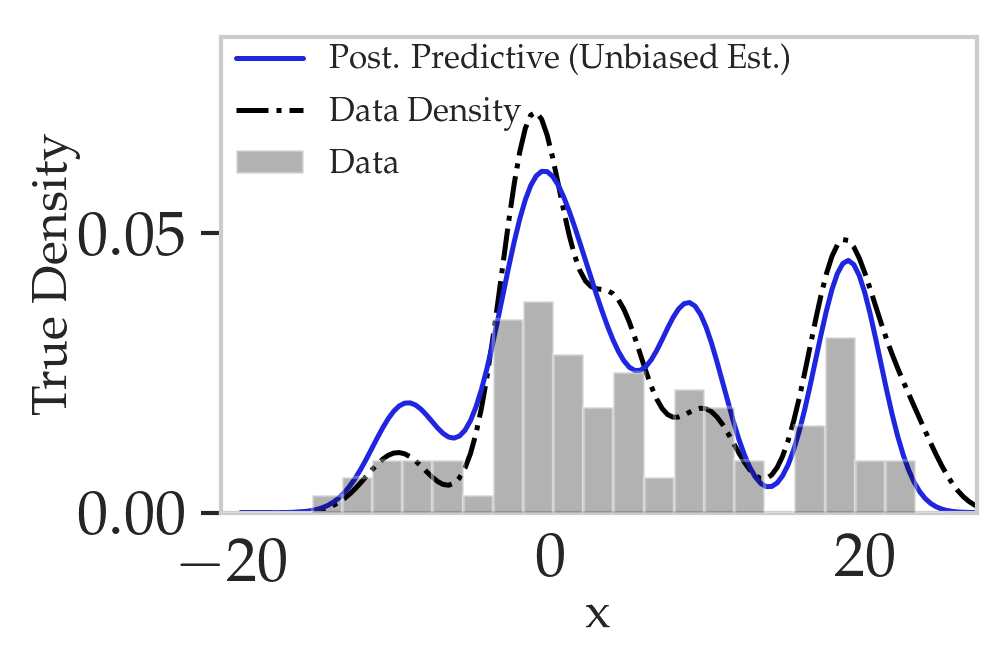}
		\caption{\textsc{gmm-100}}
	\end{subfigure}
	\begin{subfigure}{0.31\linewidth}
		\includegraphics[scale=0.6]{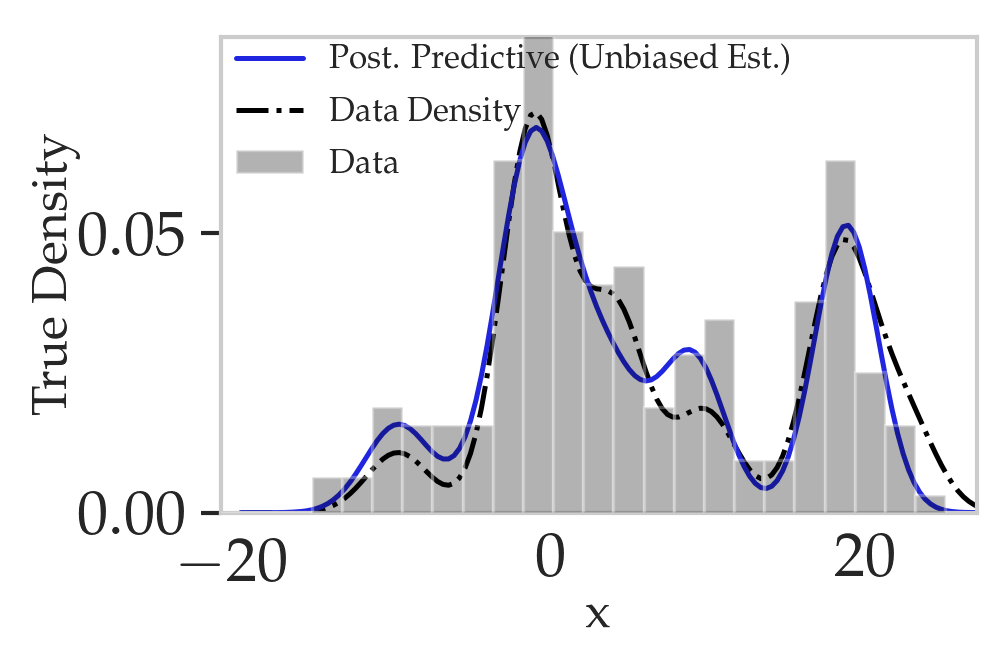}
		\caption{\textsc{gmm-200}}
	\end{subfigure}
	\begin{subfigure}{0.31\linewidth}
		\includegraphics[scale=0.6]{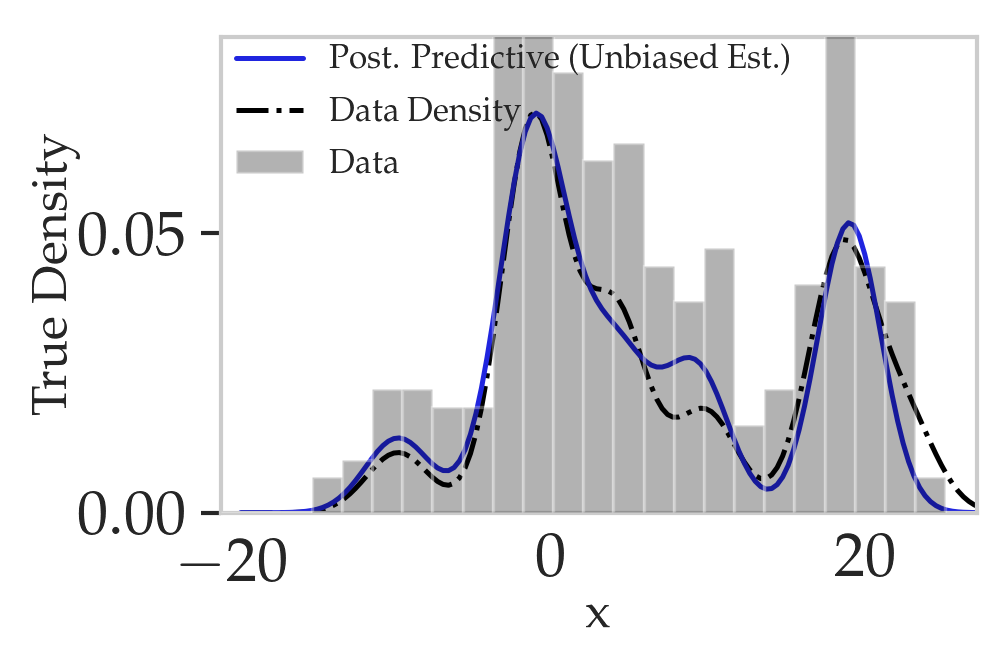}
		\caption{\textsc{gmm-300}}
	\end{subfigure}
	\caption{Posterior predictive density for different number of observations $N$.}
	\label{fig:predictives}
\end{figure}

\subsection{Posterior Predictives Become More Alike True Data Generating Density}
In \Cref{fig:predictives}, by visual inspection, the distance between the posterior predictive density and the underlying density decreases as $N$ increases. 
This is related to the phenomenon of posterior concentration, where with more observations gathered, the Bayesian posterior concentrates more and more on the true data generating process.
We refer to \citet{ghosal1999posterior,lijoi2005consistency} for more thorough discussions of posterior concentration. 
In what follows, we justify the concentration behavior for Gaussian DPMM, when the observation noise is correctly specified.

\begin{theorem}[{DP mixtures prior is consistent for finite mixture models}] \label{thm:DPMM-consistent-finite-mixture}
	Let $f_0(x) \defined \sum_{i=1}^{m} p_i \mathcal{N}(x \given \theta_i,\sigma_1^2)$ be a finite mixture model. Suppose we observe iid data $X_1,\ldots,X_n$ from $f_0$. Consider the following probabilistic model
	\begin{align*}
		\widehat{P} &\sim \distDP{\alpha}{\mathcal{N}(0,\sigma_0^2)} \\
		\theta_i \mid \widehat{P} &\stackrel{iid}{\sim} \widehat{P} & & i = 1,2,\ldots,n \\
		X_i \mid \theta_i &\stackrel{indep}{\sim} \mathcal{N}(\theta_i,\sigma_1^2) & & i = 1,2,\ldots,n
	\end{align*}
	Let $\widehat{P}_n$ be the posterior predictive distribution of this generative process. Then with a.s.\ $P_{f_0}$
	\begin{equation*}
		d_{TV}\left( \widehat{P}_n, P_{f_0}\right) \xrightarrow{n \to \infty} 0.
	\end{equation*}
\end{theorem}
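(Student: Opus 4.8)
The plan is to treat the mixing measure $P$ as the model parameter, with prior $\distDP{\alpha}{\mathcal{N}(0,\sigma_0^2)}$ and per-observation likelihood density $f_P \defined P * \phi_{\sigma_1}$, where $\phi_{\sigma_1}$ denotes the $\mathcal{N}(0,\sigma_1^2)$ density. Because the observation noise is correctly specified, $f_0 = f_{P_0}$ exactly, with $P_0 \defined \sum_{i=1}^m p_i \delta_{\theta_i}$; so $f_0$ is a genuine point of the model, and the goal is to show the posterior concentrates there. The argument splits into two halves: (i) a Schwartz-type posterior-consistency step placing the posterior mass, with $P_{f_0}$-probability one, on mixing measures that are weakly close to $P_0$; and (ii) a soft deconvolution/Scheffé step converting weak concentration of the posterior (hence of its mean) into total-variation convergence of the posterior predictive density.

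First I would establish the Kullback--Leibler support condition: for every $\epsilon>0$, $\Pi\big(\{P : \mathrm{KL}(f_0\,\|\,f_P) < \epsilon\}\big) > 0$. Since $\mathcal{N}(0,\sigma_0^2)$ has full support on $\R$, the Dirichlet process has full weak support on the space of probability measures on $\R$, so every weak neighbourhood of $P_0$ receives positive prior mass; intersecting with the high-prior-probability event that $P$ is essentially supported on a large compact set, it then suffices to show $\mathrm{KL}(f_0\,\|\,f_P)\to 0$ as $P\to P_0$ weakly among such tail-controlled $P$. This is the standard estimate for location mixtures of a fixed Gaussian kernel (cf.\ \citet{ghosal1999posterior} and \citet{lijoi2005consistency}): $f_P$ is bounded below by a positive constant on any compact set, $\log(f_0/f_P)$ is controlled using the Gaussian tails of $f_0$, and the finite (compact) support of $P_0$ makes this the most benign case. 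Given the KL support condition, Schwartz's theorem — whose only remaining hypothesis, the existence of uniformly exponentially consistent tests against complements of weak neighbourhoods, holds automatically for the weak topology — yields weak consistency of the posterior: for every weak neighbourhood $U$ of $f_0$, $\Pi(\{P : f_P \in U\}\mid X_{1:n}) \to 1$ a.s.\ $P_{f_0}^\infty$.

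Next I would transfer this to the mixing measures. The map $P \mapsto f_P = P * \phi_{\sigma_1}$ is a homeomorphism for the weak topologies on its domain and image: it is continuous (convolution with a bounded continuous density), and injective with continuous inverse because the characteristic function factorises as $\widehat{f_P}(t) = \widehat{P}(t)\,e^{-\sigma_1^2 t^2/2}$ with $e^{-\sigma_1^2 t^2/2}$ nowhere zero, so pointwise convergence of the $\widehat{f_P}$'s forces pointwise convergence of the $\widehat{P}$'s, and Lévy's continuity theorem (using that $\widehat{P_0}$ is continuous at $0$) delivers weak convergence of the $P$'s. Hence weak consistency on densities is equivalent to weak consistency on mixing measures: $\Pi(\{P \in V\}\mid X_{1:n})\to 1$ a.s.\ for every weak neighbourhood $V$ of $P_0$. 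Since $P\mapsto \int g\,dP$ is bounded and weakly continuous for each bounded continuous $g$, this weak concentration implies the posterior mean $\bar P_n \defined \E[P\mid X_{1:n}]$ — itself a probability measure — satisfies $\bar P_n \to P_0$ weakly, a.s.

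Finally, by Fubini the posterior predictive equals $\widehat{P}_n = \E[f_P \mid X_{1:n}] = \bar P_n * \phi_{\sigma_1}$. For each fixed $x$, $(\bar P_n * \phi_{\sigma_1})(x) = \int \phi_{\sigma_1}(x-\theta)\,\bar P_n(d\theta) \to \int \phi_{\sigma_1}(x-\theta)\,P_0(d\theta) = f_0(x)$, the integrand being bounded and continuous; so the densities $\widehat{P}_n$ converge pointwise to $f_0$, and as they are all probability densities, Scheffé's lemma gives $\int |\widehat{P}_n(x) - f_0(x)|\,dx \to 0$, i.e.\ $d_{TV}(\widehat{P}_n, P_{f_0}) \to 0$ a.s.\ $P_{f_0}^\infty$, which is \Cref{thm:DPMM-consistent-finite-mixture}. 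The main obstacle is the first half — verifying the KL support condition with enough tail control that Schwartz's theorem applies — whereas the subsequent deconvolution, posterior-mean, and Scheffé steps are soft arguments once weak posterior consistency is in hand.
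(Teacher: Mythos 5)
Your proposal is correct, but it reaches the conclusion by a genuinely different route than the paper. Both arguments begin identically, by verifying that $f_0$ lies in the Kullback--Leibler support of the induced prior on densities (the compact support of $P_0=\sum_i p_i\delta_{\theta_i}$, the full support of the base measure $\mathcal{N}(0,\sigma_0^2)$, and the exactly specified bandwidth $\sigma_1$ are what make \citet[Theorem 3]{ghosal1999posterior} applicable). They diverge afterwards. The paper upgrades KL support to \emph{strong} ($L_1$) posterior consistency by checking the remaining hypotheses of \citet[Theorem 1]{lijoi2005consistency} (in particular the first-moment condition $\int|\theta|\,\mathcal{N}(\theta\mid 0,\sigma_0^2)\,d\theta<\infty$ on the base measure), and then invokes \citet[Proposition 4.2.1]{ghosh2003bayesian}, which converts strong consistency into total-variation convergence of the predictive essentially by convexity of the TV norm. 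You instead stop at \emph{weak} consistency, which Schwartz's theorem delivers from KL support alone, and recover TV convergence of the predictive through model-specific structure: deconvolution via the nowhere-vanishing Gaussian characteristic function and L\'evy continuity to transfer weak consistency to the mixing measures, weak convergence of the posterior mean $\bar P_n\to P_0$, the identity $\widehat P_n=\bar P_n*\phi_{\sigma_1}$, and Scheff\'e's lemma to pass from pointwise to $L_1$ convergence of the predictive densities. Each step of yours checks out (sequential continuity suffices since the weak topology is metrizable, and L\'evy continuity supplies the needed tightness). What the paper's route buys is brevity given the cited results and robustness — it does not care whether the kernel's characteristic function vanishes, and extends to random bandwidths. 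What your route buys is economy of hypotheses — you never need the strong-consistency machinery or the moment condition on the base measure — at the price of exploiting the fixed Gaussian kernel; it would break for kernels whose Fourier transform has zeros.
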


To prove \Cref{thm:DPMM-consistent-finite-mixture}, we first need some definitions and auxiliary results.

\begin{definition}[Strongly consistent priors]
	Suppose iid data $X_1,X_2,\ldots,X_n$ is generated from some probability measure measure that is absolutely continuous with respect to Lebesgue measure. 
	Denote the density of this data generating measure by $f_0$.
	Let $\mathcal{F}$ be the set of all densities on $\mathbb{R}$.
	Consider the probabilistic model where we put a prior $\Pi$ over densities $f$, and observations $X_i$ are conditionally iid given $f$. 
	We use $P_{f}$ to denote the probability measure with density $f$. 
	For any measurable subset $A$ of $\mathcal{F}$, the posterior of $A$ given the observations $X_i$ is denoted $\Pi(A \given X_{1:N})$. 
	A strong neighborhood around $f_0$ is any subset of $\mathcal{F}$ containing a set of the form $V = \{f \in \mathcal{F}: \int | f - f_0| < \epsilon \}$ according to \citet{ghosal1999posterior}.
	The prior $\Pi$ is strongly consistent at $f_0$ if for any strong neighborhood $U$,
	\begin{equation} \label{eq:consistency}
		\lim_{n \to \infty} \Pi(U|X_{1:n})  = 1,
	\end{equation}
	holds almost surely for $X_{1:\infty}$ distributed according to $P_{f_0}^{\infty}$. 
\end{definition}

\begin{proposition} [{\citet[Proposition 4.2.1]{ghosh2003bayesian}}] \label{thm:consistent-predictive}
	If a prior $\Pi$ is strongly consistent at $f_0$ then the predictive distribution, defined as
	\begin{equation} \label{eq:pred-as-density}
		\widehat{P}_n(A \mid X_{1:n}) \coloneqq \int_f P_f(A)  \Pi(f \mid X_{1:n})
	\end{equation}
	also converges to $f_0$ in total variation in a.s. $P_{f_0}^{\infty}$
	\begin{equation*}
		d_{TV}\left( \widehat{P}_n, P_{f_0}\right) \xrightarrow{} 0.
	\end{equation*}
\end{proposition}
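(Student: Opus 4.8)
The plan is to deduce the theorem from the given Proposition~\Cref{thm:consistent-predictive} by verifying that the Dirichlet-process mixture prior, viewed as a prior on densities, is strongly consistent at $f_0$. Once strong consistency is established, \Cref{thm:consistent-predictive} immediately delivers $d_{TV}(\widehat{P}_n, P_{f_0}) \to 0$ almost surely under $P_{f_0}$, which is precisely the claim. So the entire task reduces to checking the strong-consistency hypothesis of that proposition for the induced prior.

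First I would identify the prior on densities explicitly. The generative model induces a random density $f_{\widehat{P}}(x) = \int \mathcal{N}(x \mid \theta, \sigma_1^2)\, d\widehat{P}(\theta)$, a convolution of the random mixing measure $\widehat{P} \sim \distDP{\alpha}{\mathcal{N}(0,\sigma_0^2)}$ with the fixed Gaussian kernel of variance $\sigma_1^2$. The crucial observation is that the data-generating density $f_0 = \sum_{i=1}^m p_i \mathcal{N}(\cdot \mid \theta_i, \sigma_1^2)$ is itself of exactly this form: it equals $f_{P_0}$ for the discrete mixing measure $P_0 = \sum_{i=1}^m p_i \delta_{\theta_i}$, using the \emph{same} kernel variance $\sigma_1^2$. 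Thus $f_0$ lies in the model class, which is what makes the support conditions tractable and reflects the ``correctly specified observation noise'' hypothesis.

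Strong consistency then follows from two standard conditions. (1) Kullback--Leibler support: for every $\epsilon > 0$ the prior must assign positive mass to $\{f : \mathrm{KL}(f_0, f) < \epsilon\}$. Because the base measure $\mathcal{N}(0, \sigma_0^2)$ has full support on $\mathbb{R}$, the Dirichlet process places positive mass on every weak neighborhood of $P_0$; convolution with the smooth Gaussian kernel is continuous from the weak topology into $L_1$, and since every density involved shares the same variance $\sigma_1^2$, the divergence $\mathrm{KL}(f_0, f_{\widehat{P}})$ is finite and small whenever $\widehat{P}$ is weakly close to $P_0$, yielding positive prior mass on every KL neighborhood. (2) An entropy/sieve condition controlling the size of the density space: I would take sieves $\mathcal{F}_n$ of mixtures whose mixing measure is supported in a growing interval $[-a_n, a_n]$, show the prior probability of the complement $\Pi(\mathcal{F}_n^c)$ decays exponentially using the Gaussian tails of the base measure, and bound the $L_1$ metric entropy of $\mathcal{F}_n$, since location mixtures of a fixed Gaussian on a bounded interval form a small class.

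These are exactly the conditions verified for Dirichlet normal location mixtures by \citet{ghosal1999posterior}; with KL support and the sieve/entropy bound in hand, their strong-consistency theorem applies at $f_0$, and the result transfers to the predictive distribution through \Cref{thm:consistent-predictive}. The hard part will be the entropy condition: the KL-support step is almost immediate here because $f_0$ is a finite normal mixture built from the model's own kernel, but establishing the exponential decay of $\Pi(\mathcal{F}_n^c)$ together with a sufficiently tight $L_1$-entropy bound on the sieve is the genuinely technical component. Since the target uses precisely the kernel $\mathcal{N}(\cdot, \sigma_1^2)$ appearing in the model, these hypotheses hold verbatim as in \citet{ghosal1999posterior}, so strong consistency---and hence the theorem---follows.
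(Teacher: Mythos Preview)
Your proposal does not address the statement in question. The proposition you were asked to prove is a \emph{general} implication: for \emph{any} prior $\Pi$ that is strongly consistent at $f_0$, the predictive distribution $\widehat{P}_n(A\mid X_{1:n})=\int P_f(A)\,\Pi(df\mid X_{1:n})$ converges to $P_{f_0}$ in total variation. There is no Dirichlet process, no Gaussian kernel, and no finite mixture anywhere in this statement; it is an abstract step from posterior consistency to predictive consistency. The paper does not prove it at all---it simply cites \citet[Proposition 4.2.1]{ghosh2003bayesian}.

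What you have actually sketched is a proof of \Cref{thm:DPMM-consistent-finite-mixture}, the theorem that \emph{applies} \Cref{thm:consistent-predictive}: you verify that a particular DP-mixture prior is strongly consistent at a particular $f_0$, and then you invoke the proposition to conclude. Using the proposition as an input cannot establish the proposition itself. (For the theorem your outline is close to the paper's own argument, which likewise checks KL support and a tail/entropy condition, though the paper appeals to \citet[Theorem 1]{lijoi2005consistency} rather than directly to \citet{ghosal1999posterior} for strong consistency.) A correct proof of the proposition would instead bound
\[
d_{TV}(\widehat{P}_n, P_{f_0}) \;\le\; \int d_{TV}(P_f, P_{f_0})\,\Pi(df\mid X_{1:n}),
\]
split the integral over an $\epsilon$-TV-neighborhood $U$ of $f_0$ and its complement, and use strong consistency to send $\Pi(U^c\mid X_{1:n})\to 0$ a.s.
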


The definition of posterior predictive density in \Cref{eq:pred-as-density} can equivalently be rewritten as
\begin{equation*}
	\widehat{P}_n(A \mid X_{1:n}) = \Pr(X_{n+1} \in A \given X_{1:n}),
\end{equation*}
since $P_f(A) = P_f(X_{n+1} \in A)$ and all the $X$'s are conditionally iid given $f$.

We are ready to prove \Cref{thm:DPMM-consistent-finite-mixture}. 
\begin{proof}[{Proof of \Cref{thm:DPMM-consistent-finite-mixture}}]
	First, we can rewrite the DP mixture model as a generative model over continuous densities $f$ 
	\begin{equation} \label{eq:dpmm-as-density}
		\begin{aligned}
			\widehat{P} &\sim \distDP{\alpha}{\mathcal{N}(0,\sigma_0^2)} \\
			f &= \mathcal{N}(0,\sigma_1^2) \ast \widehat{P} 
			\\
			X_i \mid f &\stackrel{iid}{\sim} f & & i = 1,2,\ldots,n
		\end{aligned}
	\end{equation}
	where $\mathcal{N}(0,\sigma_1^2) \ast \widehat{P}$ is a convolution, with density $f(x) \defined \int_{\theta} \mathcal{N}(x - \theta | 0,\sigma_1^2) d\widehat{P}(\theta)$. 
	
	The main idea is showing that the posterior $\Pi(f|X_{1:n})$ is strongly consistent and then leveraging \Cref{thm:consistent-predictive}. For the former, we verify the conditions of \citet[Theorem 1]{lijoi2005consistency}. 
	
	The first condition of \citet[Theorem 1]{lijoi2005consistency} is that $f_0$ is in the K-L support of the prior over $f$ in \Cref{eq:dpmm-as-density}. We use \citet[Theorem 3]{ghosal1999posterior}. Clearly $f_0$ is the convolution of the normal density $\mathcal{N}(0,\sigma_1^2)$ with the distribution $P(.) = \sum_{i=1}^m p_i \delta_{\theta_i}$. $P(.)$ is compactly supported since $m$ is finite. Since the support of $P(.)$ is the set $\{\theta_i\}_{i=1}^{m}$ which belongs in $\mathbb{R}$, the support of $\mathcal{N}(0,\sigma_0^2)$, by \citet[Theorem 3.2.4]{ghosh2003bayesian}, the conditions on $P$ are satisfied. The condition that the prior over bandwidths cover the true bandwidth is trivially satisfied since we perfectly specified $\sigma_1$. 
	
	The second condition of \citet[Theorem 1]{lijoi2005consistency} is simple: because the prior over $\widehat{P}$ is a DP, it reduces to checking that
	\begin{equation*}
		\int_{\mathbb{R}} |\theta| \mathcal{N}(\theta \mid 0, \sigma_0^2) < \infty
	\end{equation*}
	which is true. 
	
	The final condition trivial holds because we have perfectly specified $\sigma_1$: there is actually zero probability that $\sigma_1$ becomes too small, and we never need to worry about setting $\gamma$ or the sequence $\sigma_k$.  
\end{proof}

\end{document}